\newtheorem{theorem}{Theorem}[section]
\newtheorem{lemma}[theorem]{Lemma}
\newtheorem{proposition}[theorem]{Proposition}
\newtheorem{corollary}[theorem]{Corollary}
\theoremstyle{remark}
\newtheorem{remark}[theorem]{Remark}
\newcommand{\iu}{\mathrm{i}}
\newcommand{\E}{\mathbb{E}}
\newcommand{\Par}{\Theta}
\newcommand{\Pro}{\mathbb{P}}
\newcommand\dotw{\mathbin{\vcenter{\hbox{\scalebox{0.4}{$\bullet$}}}}}
\newcommand{\x}{{\bf x}}
\newcommand{\y}{{\bf y}}
\newcommand{\BCinf}[1]{BC^\infty(#1)}
\newcommand{\HE}{H_{b}^E}
 \newcommand{\HB}{H_{b}}
\newcommand{\HEb}{h_{b}^E}
\newcommand{\M}{\mathcal{M}}
\newcommand{\R}{\mathbb{R}}
\newcommand{\C}{\mathbb{C}}
\newcommand{\N}{\mathbb{N}}
\newcommand{\Z}{\mathbb{Z}}
\newcommand{\morph}[3]{#1\colon #2\to #3}
\newcommand{\Cinf}[1]{C^\infty(#1)}
\newcommand{\Ccinf}[1]{C_c^\infty(#1)}
\newcommand{\im}{\mathrm{i}}
\newcommand{\Hb}{(-\im\nabla -bA)^2}
\newcommand{\hb}{\mathfrak{H}_b^E}
\newcommand{\xbf}{\mathbf{x}}
\newcommand{\ybf}{\mathbf{y}}
\newcommand{\wbf}{\mathbf{w}}
\newcommand{\ebf}{\mathbf{e}}
\newcommand{\ip}[2]{\langle #1, #2 \rangle}
\newcommand{\dd}{\, \mathrm{d}}
\newcommand{\abs}[1]{\lvert #1\rvert}
\newcommand{\abslr}[1]{\left\lvert #1\right\rvert}
\newcommand{\norm}[1]{\lVert #1\rVert}
\newcommand{\jnorm}[1]{\langle #1\rangle}
\newcommand{\e}{\mathrm{e}}
\newcommand{\FT}{\mathscr{F}}
\newcommand{\s}{\mathbf{s}}
\newcommand{\tempdist}[1]{\mathscr{S}'(#1)}
\newcommand{\Schwartsfunct}[1]{\mathscr{S}(#1)}
\newcommand{\slowlyinc}[1]{\mathscr{O}_M(#1)}
\newcommand{\dist}{\mathrm{dist}}
\newcommand{\pair}[2]{\langle #1,#2\rangle_{d}}
\newcommand{\RbE}{(\HE+\lambda)^{-1}}
\newcommand{\Rb}{(\HEb+\lambda)^{-1}}
\newcommand{\RB}{R_{b}}
\newcommand{\RE}{R_{b}^E}
\newcommand{\HR}{\mathfrak{R}_{b}}
\newcommand{\HER}{\mathfrak{R}_{b}^E}
\newcommand\restr[2]{{
  \left.\kern-\nulldelimiterspace 
  #1 
  \vphantom{\big|} 
  \right|_{#2} 
  }}
\DeclareMathOperator{\Supp}{supp}
\colorlet{lightgray}{white!70!black!30!}
\begin{document}

\title[Regularity properties of bulk and edge current densities]{Regularity properties of bulk and edge \\current densities at positive temperature}


\author{Massimo Moscolari}

\author{Benjamin B. St{\o}ttrup}

\date{\today, ArXiv version.}

\begin{abstract}
We consider magnetic Schr\"odinger operators describing a quantum Hall effect setup both in the plane and in the half-plane. First, we study the structure and smoothness of the operator range of various powers of the half-plane resolvent. Second, we provide a complete analysis of the diamagnetic current density at positive temperature: we prove that bulk and edge current densities are smooth functions and we show that the edge current density converges to the bulk current density faster than any polynomial in the inverse distance from the boundary. Our proofs are based on gauge covariant magnetic perturbation theory and on a detailed analysis of the integral kernels of functions of magnetic Schr\"odinger operators on the half-plane.
\end{abstract}





\maketitle

\tableofcontents

\section{Introduction}

The quantum Hall effect is the archetypal example of a topological insulator. The peculiar feature of this exotic phase of matter is summarized in the bulk-edge correspondence, that is nowadays an ubiquitous mechanism that goes beyond condensed matter physics. In the context of the quantum Hall effect, the bulk-edge correspondence manifests itself as an exact equality between the bulk transverse conductivity and the edge conductance. At zero temperature, it has been shown that these objects are quantized and have a geometric origin \cite{SchulzBaldesKellendonkRichter,KellendonkRichterSchulzBaldes,KellendonkSchulzBaldes, ElgartGrafSchenker}.  Loosely speaking, the bulk-edge correspondence is an equality between quantum transport coefficients of bulk and edge Hamiltonians. While this correspondence has been proved at zero temperature both in discrete and continuous models, the first proof of bulk-edge correspondence at positive temperature appeared only recently in \cite{CorneanMoscolariTeufel}. Since there is a vast literature about bulk-edge correspondence and a complete discussion of the literature and the history of the subject is out of the scope of our paper, we refer the interested reader to the introduction of \cite{CorneanMoscolariTeufel} and to the monograph \cite{ProdanSchulzBaldes} for a detailed presentation.

The analysis of the bulk transverse conductivity is mainly based on transport theory in the linear response framework \cite{BES,AizenmanGraf, BoucletGerminetKleinSchenker}, which is a topic that has been extensively analyzed in the mathematical physics literature. On the other hand, the quantization of the edge conductance, which is defined as the statistical response with respect to the variation of the chemical potential, relies on a subtle balance of diamagnetic currents at the edge of the system. While there are several works regarding the analysis of the total edge current at zero temperature, see for example \cite{GrafFrohlichWalcher, deBievrePule, HislopSoccorsi}, there are much fewer mathematical works (see \cite{Kunz,MacrisMartinPule} and references therein) which have been concerned with a detailed description of diamagnetic current densities in the quantum Hall effect setting, and, most importantly, that holds true in the positive temperature setting. Our main purpose is to push forward this analysis by carefully analyzing the structure and the regularity properties of magnetic Schr\"odinger operators defined on the half-plane. In particular, this allows us to study the smoothness and the behavior at infinity of diamagnetic current densities at positive temperature (cf. Theorem~\ref{thm:smoothnes}~and~\ref{thm:currents} below), which play an important role in the bulk-edge correspondence analysis \cite{CorneanMoscolariTeufel}.

\subsection{The setting and the main results}

We consider a Hamiltonian describing a quantum Hall effect setup in the infinite volume limit, that is the bulk Hamiltonian. Let $V\in \BCinf{\R^2;\R}$ and  $\mathcal{A}\in \BCinf{\R^2;\R^2}$, i.e.\ $V$ and $\mathcal{A}$ are smooth and with uniformly bounded derivatives of all orders. Then, the bulk dynamics in $L^2(\mathbb{R}^2)$
is described by the Hamiltonian
\begin{equation*}
    H_{b}=\left(-\iu \nabla - \mathcal{A} - b A \right)^2+V
\end{equation*}
where $A$ is the magnetic potential of a constant unit magnetic field perpendicular to the plane,  which in the Landau gauge is given by $A(\x)\coloneqq\left(-x_2,0\right)$, $b\coloneqq-e \mathfrak{B}$ is a real parameter representing the strength of the constant magnetic field $\mathfrak{B}$ and $e$ is the elementary charge. It is a standard result \cite{ReedSimon2,FournaisHelffer} that under the previous hypothesis $H_b$ is essentially self-adjoint on $\Ccinf{\R^2}$. 

Then, the edge dynamics is described by a Hamiltonian defined on $L^2(E)$, where
\begin{equation*}
	E\coloneqq\left\{(x_1,x_2)\in \R^2 |\;  x_2 >  0 \right\}.
\end{equation*}
The edge Hamiltonian is denoted by
$H^E_{b}$ and equals the Dirichlet realization of $H_{b}$ in $E$. That is, $H^E_b$ is the Friedrichs extension of the quadratic form $Q^E_b$ defined by 
\begin{equation*}
    Q^E_{b}(f,g)=\ip{f}{H_b g}_{L^2(E)},
\end{equation*}
with form domain given by $C_c^{\infty}(E)$. However, to analyze the structure of the resolvent of $H_b^E$ we will take an alternative but fairly standard approach to defining the edge Hamiltonian. Specifically, if $\Ccinf{\bar{E}}$ denotes the space of smooth functions $f$ on $E$ with partial derivatives extending continuously to $\partial E$ and for which there exists $a>0$ such that $f(\xbf)=0$ for $\xbf\in \bar{E}\setminus( (-a,a)\times (0,a))$, then we define
\begin{equation}\label{eq:HbEdef}
    H^E_b\coloneqq\overline{\restr{H_b}{\Ccinf{\bar{E}}}},
\end{equation}
where $\restr{H_b}{\Ccinf{\bar{E}}}$ is regarded as an operator on $L^2(E)$. As shown in Theorem~\ref{thm:1}~\ref{thm:1Part1} below it turns out that the this definition coincides with the Friedrichs extension of $Q^E_b$ (see also the discussion in Remark~\ref{rem:1}).

\begin{theorem}\label{thm:1}
The following results hold true:
\begin{enumerate}[label=(\roman*)]
    \item \label{thm:1Part1} The operator $H_b$ restricted to $\Ccinf{\bar{E}}$ is essentially self-adjoint and its closure, namely $H_b^E$, coincides with the Friedrichs extension of the quadratic form $Q^E_{b}$. The domain $D(H^E_b)$ coincides with the range of the operator $S_b(-\lambda)$, which has an explicit integral kernel defined by \eqref{eq:Slambda1}.
    \item \label{thm:1Part2} Let $\Omega\subset \R$ be compact. Then for all $\lambda>0$ sufficiently large and $b\in \Omega$ there exists an integral operator $K_b(-\lambda)$, which is smooth with respect to $b\in \Omega$ in the operator norm topology, such that
    \begin{equation}\label{eq:HEexpansion}
        (H^E_b+\lambda)^{-1}(\xbf,\ybf)=\e^{\im b\phi(\xbf,\ybf)}K_b(-\lambda)(\xbf,\ybf),
    \end{equation}
    where $\phi$ denotes the Peierls phase given by $ \phi(\xbf,\ybf):=\frac{1}{2} (y_1-x_1)(x_2+y_2)$, $\x,\y \in E$.
\end{enumerate}
\end{theorem}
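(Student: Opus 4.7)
My strategy is to construct an approximate right inverse of $H_b + \lambda$ by gauge-covariant magnetic perturbation theory, using the Peierls phase $\phi$ to absorb the strong constant field $bA$ and reduce the problem to the Dirichlet resolvent of the reference Hamiltonian $H_0 := (-\iu\nabla - \mathcal{A})^2 + V$ on the half-plane. Let $H_0^E$ denote the Dirichlet realization of $H_0$ on $E$ (classical, with $\Ccinf{\bar{E}}$ a core and form domain $H^1_0(E)$), and let $R_0^E(\x,\y;-\lambda)$ denote its resolvent kernel for $\lambda > 0$ large; this kernel and its derivatives up to order two satisfy Combes--Thomas Gaussian off-diagonal decay, uniformly up to $\partial E$. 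Set
\[
S_b(-\lambda)(\x,\y) := \e^{\iu b\phi(\x,\y)}\, R_0^E(\x,\y;-\lambda).
\]
The pointwise identity
\[
(-\iu\nabla - \mathcal{A} - bA)\bigl(\e^{\iu b\phi(\cdot,\y)} f\bigr) = \e^{\iu b\phi(\cdot,\y)}\bigl((-\iu\nabla - \mathcal{A}) - b(A - \nabla_\x\phi(\cdot,\y))\bigr) f,
\]
combined with $A(\x) - \nabla_\x\phi(\x,\y) = \tfrac{1}{2}(y_2 - x_2,\, x_1 - y_1) = \BigO(\abs{\x-\y})$, then yields
\[
(H_b + \lambda)\, S_b(-\lambda) = I + T_b(-\lambda),
\]
where $T_b(-\lambda)$ has kernel equal to $\e^{\iu b\phi(\x,\y)}$ times polynomial factors in $(\x-\y)$ acting on $R_0^E$ and its first derivatives. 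A Schur test against the Gaussian decay yields $\norm{T_b(-\lambda)} = \BigO(\lambda^{-1/2})$, uniformly for $b$ in any compact set $\Omega$.

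For part (i), fix $\lambda$ large so that $\norm{T_b(-\lambda)} < 1$, making $I + T_b(-\lambda)$ invertible on $L^2(E)$. Then $S_b(-\lambda)(I + T_b(-\lambda))^{-1}$ is a bounded right inverse of $H_b + \lambda$ with range $\range(S_b(-\lambda))$. Since $R_0^E(\cdot,\y;-\lambda)$ vanishes on $\partial E$ with uniform gradient bounds up to the boundary, the range of $S_b(-\lambda)$ sits inside $H^1_0(E)$; pairing with $g \in L^2(E)$ and invoking the Hermitian symmetry of $Q_b^E$ on $H^1_0(E)$ shows that $S_b(-\lambda)(I + T_b(-\lambda))^{-1}$ is self-adjoint, hence coincides with $(H_{b,F}^E + \lambda)^{-1}$, where $H_{b,F}^E$ is the Friedrichs extension of $Q_b^E$. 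Inverting, the Friedrichs extension has domain $\range(S_b(-\lambda))$. Essential self-adjointness of $\restr{H_b}{\Ccinf{\bar{E}}}$ then reduces to showing that $\Ccinf{\bar{E}}$ is dense in $\range(S_b(-\lambda))$ in the graph norm of $H_{b,F}^E$; this follows from a standard cut-off and mollification argument, noting that elements of $\Ccinf{\bar{E}}$ do vanish on $\partial E$ because the rectangle $(-a,a) \times (0,a)$ in its definition is open and excludes the boundary.

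For part (ii), iterate the Neumann series to obtain $(H_b^E + \lambda)^{-1} = S_b(-\lambda)\sum_{n \geq 0}(-T_b(-\lambda))^n$. The key algebraic observation is that composition of two integral kernels of the form $\e^{\iu b\phi(\x,\y)} k_j(\x,\y)$ satisfies
\[
\int_E \e^{\iu b\phi(\x,\z)} k_1(\x,\z)\,\e^{\iu b\phi(\z,\y)} k_2(\z,\y)\,\dd\z = \e^{\iu b\phi(\x,\y)} \int_E \e^{\iu b\psi(\x,\z,\y)} k_1(\x,\z) k_2(\z,\y)\,\dd\z,
\]
where $\psi(\x,\z,\y) := \phi(\x,\z) + \phi(\z,\y) - \phi(\x,\y)$ is the magnetic flux through the oriented triangle with vertices $\x,\z,\y$, a polynomial of degree two vanishing on the diagonal and bounded by $C\abs{\x-\z}\abs{\z-\y}$. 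Defining $K_b(-\lambda)$ through its integral kernel $\e^{-\iu b\phi(\x,\y)}(H_b^E + \lambda)^{-1}(\x,\y)$, the $n$-th term in the resulting Neumann expansion is an iterated convolution of Gaussian-decaying kernels against bounded oscillatory factors in $\psi$; smoothness of $K_b(-\lambda)$ in $b$ in operator norm follows by termwise differentiation, each $\partial_b$ bringing down a factor $\iu\psi$ whose polynomial growth is absorbed by the product Gaussian decay. The main technical obstacle is establishing sharp Combes--Thomas bounds for $R_0^E$ and its derivatives uniformly up to $\partial E$, where translation invariance is lost; this should be handled either by a reflection/method-of-images reduction to the full plane, or by a direct Agmon-distance argument on the half-plane. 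A secondary issue is producing polynomial-in-$n$ bounds on iterated convolutions so that termwise $b$-differentiation of the Neumann series converges uniformly on $\Omega$.
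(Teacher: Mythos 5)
Your overall scheme---conjugating a Dirichlet reference resolvent by the Peierls phase, obtaining $(H_b+\lambda)S_b(-\lambda)=1+T_b(-\lambda)$ with $T_b$ small, inverting by a Neumann series, and using the flux-through-a-triangle identity for $b$-smoothness---is the same gauge-covariant strategy the paper follows. The genuine gap lies in your choice of reference operator and the input you assume for it. You take $H_0^E$, the Dirichlet realization of $(-\iu\nabla-\mathcal{A})^2+V$ on $E$, and postulate that its resolvent kernel and its derivatives up to second order satisfy ``Combes--Thomas Gaussian'' bounds uniformly up to $\partial E$. This input is not available as stated: resolvent kernels decay exponentially, not Gaussianly, and in two dimensions the kernel has a logarithmic singularity on the diagonal, its first derivatives behave like $\abs{\x-\y}^{-1}$, and its second derivatives are not locally integrable, so uniform bounds through second order are false near the diagonal and the identity $(H_b+\lambda)S_b(-\lambda)=1+T_b(-\lambda)$ cannot be obtained by a naive pointwise computation; the paper needs the distributional argument of Proposition~\ref{prop:1} and Appendix~\ref{s:dist} precisely here. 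Moreover, both of your proposed routes to the assumed estimates fall short: the reflection/method-of-images construction works only for the free Laplacian (reflecting $\mathcal{A}$ and $V$ changes the operator, so the image kernel is not the Dirichlet resolvent of $H_0$ on $E$), while Combes--Thomas/Agmon arguments give weighted $L^2$, i.e.\ operator-norm, decay rather than the pointwise kernel bounds with derivative control up to the boundary that your Schur test for $T_b$ and your kernel-level Neumann series in (ii) require. The paper avoids all of this by taking the free Dirichlet Laplacian as reference, whose kernel is explicit (MacDonald function plus reflection, \eqref{eq:Slambda1}), and only afterwards absorbing $\mathcal{A}$ and $V$ via Kato--Rellich and the second resolvent identity \eqref{eq:resolvent1}. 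If you insist on $H_0^E$ as reference, you must first establish for its kernel essentially everything the paper establishes for $R_E$, so the ``main technical obstacle'' you flag is not a side issue but the core content of the theorem.

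A second, smaller gap concerns part (i): you reduce essential self-adjointness to graph-norm density of $\Ccinf{\bar{E}}$ in $\range S_b(-\lambda)$ and defer this to ``standard cut-off and mollification''. That density is precisely the assertion that $\Ccinf{\bar{E}}$ is an operator core for the Dirichlet realization, i.e.\ a rephrasing of what is to be proved, and mollifying near $\partial E$ while controlling $H_b$ in graph norm requires boundary elliptic regularity, so it should not be waved away. The paper sidesteps mollification entirely by the surjectivity criterion for positive symmetric operators: the explicit kernel shows $S_b(-\lambda)$ maps $\Ccinf{E}$ into the class $\M$ of exponentially decaying functions vanishing on $\partial E$ (Lemma~\ref{lem:SM}), $\M\subseteq D(h_b^E)$ follows from a cut-off at infinity alone (Lemma~\ref{lem:MinDHb}), hence $\range(h_b^E+\lambda)=L^2(E)$ and essential self-adjointness follows, with the identification with the Friedrichs extension handled by a separate boundary cut-off lemma. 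Your outline for part (ii) is otherwise consistent with the paper's composition rule $\phi(\x,\y)+\phi(\y,\x')=\phi(\x,\x')+\phi_s(\x-\y,\y-\x')$, and the convergence of the $b$-differentiated Neumann series is indeed routine once each convolution factor gains a constant that vanishes as $\lambda\to\infty$; but all of that presupposes the kernel estimates discussed above.
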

 \begin{remark}\label{rem:1}
    Notice that the main achievement of Theorem~\ref{thm:1} is not the construction of the operator itself, which may be obtained by other method, like the Feynman-Kac semigroup methods (as in \cite{MacrisMartinPule}) or by a partial Fourier transform which reduces the problem to a one-dimensional operator \cite[Chapter 4]{FournaisHelffer}. The main achievement of Theorem~\ref{thm:1} is formula \eqref{eq:HEexpansion}. This will allow us to investigate the regularity properties of various physical edge quantities. Furthermore, since our method is based on gauge covariant magnetic perturbation theory \cite{CN1998,N2002,CN,Cornean2010,CP2012}, it may easily be extended to non-constant but globally bounded magnetic fields.
 \end{remark}
 \begin{remark}
 The phase factor appearing in \eqref{eq:HEexpansion} is the only obstruction for $(\HE+\lambda)^{-1}$ to be smooth with respect to $b$ in the operator norm topology. However, $(\HE+\lambda)^{-1}$ is smooth with respect to $b$ in the strong operator topology (cf. \cite[Corollary 1.2]{CorneanMonacoMoscolari})
 \end{remark}

By using the magnetic pseudodifferential calculus \cite{MP,IMP07,IMP10,CHP2018,CGSS}, it is possible to prove smoothing properties for the resolvent of the bulk Hamiltonian, i.e.\ for $\lambda>0$ sufficiently large and any $m\in \N$, there exists $N$ large enough such that $(H_{b}+\lambda)^{-N}$ maps $L^2(\R^2)$ continuously to $C^m(\R^2)$. However, such techniques do not apply to operators defined on a domain with boundaries, like in the case of the edge Hamiltonian $H^E_{b}$. As a first key technical result, we show a smoothing property for the resolvent of the edge Hamiltonian.

\begin{proposition}\label{prop:2}
Let $g\in \Ccinf{E}$ and let $X_j$, $j\in \{1,2\}$, be the standard position operators in $L^2(E)$. There exists $\lambda>0$ sufficiently large such that for every $m\in \N$, there exist $N, \epsilon>0$ with the property that the operators $g (\HE+\lambda)^{-N} \e^{\epsilon|X|}$ and $g\; \im [\HE,X_j](\HE+\lambda)^{-N} \e^{\epsilon|X|}$  map $L^2(E)$ continuously into $C^m(E)$. Specifically, there exists $C>0$ such that for every $k\in \{0,1\}$ and $f\in L^2(E)$:
\begin{equation*}
    \norm{g (\im [\HE,X_j])^k\left(\HE+\lambda\right)^{-N} \e^{\epsilon|X|}f}_{C^m(E)}\leq C\norm{f}_{L^2(E)}.
\end{equation*}
\end{proposition}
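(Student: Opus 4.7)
The plan is to exploit the explicit kernel structure of $(\HE+\lambda)^{-1}$ given by Theorem~\ref{thm:1}, combined with Combes--Thomas-type exponential decay and interior elliptic regularity localized by $g$, so that the estimate reduces to bulk smoothing properties already available from the magnetic pseudodifferential calculus on $\R^2$.

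First, I would iterate formula~\eqref{eq:HEexpansion} to obtain
\[
(\HE+\lambda)^{-N}(\x,\y)=\int_{E^{N-1}}\prod_{k=1}^{N}\e^{\im b\phi(\z_{k-1},\z_k)}\,K_b(-\lambda)(\z_{k-1},\z_k)\dd\z_1\cdots\dd\z_{N-1},
\]
with $\z_0=\x$ and $\z_N=\y$. The Peierls phases recombine into a single smooth polynomial phase. A standard Combes--Thomas argument applied to the lower-bounded self-adjoint operator $\HE$ yields, for $\lambda$ large enough, a decay rate $\alpha=\alpha(\lambda)\asymp\sqrt{\lambda}$, and, after enough convolutions to gain Sobolev regularity and hence pointwise control of the kernel, a bound
\[
|(\HE+\lambda)^{-N}(\x,\y)|\leq C_N\e^{-\alpha|\x-\y|}
\]
for $N$ sufficiently large.

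Second, since $\Supp(g)$ lies at a positive distance from $\partial E$, I would localize by a cutoff $\tilde g\in\Ccinf{E}$ with $\tilde g\equiv 1$ on $\Supp(g)$ and $\Supp(\tilde g)\subset E$, and then use a resolvent-identity/parametrix argument to compare $\tilde g(\HE+\lambda)^{-N}$ with $\tilde g(H_b+\lambda)^{-N}$ -- the bulk operator on $\R^2$ entering here -- showing that the difference has a smooth, exponentially decaying kernel by means of formula~\eqref{eq:HEexpansion} together with the explicit expression for $S_b(-\lambda)$ alluded to in Theorem~\ref{thm:1}~\ref{thm:1Part1}. On the bulk piece, the smoothing estimate for $(H_b+\lambda)^{-N}$ mentioned in the paragraph preceding the proposition, combined with Combes--Thomas decay of the kernel and of its $\x$-derivatives, gives
\[
\abs{\partial_\x^\beta\bigl[g(\x)(\HE+\lambda)^{-N}(\x,\y)\bigr]}\leq C_{\beta,N}\e^{-\alpha'|\x-\y|},\qquad \x\in\Supp(g),\ |\beta|\leq m,
\]
provided $N$ is taken sufficiently large depending on $m$.

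Third, choosing $\epsilon<\alpha'$ and applying Cauchy--Schwarz in $\y$ yields
\[
\abs{\partial_\x^\beta\bigl[g(\HE+\lambda)^{-N}\e^{\epsilon|X|}f\bigr](\x)}\leq C\norm{f}_{L^2(E)},
\]
uniformly in $\x\in\Supp(g)$, hence the claimed $C^m$-bound for $k=0$. For $k=1$, note that $\im[\HE,X_j]=-2(\im\partial_j+\mathcal{A}_j+bA_j)$ is a first-order differential operator with smooth bounded coefficients; composing it with $(\HE+\lambda)^{-N}$ consumes only one derivative of the kernel in $\x$, which is absorbed by taking $N$ one step larger. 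The main technical obstacle I foresee is the parametrix reduction in the second step: making precise the comparison that allows one to replace $(\HE+\lambda)^{-N}$ by $(H_b+\lambda)^{-N}$ up to smoother, exponentially decaying remainders on $\Supp(g)$, so that the bulk pseudodifferential smoothing can be imported. All the ingredients are available -- the explicit Peierls-phase structure in \eqref{eq:HEexpansion}, the smoothness of $K_b(-\lambda)$ in $b$, and Combes--Thomas decay -- but fitting them together to control the boundary contribution pointwise in $\x$ while preserving both exponential decay in $\y$ and integrability against $\e^{\epsilon|\y|}$ is where most of the effort will go.
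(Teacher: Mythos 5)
Your plan hinges on a step that you yourself flag as the ``main technical obstacle'' and never actually carry out, and that step is precisely where all the difficulty of the proposition lives: the claim that $\tilde g\,(\HE+\lambda)^{-N}$ equals a localized copy of the bulk power $(H_b+\lambda)^{-N}$ plus a remainder whose kernel is $C^m$ in $\xbf$ on $\Supp(g)$ and exponentially decaying in $\ybf$, uniformly. The tools you cite for this --- formula \eqref{eq:HEexpansion} and the smoothness of $K_b(-\lambda)$ in $b$ --- give no information about regularity in the \emph{spatial} variables: \eqref{eq:HEexpansion} only factors out the Peierls phase, and its kernel $K_b(-\lambda)(\xbf,\ybf)$ is a priori no better than logarithmically singular on the diagonal (it is built from $R_E$ and the Neumann series in \eqref{eq:HE_resolvent}), so iterating it $N$ times does not by itself produce $\xbf$-derivative bounds of the composite kernel. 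In fact, proving that products of the constituent kernels ($S_b$, $T_b$, $\overline{W}S_b$, $\overline{P_j}S_b$, including the reflected terms containing $\ybf^*$) become $C^m$ in $\xbf$ after sufficiently many factors, with exponential off-diagonal decay surviving the differentiation, is exactly the content of the paper's proof (the reduction \eqref{eq:RboEN} plus Lemma~\ref{lem:stws} and the convolution analysis of Lemmas~\ref{lem:K1}--\ref{lem:K3}); your proposal presupposes an equivalent statement without supplying it. A second, smaller inaccuracy: the Peierls phases do not ``recombine into a single smooth polynomial phase'' --- composition produces the extra factors $\phi_s(\xbf-\ybf_k,\ybf_k-\ybf_{k+1})$ as in \eqref{eq:phis}, which are unbounded and must be controlled against the exponential decay of the kernels; harmless here, but it is another place where the outline glosses over the kernel-level work.

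For what it is worth, your overall strategy (reduce to interior behaviour and avoid the boundary) can be completed, but not via the bulk-parametrix comparison you sketch. Since $\Supp(g)$ is a fixed compact subset of $E$, one can instead combine weighted resolvent bounds of Combes--Thomas type (i.e.\ boundedness of $\e^{-\epsilon\abs{X}}(\HE+\lambda)^{-k}\e^{\epsilon\abs{X}}$, which is Lemma~\ref{lem:eRe} in spirit) with \emph{interior elliptic regularity} for the system $(H_b+\lambda)u_k=u_{k-1}$ on a slightly larger compact neighbourhood of $\Supp(g)$, followed by the Sobolev embedding $H^{2N}\hookrightarrow C^m$; this controls $\norm{g\,(\HE+\lambda)^{-N}\e^{\epsilon\abs{X}}f}_{C^m}$ by $\norm{f}_{L^2}$ without ever comparing with the bulk resolvent, and the commutator case follows since $\im[\HE,X_j]=2\overline{P_j}$ is first order (after justifying the domain identity, cf.\ Lemma~\ref{lem:HX}). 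The paper takes yet another route, entirely at the level of explicit kernels: it expands $(\HE+\lambda)^{-N}$ into products of $S_b(-\lambda)$, $T_b(-\lambda)$, $\overline{W}S_b(-\lambda)$, $\overline{P_j}S_b(-\lambda)$ and bounded exponentially-localized remainders, and gains regularity through the Fourier/convolution structure of those kernels. As written, however, your proposal neither follows the paper's route nor closes the gap in its own, so it does not yet constitute a proof.
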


As it has been shown in \cite{CorneanMoscolariTeufel}, when $F\colon \sigma(\HE)\to \C$ is a function which has an extension in $\Schwartsfunct{\R}$, the edge operator $\iu [H^{E}_{b },X_1]  F (H^{E}_{b })$  has a jointly continuous integral kernel. As a consequence of Proposition \ref{prop:2} we have the following optimal result:

\begin{theorem}
\label{thm:smoothnes}
Let $F\colon \sigma(\HE)\to \C$ be a function which has an extension in $\Schwartsfunct{\R}$.
For all $g_1,g_2\in C_c^\infty (E)$ and $j\in \{1,2\}$, the operators 
\begin{equation*}
	g_1F(\HE)g_2 \, , \qquad g_1\left(\iu\left[\HE,X_j\right]F(\HE)\right)g_2
\end{equation*}
have smooth integral kernels.
\end{theorem}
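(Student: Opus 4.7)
The plan is to factor $F(\HE)$ symmetrically through two resolvent powers so that Proposition~\ref{prop:2} provides smoothing on each side of a bounded operator. Fix $\lambda>0$ large enough that Proposition~\ref{prop:2} applies and $\lambda>-\inf\sigma(\HE)$. For $N\in\N$ set $\tilde F_N(x):=(x+\lambda)^{2N}F(x)$; since $F$ extends to $\Schwartsfunct{\R}$ and $\sigma(\HE)$ is bounded below, $\tilde F_N$ is bounded on $\sigma(\HE)$, so the functional calculus gives $F(\HE)=(\HE+\lambda)^{-N}\tilde F_N(\HE)(\HE+\lambda)^{-N}$ with $\tilde F_N(\HE)$ bounded on $L^2(E)$. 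This yields the factorizations
\[
g_1 F(\HE) g_2 = A_N\,\tilde F_N(\HE)\,C_N,\qquad g_1 \iu[\HE,X_j]F(\HE) g_2 = B_N\,\tilde F_N(\HE)\,C_N,
\]
with $A_N:=g_1(\HE+\lambda)^{-N}$, $B_N:=g_1\iu[\HE,X_j](\HE+\lambda)^{-N}$, and $C_N:=(\HE+\lambda)^{-N}g_2$. Fix $m\in\N$. Applying Proposition~\ref{prop:2} once with $g=g_1$ and once with $g=\overline{g_2}$, and taking $N=N(m)$ sufficiently large (the exponential weight is harmless since $\e^{-\epsilon|X|}$ is bounded), we obtain that $A_N$, $B_N$, and $C_N^*=\overline{g_2}(\HE+\lambda)^{-N}$ are continuous from $L^2(E)$ into $C^{m+1}(E)$.

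A standard Riesz-representation argument then shows that the kernel $K_{A_N}(\xbf,\ybf)$ of $A_N$ satisfies $\partial_\xbf^\alpha K_{A_N}(\xbf,\cdot)\in L^2(E)$ with $L^2$-norm uniformly bounded in $\xbf$ for all $|\alpha|\le m+1$. The extra order of regularity, combined with the fundamental theorem of calculus, upgrades this pointwise statement to strong continuity of the map $\xbf\mapsto \partial_\xbf^\alpha K_{A_N}(\xbf,\cdot)$ as a map $E\to L^2(E)$ for $|\alpha|\le m$; the same applies to $B_N$. Using the identity $K_{C_N}(\xbf,\ybf)=\overline{K_{C_N^*}(\ybf,\xbf)}$, the corresponding statement holds for $\ybf\mapsto \partial_\ybf^\beta K_{C_N}(\cdot,\ybf)\in L^2(E)$ with $|\beta|\le m$.

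Writing the kernel of $T:=g_1 F(\HE) g_2$ as the $L^2(E)$-pairing
\[
K_T(\xbf,\ybf)=\int_E K_{A_N}(\xbf,z)\bigl(\tilde F_N(\HE)\,K_{C_N}(\cdot,\ybf)\bigr)(z)\,\dd z,
\]
we may differentiate under the integral (justified by the strong continuity above) and apply Cauchy--Schwarz to bound
\[
|\partial_\xbf^\alpha\partial_\ybf^\beta K_T(\xbf,\ybf)|\le \|\tilde F_N(\HE)\|\,\bigl\|\partial_\xbf^\alpha K_{A_N}(\xbf,\cdot)\bigr\|_{L^2(E)}\bigl\|\partial_\ybf^\beta K_{C_N}(\cdot,\ybf)\bigr\|_{L^2(E)}
\]
for $|\alpha|,|\beta|\le m$, with the right-hand side jointly continuous in $(\xbf,\ybf)$. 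Hence $K_T\in C^m(E\times E)$, and since $m$ is arbitrary, $K_T\in C^\infty(E\times E)$. Replacing $A_N$ by $B_N$ throughout handles the commutator operator. The main technical obstacle lies in the upgrade from the pointwise-in-$\xbf$ Riesz bound on the kernel derivatives to their strong $L^2$-valued continuity; this is precisely why Proposition~\ref{prop:2} must be invoked at order $m+1$ rather than $m$, so that a Taylor-remainder estimate using one extra derivative yields the norm continuity of all partial derivatives needed to differentiate $K_T$ jointly.
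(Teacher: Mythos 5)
Your proof is correct, and it shares the paper's overall skeleton (sandwich $F(\HE)$ between localized resolvent powers and invoke Proposition~\ref{prop:2}), but the middle step is genuinely different. The paper factorizes $g_1F(\HE)g_2=ABC$ with $A=g_1(\HE+\lambda)^{-N}\e^{\epsilon\abs{X}}$, $B=\e^{-\epsilon\abs{X}}(\HE+\lambda)^{-1}F(\HE)(\HE+\lambda)^{2N+1}$, $C=(\HE+\lambda)^{-N}g_2$, proves that $B$ is Hilbert--Schmidt (this is where the exponential weight of Proposition~\ref{prop:2} and the kernel estimate \eqref{eq:S-estimate} enter), expands $B$ in an orthonormal basis, and gets smoothness of the kernel as a uniform $C^m$ limit of truncated sums via Parseval. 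You instead keep only the bounded middle operator $\tilde F_N(\HE)=(\HE+\lambda)^{2N}F(\HE)$, drop the weights, and pair the kernels of the two outer smoothing factors directly through it; the price is invoking Proposition~\ref{prop:2} at order $m+1$ so that $\xbf\mapsto\partial_\xbf^\alpha K_{A_N}(\xbf,\cdot)\in L^2(E)$ is norm-(Lipschitz-)continuous for $\abs{\alpha}\le m$. That upgrade does work: $E$ is convex, the difference quotients of the Riesz representers converge weakly in $L^2(E)$ by definition, and testing the fundamental theorem of calculus against unit vectors gives the Lipschitz bound; differentiation under the pairing then only needs this weak/strong convergence. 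Two points you gloss over are routine but should be stated: the kernel identity for $A_N\tilde F_N(\HE)C_N$ requires a short Bochner--Fubini argument on $f\in L^1\cap L^2$, and the concluding sentence should not be a Cauchy--Schwarz \emph{bound} on $\partial_1^\alpha\partial_2^\beta K_T$ (a bound by a continuous function does not give $C^m$) but Cauchy--Schwarz applied to \emph{differences} of the strongly continuous $L^2$-valued derivative maps, which yields continuity, and together with the pairing argument existence, of all mixed partials up to order $m$. In exchange for the extra order of smoothing and this bit of vector-valued calculus, your route avoids the Hilbert--Schmidt step and the $\e^{\pm\epsilon\abs{X}}$ weights altogether, so both arguments are viable and of comparable length.
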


\begin{remark}\label{rem:HB}
    Even though we only deal with the edge operator $H_{b}^E$, the proofs of Proposition~\ref{prop:2} and Theorem~\ref{thm:smoothnes} and related results in Section~\ref{sec:regularity} and \ref{sec:GPT} hold (and are considerably simpler) for the bulk operator as well. 
\end{remark}

Now let $F_\mathrm{FD}$ denote the Fermi-Dirac distribution, i.e., $F_\mathrm{FD}(x)=(\e^{(x-\mu)/T}+1)^{-1}$, where $\mu\in \R$ is the Fermi energy and $T>0$ is the temperature in suitable units. Furthermore, let $F$ denote a Schwartz function such that $F(x)=F_{\mathrm{FD}}(x)$ for all $x\in \left[-1+\inf \left(\sigma(\HB)\right),+\infty\right)$. 
We consider the bulk and edge (particle-)current density $\mathcal{J}_{1,b}$ and $\mathcal{J}_{1,b}^{E}$ defined on $\R^2\times \R^2$ and $E\times E$, respectively, by
\[\mathcal{J}_{1,b}^{\;\;/E}(x_1,x_2):=\left(\iu [H^{\;\;/E}_{b},X_1]  F(H^{\;\;/E}_{ b })\right)\left((x_1,x_2),(x_1,x_2)\right).\]
Here $H^{\;\;/E}_{b }$ is a shorthand notation for either $\HB$ or $\HE$. Note that by Theorem~\ref{thm:smoothnes} these two functions are well defined.
\begin{theorem}
\label{thm:currents}
$\mathcal{J}_{1,b}^{\; \;/E}: E \to \C$ are smooth  functions such that
\begin{equation*}
    \sup_{x_1\in \R} \mathcal{J}_{1,b}^E(x_1,x_2) -\mathcal{J}_{1,b}(x_1,x_2) = \mathcal{O}(x_2^{-\infty}) \text{ for } x_2\to +\infty.
\end{equation*}
\end{theorem}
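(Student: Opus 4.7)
\emph{Smoothness.} My plan is to reduce this to Theorem~\ref{thm:smoothnes} by localisation. Fix $\xbf_0\in E$ and choose $g_1,g_2\in \Ccinf{E}$ with $g_1 = g_2 = 1$ in a neighbourhood $U$ of $\xbf_0$. On $U\times U$, the kernel of $g_1 (\iu[\HE, X_1]F(\HE))g_2$ coincides with that of $\iu[\HE, X_1]F(\HE)$, so by Theorem~\ref{thm:smoothnes} its restriction to the diagonal is smooth near $\xbf_0$. Since $\xbf_0$ was arbitrary, $\mathcal{J}_{1,b}^E\in C^\infty(E)$; the same argument using the bulk analogue of Theorem~\ref{thm:smoothnes} indicated in Remark~\ref{rem:HB} gives $\mathcal{J}_{1,b}\in C^\infty(\R^2)$.

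\emph{Reduction to a resolvent kernel comparison.} For the decay, I would use the Helffer--Sj\"ostrand representation
\begin{equation*}
F(H_b^{\;/E}) = \frac{1}{\pi}\int_{\C} \bar\partial \tilde F(z)\, (z-H_b^{\;/E})^{-1}\, dz\wedge d\bar z,
\end{equation*}
where $\tilde F$ is a quasianalytic extension with $\abs{\bar\partial\tilde F(z)}\leq C_N\jnorm{z}^{-N}\abs{\mathrm{Im}\, z}^N$ for every $N\in\N$. A direct computation gives $\iu[H_b^{\;/E},X_1] = 2(-\iu\partial_{x_1}-\mathcal{A}_1-bA_1)$, which is a first-order differential operator with smooth, uniformly bounded coefficients (up to the linear-in-$x_2$ contribution from $A_1$) that acts on the $\xbf$-variable of the kernel. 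Therefore it is enough to prove that, for $\xbf = (x_1,x_2)\in E$ with $x_2$ large, the kernel
\begin{equation*}
\Delta(z;\xbf,\ybf) := (z-\HE)^{-1}(\xbf,\ybf)-(z-\HB)^{-1}(\xbf,\ybf)
\end{equation*}
and its $x_1$-derivative decay faster than any inverse power of $x_2$ uniformly in $x_1\in\R$, with a controlled polynomial dependence on $\abs{\mathrm{Im}\, z}^{-1}$.

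\emph{Geometric perturbation and Combes--Thomas.} To control $\Delta$ I would use the standard geometric resolvent identity. Take $\rho\in C^\infty(\R)$ with $\rho(t)=0$ for $t\leq 1/2$ and $\rho(t)=1$ for $t\geq 1$, and set $\chi(\xbf):=\rho(x_2)$. Viewing $T(z):=\chi(z-\HB)^{-1}\chi$ as an operator on $L^2(E)$ by extension of functions by zero across $\partial E$, one checks that, on any $f$ supported in $\{x_2>1\}$,
\begin{equation*}
(z-\HE)^{-1}f-T(z)f = -(z-\HE)^{-1}\,[\HB,\chi]\,(z-\HB)^{-1}\chi f.
\end{equation*}
The commutator $[\HB,\chi]$ is a first-order operator whose coefficients are supported in the strip $S:=\{1/2\leq x_2\leq 1\}$, and the Combes--Thomas method applied to $\HB^{\;/E}$ yields, for $\mathrm{Im}\, z\neq 0$,
\begin{equation*}
\abs{(z-H_b^{\;/E})^{-1}(\xbf,\ybf)}\leq c_1(z)\, \e^{-c_2(z)\abs{\xbf-\ybf}},
\end{equation*}
with constants that are uniform in translations of the base point thanks to $V,\mathcal{A}\in \BCinf{\R^2}$, and that depend polynomially on $\abs{\mathrm{Im}\, z}^{-1}$. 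Since the intermediate integration is pinned to the strip $S$, this produces an exponential factor $\e^{-c_2(z)\,\mathrm{dist}(\xbf,S)}\leq \e^{-c_2(z)(x_2-1)}$ in the kernel, uniformly in $x_1\in\R$, and the same applies to the $x_1$-derivative after trading one extra power of the resolvent for the derivative in the spirit of Proposition~\ref{prop:2}.

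\emph{Main obstacle.} Putting everything together, the $\bar\partial\tilde F$-integral of these exponentially decaying kernels is bounded by $C_N\, x_2^{-N}$ for every $N$, uniformly in $x_1\in\R$, proving the claimed $\mathcal{O}(x_2^{-\infty})$ decay of $\mathcal{J}_{1,b}^E-\mathcal{J}_{1,b}$. The step I expect to be most delicate is balancing the decay rate $c_2(z)$, which degenerates as $\mathrm{Im}\, z\to 0$, against the polynomial blow-up of $c_1(z)$, while simultaneously absorbing the first-order differential operator $\iu[\HB^{\;/E},X_1]$ and the polynomial weight $x_2$ coming from $bA_1$; these precisely are the estimates that the weighted smoothing bound of Proposition~\ref{prop:2}, combined with the magnetic phase factorisation \eqref{eq:HEexpansion} of Theorem~\ref{thm:1}, are designed to provide. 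A related technical point is verifying that the extension-by-zero device is compatible with $D(\HE)$ from Theorem~\ref{thm:1}\,\ref{thm:1Part1}, so that $T(z)$ maps into $D(\HE)$ after multiplication by appropriate cutoffs and the geometric identity holds as stated.
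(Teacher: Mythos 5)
Your proposal follows essentially the same route as the paper: the Helffer--Sj\"ostrand representation, a geometric resolvent comparison between $\HE$ and $\HB$ via cutoffs localized near the boundary (the paper's parametrix $U_{\ell}(z)$ built from $\eta_0,\eta_\ell,\widetilde{\eta}_0,\widetilde{\eta}_\ell$ is just a two-sided version of your single cutoff $\chi$), and the trading of the Combes--Thomas decay rate $\sim \abs{\mathrm{Im}\,z}/\jnorm{\mathrm{Re}\,z}$ against arbitrary powers of $x_2$, with the resulting powers of $\zeta$ absorbed by choosing $N$ large in the almost analytic extension. The technical points you flag at the end (the identity $\iu[\HE,X_1](\HE-z)^{-1}=2\overline{P}_1(\HE-z)^{-1}$, the fact that multiplication by the cutoff maps $D(\HB)$ into $D(\HE)$, and working only with kernels that are regular on the diagonal by inserting powers of $(\HE+\lambda)^{-1}$ through the first resolvent identity) are exactly what the paper supplies in Lemma~\ref{lem:HX}, Proposition~\ref{prop:bulkEdgeHamt}, and Appendix~\ref{s:intop}.
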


{ Notice that Theorem~\ref{thm:currents} holds true if $F$ is any function as in Theorem~\ref{thm:smoothnes}. The use of $F_{\mathrm{FD}}$ above is for $\mathcal{J}_{1,b}^{\;\;/E}$ to have a physical interpretation \cite{CorneanMoscolariTeufel}.}

Theorem~\ref{thm:currents} shows that the edge particle-current density converges to the bulk particle-current density faster than any polynomial in the inverse distance from the boundary. This characterization of the edge current density is, on the one hand important to gain an heuristic picture of the bulk-edge correspondence at positive temperature and, on the other hand, plays a crucial role in the developing of the proper definition of the total edge current \cite[Section 1.3]{CorneanMoscolariTeufel}.

Let us briefly recall the setting of \cite{CorneanMoscolariTeufel}. We assume that $\mathcal{A}$ is $\Z^2$-periodic, and we assume that $V$ is made of two contributions, that is $V=V_{\Z^2} + V_{\omega}$, where $V_{\Z^2}$ is $\Z^2$-periodic and $V_{\omega}$ is a random potential defined as follows. Let ${\mathrm p}$ be a probability measure on $[-1,1]$ and define $\Par=[-1,1]^{\Z^2}$. Consider $\Pro=\bigotimes_{\Z^2} {\mathrm p}$ to be a probability measure on $\Par$. We denote by $\E$ the expectation on the probability space $(\Par,\Pro)$. For every $\omega \in \Par$ we define a random potential $V_{\omega}$, by 
\begin{equation*}
    V_{\omega}(x)=\sum_{\gamma \in \Z^2} \omega_\gamma u(x-\gamma)
\end{equation*}
where $u \in\Ccinf{\R^2}$. By construction, $\{\omega_i\}_{i \in \Z^2}$ is a family of independent identically distributed (with probability distribution ${\mathrm p}$) random variables taking values in $[-1,1]$.

Then the bulk, respectively edge, Hamiltonian with the potentials satisfying such assumptions is denoted by $H_{\omega,b}$, resp. $H^E_{\omega,b}$. Likewise the corresponding particle-current densities are denoted by $\mathcal{J}_{1,\omega,b}$ and $\mathcal{J}_{1,\omega,b}^E$. It is straightforward to check that the family of self-adjoint operators $\{H_{\omega,b}\}_{\omega \in \Par}$ is ergodic with respect to the lattice $\Z^2$, while the family $(H^E_{\omega,b})_{\omega \in \Par}$ is ergodic with respect to the one-dimensional lattice generated by the vector $(1,0)$.

In view of the ergodicity of the potential we can define the averaged bulk and edge particle-current density as
\begin{equation*}
    j^{\;\;/E}_{1,b}(x_2):= \int_{0}^{1} \mathbb{E}[\mathcal{J}_{1,\dotw, b}^{\;\;/E}](x_1,x_2) \mathrm{d} x_1 .
\end{equation*}
Then, Theorem~\ref{thm:currents} implies 
\begin{equation*}
    j_{1,b}^E(x_2) -j_{1,b}(x_2) = \mathcal{O}(x_2^{-\infty}) \text{ for } x_2\to +\infty .
\end{equation*}

\begin{remark}
\label{rmk:PureLandau}
It is interesting to consider the case of the pure Landau Hamiltonian, that is $\mathcal{A}\equiv0$ and $V\equiv0$. In that case, as a consequence of magnetic translation invariance we get that $j_{1,b}$ must be constant, however the vanishing of the persistent current implies that $j_{1,b}$ is equal to zero. Thus, $j_{1,b}$ is identically zero and Theorem~\ref{thm:currents} shows that the edge current converges to zero in the bulk faster than any  polynomial in the inverse distance from the edge.
\end{remark}

\begin{remark} 
The result of Theorem~\ref{thm:currents} should be compared with the results in \cite{Kunz,MacrisMartinPule}. In both papers the authors analyze the behavior of diamagnetic currents far from the boundary and they also connect the total edge current with the bulk magnetization (see \cite{CorneanMoscolariTeufel} for the connection with the bulk magnetization in our setting). They consider only the pure Landau operator and $F$ to be the Maxwell-Boltzmann distribution. These two conditions are crucial for them since their approach relies on the Feynmann-Kac formula applied to the Gibbs semigroup. In such particular setting (c.f. Remark \ref{rmk:PureLandau}), they obtain a Gaussian decay for the current. Moreover, notice that they also extend their result to the interacting setting in a perturbative way. Therefore, if on the one hand we significantly generalize the results in \cite{Kunz,MacrisMartinPule} by considering more general class of operators and functions, on the other hand we have that the decay of Theorem~\ref{thm:currents} is not optimal. An analysis of the optimal decay of the diamagnetic current is postponed to a future work.  
\end{remark}

\subsection{Open questions}

\begin{enumerate}[label=(\roman*)]
    \item A first open problem concerns the extension of our results to infinite domains with a boundary given by the graph of a smooth function, see for example \cite{GrafFrohlichWalcher}. The general strategy developed in our paper should also apply in that setting, provided one performs a detailed preliminary analysis of the integral kernel of the Dirichlet Laplacian on such domains. After that, it would be interesting to extend our results to other boundary conditions, like Neumann boundary conditions \cite{FournaisHelffer}. In such cases, the situation is even more delicate and one has to suitably modify the gauge covariant magnetic perturbation theory. We plan to address these issues in a future work.   
    An even more challenging problem would be to extend our results to general elliptic operators, including Dirac operators.
    
    \item The bulk-edge correspondence has been extensively analyzed in the one-particle approximation. One of the main open problems in mathematical physics regards the understanding of bulk-edge correspondence for interacting systems. While perturbative results in the discrete setting are already present in the literature \cite{AntinucciMastropietroPorta,MastropietroPorta}, no general proof exists. Therefore, extending our results to many-body magnetic Schr\"odinger operators, in the spirit of \cite{Kunz,MacrisMartinPule}, would be a significant step forward.
    
\end{enumerate}

\addtocontents{toc}{\protect\setcounter{tocdepth}{0}}

\subsection*{Content of the paper}
Section~\ref{sec:selfadjoint} is devoted to the proof of Theorem~\ref{thm:1}, which is divided in two main parts. In the first part, that is the proof of Theorem~\ref{thm:1}\ref{thm:1Part1}, we show that $H^E_b$ is self-adjoint, we analyze the Friedrichs extension of the quadratic form $Q^E_b$ and show that it coincides with $H^E_b$. Then, in the second part we prove Theorem~\ref{thm:1}\ref{thm:1Part2} by carefully analyzing the structure of the integral kernel of the resolvent of the edge Hamiltonian. In Appendix~\ref{s:dist} we collect some useful standard results regarding the formal manipulation of distributions that we use in the proof of Theorem~\ref{thm:1}. 

In Section~\ref{sec:regularity} we prove Proposition~\ref{prop:2} and Theorem~\ref{thm:smoothnes}. The proofs are based on a detailed analysis of the integral kernels of functions of the edge and bulk operators. In order to ease the reading, we postpone this technical analysis to Appendix~\ref{s:intop}.

Finally, in Section~\ref{sec:GPT} we prove Theorem~\ref{thm:currents} by exploiting geometric perturbation theory together with the integral kernel estimates in Appendix~\ref{s:intop}.

\subsection*{Acknowledgments}
The authors would like to thank H. D. Cornean and S. Teufel for many fruitful discussions and helpful suggestions. The work of MM is supported by a fellowship of the Alexander von Humboldt Foundation. BBS is especially grateful to S. Teufel for extending an invitation to visit the University of Tübingen during the fall of 2021 and to ``Anders Brøndums Fond'' for supporting the visit financially.

\subsection*{Notation}
\label{sec:notation}
For the convenience of the reader we list here the main notations used in the paper:
\begin{itemize}
    \item Euclidean norms are denoted by $\abs{\cdot}$. Other norms (such as $L^p$ norms, operator norms, etc.) are denoted by $\norm{\cdot}$. When ambiguity arise we denote norms with a subscript indicating the associated space.
    \item The Euclidean inner product is denoted by $\xbf\cdot \ybf$ and we use the shorthand $\xbf^2$ for $\xbf\cdot \xbf$.
    \item For any open set $\Omega\subset \R^2$ the inner product on $L^2(\Omega)$ is denoted by $\ip{\cdot}{\cdot}_{L^2(\Omega)}$, where we omit the subscript when it does not lead to any confusion. We take the inner product to be linear in the second variable.
    \item The dual pairing of $\tempdist{\R ^2}$ and $\Schwartsfunct{\R^2}$ is denoted by $\pair{\cdot}{\cdot}$ and it is taken to be linear in both variables.
    \item The bracket $\jnorm{\cdot}$ is defined by $\jnorm{\xbf}\coloneqq(1+\abs{\xbf}^2)^{1/2}$, for $\xbf\in \R^n$.
    \item For differentiable functions on $\R^2$ we use the notation $\partial_j\coloneqq \frac{\partial}{\partial x_j}$, $j\in \{1,2\}$.
    \item For functions defined on $\Omega\times \Omega$, where $\Omega$ is an open set $\Omega\subset \R^2$, we use $\partial_1$ and $\partial_2$ to denote partial derivatives with respect to the first and second variable, respectively, i.e.\
    \begin{equation*}
        \partial_1^\alpha \partial_2^\beta f(\xbf,\ybf)=\frac{\partial^\alpha}{\partial\xbf^\alpha} \frac{\partial^\beta}{\partial\ybf ^\beta}f(\xbf,\ybf),
    \end{equation*}
    for any multi-indices $\alpha,\beta\in \N_0^2$. Similar notation is used for coordinate-wise gradients and Laplacians. This should not be confused with the notation of the previous item in this list.
    \item For functions of more than two variables, we may use the variable in question as a subscript for the differential operator, e.g.\ $\partial_\xbf$.
\end{itemize} 

\addtocontents{toc}{\protect\setcounter{tocdepth}{2}}

\section{Proof of \texorpdfstring{Theorem~\ref{thm:1}}{Theorem 1.1}}
\label{sec:selfadjoint}
\subsection{Proof of \texorpdfstring{Theorem~\ref{thm:1}\ref{thm:1Part1}}{Theorem 1.1(i)}}
The aim of this section is to provide a detailed analysis of the edge Hamiltonian $H^E_b$ in order to prove Theorem~\ref{thm:1}\ref{thm:1Part1}. We start by showing that $\Hb$ defined on $\Ccinf{\bar{E}}$ is essentially self-adjoint and that $H^E_b$ is a relatively bounded perturbation of $\overline{\Hb}$. This result is useful to obtain a norm convergent resolvent expansion (see Theorem \ref{thm:1} \ref{thm:1Part2}) which is interesting on its own and will be used in the rest of the paper. All operators in this section are considered as operators on $L^2(E)$ unless otherwise stated. Likewise, all integral kernels are defined on $E\times E$ unless otherwise stated.

Let us first precisely define the domains of interest. Let $a>0$ and define the set $U_a\coloneqq (-a,a)\times (0,a) \subset E$. We define the space of functions $C^{\infty}(\bar{E}) \subset C^{\infty}(E)$ by
\begin{equation*}
    \Cinf{\bar{E}}=\{ f \in C^{\infty}(E)\mid  \partial^\alpha f \textup{ extends continuously to } \bar{E} \textup{ for all }\alpha\in \N_0^2\},
\end{equation*}
and recall that $\Ccinf{\bar{E}}$ is defined as
\begin{equation*}
    \Ccinf{\bar{E}}=\{f\in \Cinf{\bar{E}} \mid \exists \, a>0\textup{ s.t.  } f(\xbf)=0\textup{ for all } \xbf\in \bar{E}\setminus U_a\}.
\end{equation*}
Let $\hb$ denote the operator $\Hb$ defined on the dense domain $D(\hb)=\Ccinf{\bar{E}}$, where $b\in \R$ and $A(\xbf)=(-x_2,0)$. Furthermore, for $\mathcal{A}\in \BCinf{\R^2;\R^2}$ and $j\in \{1,2\}$, let $P_j$ denote the operator
\begin{equation*}
    P_j=-\im \partial_j-bA_j-\mathcal{A}_j
\end{equation*}
defined on the dense domain $D(P_j)=\Ccinf{\bar{E}}$. For $P=[P_1,P_2]$ and a scalar potential  $V\in\BCinf{\R^2;\R}$, define
\begin{equation}\label{eq:W}
    W=-2\mathcal{A}\cdot P-\im(\nabla\cdot\mathcal{A})+\mathcal{A}^2+V
\end{equation}
with domain $D(W)=\Ccinf{\bar{E}}$ and note that  

\begin{equation*}
    \hb+W=\restr{H_b}{\Ccinf{\bar{E}}}=\restr{H^E_b}{\Ccinf{\bar{E}}}.
\end{equation*}
\begin{lemma}\label{lem:sym}
    The operators $\hb$, $P_j$, and $W$ are symmetric. Additionally, $\hb$ is a positive operator.
\end{lemma}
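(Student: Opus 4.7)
The entire lemma reduces to integration by parts on $E$, with every boundary contribution vanishing. The decisive preliminary observation is that every $f\in\Ccinf{\bar E}$ satisfies the Dirichlet condition $f|_{\partial E}=0$: the requirement that $f\equiv 0$ on $\bar E\setminus U_a$ for some $a>0$, combined with continuous extendibility to $\bar E$, forces $f(x_1,0)=\lim_{x_2\to 0^+}f(x_1,x_2)=0$ for every $x_1\in\R$. All boundary terms I shall encounter are therefore zero, both those along $\partial E=\{x_2=0\}$ (by the trace condition) and those at $x_1=\pm\infty$ (by compact support in $x_1$).

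For the symmetry of $P_j$ I split off the real smooth multiplier $-(bA_j+\mathcal A_j)$, which is trivially symmetric, and handle the first-order part via
\[
\langle f,-i\partial_j g\rangle-\langle -i\partial_j f,g\rangle=-i\int_E\partial_j(\bar f g)\,dx,
\]
which vanishes by Fubini together with the above observation. The same computation with $\mathcal A_j$ removed shows that $P_j^{(0)}:=-i\partial_j-bA_j$ is also symmetric on $\Ccinf{\bar E}$. For $\hb=(P_1^{(0)})^2+(P_2^{(0)})^2$ I then treat the two summands asymmetrically. In the $j=1$ case, $P_1^{(0)}g=-i\partial_1 g+bx_2 g$ still belongs to $\Ccinf{\bar E}$---both summands vanish on $\partial E$, the first because $\partial_1$ is tangential to the boundary and $g|_{\partial E}=0$, the second by the explicit $x_2$-factor---so the symmetry of $P_1^{(0)}$ may be applied twice to give simultaneously $\langle f,(P_1^{(0)})^2 g\rangle=\langle(P_1^{(0)})^2 f,g\rangle$ and the positivity identity $\langle f,(P_1^{(0)})^2 f\rangle=\norm{P_1^{(0)} f}^2$. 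In the $j=2$ case, by contrast, $P_2^{(0)}g=-i\partial_2 g$ need not satisfy the Dirichlet condition, so $P_2^{(0)}$ does not preserve the domain; I therefore work directly on the quadratic form. Two integrations by parts in $x_2$ yield $\langle f,-\partial_2^2 g\rangle=\langle-\partial_2^2 f,g\rangle$, with both intermediate boundary terms $[\bar f\,\partial_2 g]_{x_2=0}$ and $[\partial_2\bar f\cdot g]_{x_2=0}$ vanishing because $f$ and $g$ themselves do; a single integration by parts likewise yields $\langle f,-\partial_2^2 f\rangle=\norm{\partial_2 f}^2\ge 0$. Adding the two contributions settles both the symmetry and the positivity of $\hb$.

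For the symmetry of $W$ I would avoid a term-by-term check and simply use the identity $\hb+W=\restr{H_b}{\Ccinf{\bar E}}$ recorded right before the lemma: exactly the same strategy applied to $H_b=\sum_j P_j^2+V$ shows that $H_b$ is symmetric on $\Ccinf{\bar E}$---$P_1$ still preserves $\Ccinf{\bar E}$ (all three summands of $P_1 g$ vanish on $\partial E$), and $P_2^2$, expanded as $-\partial_2^2+2i\mathcal A_2\partial_2+i(\partial_2\mathcal A_2)+\mathcal A_2^2$, is symmetric by direct double integration by parts on each piece, all boundary contributions vanishing because $f,g$ do. Consequently $W=\restr{H_b}{\Ccinf{\bar E}}-\hb$ is symmetric on $\Ccinf{\bar E}$. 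The only mild subtlety of the whole argument is the failure of $P_2^{(0)}$ and $P_2$ to preserve $\Ccinf{\bar E}$---the normal derivative $\partial_2 g|_{\partial E}$ need not vanish---which is precisely what forces the direct quadratic-form computation for the $j=2$ contribution rather than an algebraic application of symmetry of the first-order operator twice.
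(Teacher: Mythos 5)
Your proof is correct and is essentially a detailed version of the paper's one-line argument (explicit integration together with Green's formula on $E$), with every boundary term killed by the Dirichlet trace $f|_{\partial E}=0$ that is built into $\Ccinf{\bar E}$ and by the compact support. The only loose phrase is that the pieces $2\im\mathcal{A}_2\partial_2$ and $\im(\partial_2\mathcal{A}_2)$ in your expansion of $P_2^2$ are not individually symmetric --- only their sum is, after one integration by parts --- but the combined computation you describe goes through unchanged.
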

\begin{proof}
The result follows by explicit integration together with the use of Green's formula.
\end{proof}

\subsubsection{Self-adjointness of \texorpdfstring{$\HE$}{HbE}}
\label{subsec:saHE}
Let us denote by $h_b^E$ the operator $h_b^E:=\overline{\hb}$. The proof of the self-adjointness of $H_b^E$ requires a standard application of the Kato-Rellich theorem (cf.\ \cite[Theorem X.12]{ReedSimon2}). As preliminary necessary results we show that $h_b^E$ is self-adjoint and that $\overline{W}$ is $h_b^E$-bounded.

Let us start with a technical lemma characterizing the domain of $h_b^E$.

\begin{lemma}\label{lem:MinDHb}
    Let $\M$ be the set of all functions $f\in \Cinf{\bar{E}}$ such that $f(x_1,0) \equiv 0$ and which decay exponentially together with all their derivatives, i.e. there exists $c>0$ 
    such that for all $\alpha\in \N_0^2$ and $\xbf\in \bar{E}$,
\begin{equation}\label{eq:Mdef}
    \abs{\partial^\alpha f(\xbf)}\leq C \e^{-c \abs{\xbf}}
\end{equation}
for some constant $C$. Then \(\M\subseteq D(h^E_b)\).
\end{lemma}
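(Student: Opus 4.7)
The plan is to approximate each $f\in\mathcal{M}$ by a sequence $\{f_n\}\subseteq \Ccinf{\bar E}$ obtained by radial cutoff, and to use the Dirichlet condition $f(x_1,0)\equiv 0$ and the exponential decay \eqref{eq:Mdef} to control both the $L^2$ error and the action of the magnetic Laplacian. Since $h_b^E=\overline{\mathfrak{H}_b^E}$, it suffices to exhibit, for a given $f\in\mathcal{M}$, a sequence $f_n\in\Ccinf{\bar E}$ with $f_n\to f$ and $\mathfrak{H}_b^E f_n\to \mathfrak{H}_b^E f$ in $L^2(E)$, where $\mathfrak{H}_b^E f$ is the pointwise application of the differential expression $(-\im\nabla-bA)^2$ to $f$; note that $\mathfrak{H}_b^E f$ itself lies in $L^2(E)$ by \eqref{eq:Mdef}.

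Concretely, I would pick $\chi\in\Ccinf{\R^2}$ with $\chi\equiv 1$ on $B(0,1)$, $\Supp\chi\subseteq B(0,2)$, and set $\chi_n(\xbf)=\chi(\xbf/n)$, $f_n\coloneqq \chi_n f$. The key observation is that $f_n\in\Ccinf{\bar E}$: smoothness and continuous extension of all partial derivatives to $\partial E$ are inherited from $f$, and $\Supp f_n\subseteq \bar E\cap B(0,2n)\subseteq [-2n,2n]\times[0,2n]$, so the compact-support condition in the definition of $\Ccinf{\bar E}$ is fulfilled. (The Dirichlet condition $f(x_1,0)=0$, though not required for membership in $\Ccinf{\bar E}$, is what will ultimately force $f$ into $D(h_b^E)$.) Convergence $f_n\to f$ in $L^2(E)$ follows from dominated convergence with dominant $|f|\in L^2(E)$ guaranteed by \eqref{eq:Mdef}.

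For the graph convergence, I would use the standard commutator identity
\[
\bigl[(-\im\nabla-bA)^2,\chi_n\bigr]\,f = -(\Delta\chi_n)f - 2\im\,(\nabla\chi_n)\cdot(-\im\nabla-bA)f,
\]
so that $\mathfrak{H}_b^E f_n = \chi_n(\mathfrak{H}_b^E f) - (\Delta\chi_n)f - 2\im(\nabla\chi_n)\cdot(-\im\nabla-bA)f$. The first term converges to $\mathfrak{H}_b^E f$ in $L^2(E)$ by dominated convergence. For the two commutator terms, one has $\|\Delta\chi_n\|_\infty=O(n^{-2})$ and $\|\nabla\chi_n\|_\infty=O(n^{-1})$, both supported in $\{|\xbf|\geq n\}$; combined with the exponential bound \eqref{eq:Mdef} on $f$ and on $\partial_j f$ (hence on $(-\im\nabla-bA)f$, since $A$ is polynomially bounded), these contributions tend to zero in $L^2(E)$. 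Thus $\mathfrak{H}_b^E f_n\to \mathfrak{H}_b^E f$ in $L^2(E)$, which places $f$ in $D(\overline{\mathfrak{H}_b^E})=D(h_b^E)$.

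The only subtle point is ensuring that the approximants genuinely live in $\Ccinf{\bar E}$ as defined in the paper and that no boundary contribution appears; this is automatic here because the cutoff $\chi_n$ depends only on $\xbf\in\R^2$ and is smooth across $\partial E$, so $f_n$ inherits the smoothness of $f$ up to $\partial E$. The rest is a routine dominated-convergence and commutator estimate, with no derivatives of $f$ of order $>1$ actually needed beyond the pointwise definition of $\mathfrak{H}_b^E f$.
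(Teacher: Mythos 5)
Your proof is correct and essentially identical to the paper's: both approximate $f\in\M$ by smooth dilated cutoffs $f_n=\chi_n f\in\Ccinf{\bar E}$ and use the exponential decay of $f$ and all its derivatives (which dominates the polynomial growth of $A$) together with dominated convergence to obtain the graph convergence $f_n\to f$ and $\mathfrak{H}_b^E f_n\to \mathfrak{H}_b^E f$ in $L^2(E)$, hence $f\in D(h_b^E)$. One small correction: under the paper's definition of $\Ccinf{\bar E}$ (vanishing on $\bar E\setminus\bigl((-a,a)\times(0,a)\bigr)$, a set which contains $\partial E$), the Dirichlet condition \emph{is} required for membership, contrary to your parenthetical remark -- but your approximants $f_n=\chi_n f$ inherit it from $f\in\M$, so the argument is unaffected.
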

\begin{proof}
    Let $f\in \M$ be arbitrary. Let $\chi\colon \R \to [0,1]$ be a smooth function such that $\chi(x)=1$ for $\abs{x}\leq 1$ and $\chi(x)=0$ for $\abs{x}\geq 2$. For any $n\in \mathbb{N}$ define $\chi_n(\xbf)=\chi(x_1/n)\chi(x_2/n)$, and $f_n=\chi_n f$. By the properties of $f$ and $\chi$, Lebesgue's dominated convergence theorem gives that 
          $  \lim_{n\to \infty} \norm{\partial^\alpha f_n-\partial^\alpha f}_{L^2(E)}=0,$
        for all $\alpha\in \N_0^2$. In particular,
    \begin{equation}
    \label{eq:explicitHb}
        \hb f_n=-\Delta f_n+2ibA\cdot \nabla f_n+b^2A^2 f_n\to -\Delta f+2ibA\cdot \nabla f+b^2A^2 f 
    \end{equation}
    in $L^2(E)$ as $n\to \infty$, which shows that $f$ is in the closure of $\hb$.
\end{proof}

In order to show that $h_b^E$ is self-adjoint we have to analyse its range, we do this by exploiting gauge covariant magnetic perturbation theory (cf.\ \cite{CN1998,N2002,Cornean2010,CP2012}) to construct an almost resolvent for the operator $h_b^E$.

Let us start by recalling some known facts about the integral kernel of the standard Laplacian, that is $(-\Delta+\lambda)^{-1}$ (defined as an operator on $L^2(\R^2)$). For any $\lambda>0$, an application of the Fourier transform shows that $(-\Delta+\lambda)^{-1}$ has an integral kernel
\begin{equation*}
    (-\Delta+\lambda)^{-1}(\xbf,\ybf)=\FT^{-1}\bigg( \frac{1}{\abs{\xi}^2+\lambda} \bigg)(\xbf-\ybf)=\frac{1}{2\pi}K_0(\sqrt{\lambda}\abs{\xbf-\ybf}),
\end{equation*}
where $K_0$ is the MacDonald function (see Appendix~\ref{s:M}). By using the reflection method, the integral kernel of the resolvent of the free Laplacian defined with Dirichlet boundary condition on $E$ is given by 
\begin{equation*}
    R_E(-\lambda)(\xbf,\ybf)\coloneqq(-\Delta+\lambda)^{-1}(\xbf,\ybf)-(-\Delta+\lambda)^{-1}(\xbf,\ybf^*),
\end{equation*}
where $\ybf^*\coloneqq (y_1,-y_2)$. Consider now the Peierls phase $\phi$ defined as
\begin{equation}\label{eq:phi}
    \phi(\xbf,\ybf)=-\int_\xbf^\ybf A(\s)\cdot\dd \s=-\frac{1}{2} (x_1-y_1)(x_2+y_2). 
\end{equation}
Let $S_b(-\lambda)$ and $T_b(-\lambda)$ be the operators on $L^2(E)$ defined by the following integral kernels
\begin{equation}\label{eq:Slambda1}
    S_b(-\lambda)(\xbf,\ybf)\coloneqq\e^{\im b \phi(\xbf,\ybf)}R_E(\xbf,\ybf),
\end{equation}
and 
\begin{equation}\label{eq:Tlambda}
    T_b(-\lambda)(\xbf,\ybf)\coloneqq \e^{\im b\phi(\xbf,\ybf)}bA_t(\xbf-\ybf)\cdot
    [2 \im\nabla_{1}+bA_t(\xbf-\ybf)] R_E(\xbf,\ybf)
\end{equation}
where $A_t(\xbf)\coloneqq \frac{1}{2}(-x_2,x_1)$ denotes the transverse gauge. These two operators play a central role in our analysis and we study their properties in Appendix~\ref{s:M}.  In particular, we have that $S_b(-\lambda)$ and $T_b(-\lambda)$ are bounded operators (see Lemma~\ref{lem:S} and Lemma~\ref{lem:T}) and that $S_b(-\lambda)$ maps $C^{\infty}_c(E)$ into $\M$ (see Lemma~\ref{lem:SM}). We are now ready to state the following central technical proposition.

\begin{proposition}\label{prop:1}
    For any $f\in L^2(E)$, 
        \begin{equation}\label{eq:HbSlambda}
            \left(h^E_b+\lambda\right)S_b(-\lambda)f=f+T_b(-\lambda)f.
        \end{equation}
\end{proposition}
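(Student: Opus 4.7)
\textbf{Proof plan for Proposition~\ref{prop:1}.}

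The natural strategy is to first prove the identity for $f\in \Ccinf{E}$, where everything is classical, and then to extend to $L^2(E)$ by a density-and-closedness argument using that $h_b^E$ is closed (by definition) and that $S_b(-\lambda)$ and $T_b(-\lambda)$ are bounded on $L^2(E)$ (Lemmas~\ref{lem:S} and~\ref{lem:T}). Specifically, for $f\in L^2(E)$, pick $f_n\in \Ccinf{E}$ with $f_n\to f$, and observe that $S_b(-\lambda)f_n\to S_b(-\lambda)f$ and $f_n+T_b(-\lambda)f_n\to f+T_b(-\lambda)f$ in $L^2(E)$. If the identity is established on $\Ccinf{E}$, the closedness of $h_b^E$ forces $S_b(-\lambda)f\in D(h_b^E)$ together with \eqref{eq:HbSlambda}. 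Note that for $f\in \Ccinf{E}$ the inclusion $S_b(-\lambda)f\in \M\subset D(h_b^E)$ is already available from Lemma~\ref{lem:SM} combined with Lemma~\ref{lem:MinDHb}, so that $h_b^E$ acts by the explicit differential expression in \eqref{eq:explicitHb}.

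The core of the argument is a gauge-covariance computation followed by use of the defining distributional equation for $R_E$. A direct calculation shows $\nabla_\xbf \phi(\xbf,\ybf)=A(\xbf)-A_t(\xbf-\ybf)$, where $\phi$ is as in \eqref{eq:phi} and $A_t$ is the transverse gauge. This yields the conjugation identity
\begin{equation*}
    (-\im\nabla_\xbf-bA(\xbf))\bigl[\e^{\im b\phi(\xbf,\ybf)}u(\xbf,\ybf)\bigr]=\e^{\im b\phi(\xbf,\ybf)}(-\im\nabla_\xbf-bA_t(\xbf-\ybf))u(\xbf,\ybf),
\end{equation*}
and iterating gives
\begin{equation*}
    \bigl(\hb\bigr)_\xbf\bigl[\e^{\im b\phi(\xbf,\ybf)}R_E(\xbf,\ybf)\bigr]=\e^{\im b\phi(\xbf,\ybf)}\bigl(-\im\nabla_\xbf-bA_t(\xbf-\ybf)\bigr)^2R_E(\xbf,\ybf).
\end{equation*}
Since $\nabla\cdot A_t=0$, expanding the square and using $(-\Delta_\xbf+\lambda)R_E(\xbf,\ybf)=\delta(\xbf-\ybf)$ (with Dirichlet boundary condition on $\partial E$) together with $\phi(\ybf,\ybf)=0$ produces
\begin{equation*}
    (h_b^E+\lambda)_\xbf\bigl[\e^{\im b\phi(\xbf,\ybf)}R_E(\xbf,\ybf)\bigr]=\delta(\xbf-\ybf)+\e^{\im b\phi(\xbf,\ybf)}\,bA_t(\xbf-\ybf)\cdot[2\im\nabla_\xbf+bA_t(\xbf-\ybf)]R_E(\xbf,\ybf),
\end{equation*}
which is precisely $\delta(\xbf-\ybf)$ plus the kernel $T_b(-\lambda)(\xbf,\ybf)$ from \eqref{eq:Tlambda}. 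Integrating against $f(\ybf)\in \Ccinf{E}$ gives the desired identity for $f\in \Ccinf{E}$.

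The main technical obstacle is the justification of these pointwise/distributional manipulations in view of the logarithmic singularity of $R_E(\xbf,\ybf)$ on the diagonal, which means that $\Delta_\xbf$ cannot naively be interchanged with the integral defining $S_b(-\lambda)f$. This is precisely where the formalism collected in Appendix~\ref{s:dist} enters: one tests against $\varphi\in \Ccinf{E}$ and transfers derivatives onto $\varphi$, using the symmetry of $\hb$ (Lemma~\ref{lem:sym}) on $\Ccinf{\bar E}$ together with the fact that the Dirichlet boundary condition built into $R_E$ kills the boundary contributions in Green's formula. The gauge-covariance identity above is a smooth, bounded pointwise relation and so causes no trouble; only the distributional identity $(-\Delta_\xbf+\lambda)R_E=\delta$ needs the appendix. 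Once this is carried out, the $\Ccinf{E}$ identity becomes rigorous, and the first paragraph's density/closedness argument concludes the proof for all $f\in L^2(E)$.
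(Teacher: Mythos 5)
Your proposal is correct and follows essentially the same route as the paper: reduction to $f\in \Ccinf{E}$ via closedness of $h_b^E$ and boundedness of $S_b(-\lambda)$, $T_b(-\lambda)$, the inclusion $S_b(-\lambda)f\in \M\subset D(h_b^E)$ from Lemmas~\ref{lem:SM} and~\ref{lem:MinDHb}, and then the gauge-covariance identity $\nabla_1\phi(\xbf,\ybf)=A(\xbf)-A_t(\xbf-\ybf)$ combined with the distributional equation $(-\Delta+\lambda)$ applied to the kernel, producing exactly $\delta(\xbf-\ybf)$ plus the kernel of $T_b(-\lambda)$. The only (harmless) difference is in how the singular-kernel differentiation is justified: the paper differentiates the pairing directly via the generalized Leibniz rule of Lemma~\ref{lem:diff2}, splitting $R_E$ into free and reflected parts (the reflected term contributing no $\delta$ since $f^*$ vanishes on $E$) and using that $\nabla \mathfrak{f}_{1,0}$ is integrable, whereas you run the same computation in weak form by testing against $\varphi\in\Ccinf{E}$ and using the symmetry of $\hb$.
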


\begin{proof}
    Note first that since $h^E_b+\lambda$ is closed and both $S_b(-\lambda)$ and $T_b(-\lambda)$ are bounded (cf.\ Lemma~\ref{lem:S} and Lemma~\ref{lem:T}), it suffices to show \eqref{eq:HbSlambda} for $f\in \Ccinf{E}$. Hence, let $f\in \Ccinf{E}$ be arbitrary.
    
    By Lemma~\ref{lem:MinDHb} and the fact that $S_b(-\lambda)$ maps $C^{\infty}_c(E)$ into $\M$ (cf.\ Lemma~\ref{lem:SM}), it follows that $(h_b^E+\lambda)S_b(-\lambda)f$ is a continuous function, point-wise given by
    \begin{equation*}
        \left(\left(h_b^E+\lambda \right)S_b(-\lambda)f\right)(\xbf)=[(-\Delta+\lambda)+2\im bA(\xbf)\cdot \nabla+b^2A(\xbf)^2 ]\left(S_b(-\lambda)f\right)(\xbf),
    \end{equation*}
    for $\xbf\in E$.
    If $S_b(-\lambda)$ had a kernel in e.g.\ $\BCinf{E\times E}$, then the above derivatives could easily be calculated by using Lebesgue's dominated convergence theorem. However, since the derivatives of $S_b(-\lambda)$ have singularities on the diagonal $\xbf=\ybf$ (cf.\ the properties of $K_0$ given in \eqref{eq:K1}-\eqref{eq:K4}), we  compute the derivatives in the distributional sense, as in Appendix~\ref{s:dist}. 
    Thus, in the following we consider $f$ to be a function on $\R^2$ by extending it by zero and let $f^*$, and $\phi^*$ be the functions defined by 
    \begin{equation*}
        f^*(\xbf)=f(\xbf^*),\quad \phi^*(\xbf,\ybf)=\phi(\xbf,\ybf^*).
    \end{equation*}
    At this point it is useful to write
\begin{equation*}
    \left(S_b(-\lambda)f\right)(\xbf)=\pair{\mathfrak{f}_{1,0}}{e^{\im b\phi(\xbf,\xbf-\cdot)}f(\xbf-\cdot)}-\pair{\mathfrak{f}_{1,0}}{e^{\im b\phi^*(\xbf,\xbf-\cdot)}f^*(\xbf-\cdot)},
\end{equation*}
    where $\mathfrak{f}_{1,0}$ is the distribution corresponding to $(2\pi)^{-1}K_0(\sqrt{\lambda}\abs{\cdot})$ (cf.\ the notation introduced in Appendix~\ref{s:kernel}).
    Then, by applying the generalized Leibniz rule in Lemma~\ref{lem:diff2} together with the identity 
\begin{equation*}
    A(\xbf)=A_t(\xbf-\ybf)+\nabla_{1} \phi(\xbf,\ybf),
\end{equation*} 
and the distributional equality $(-\Delta+\lambda)\mathfrak{f}_{1,0}=\delta$, we get the explicit expression
    \begin{align*}
        &\left(\left(h_b^E+\lambda \right)S_b(-\lambda)f\right)(\xbf)\\
        &= f(\xbf)+\pair{2\im b\e^{\im b \phi(\xbf,\cdot)}A_t(\xbf-\cdot)(\nabla \mathfrak{f}_{1,0})_{\xbf}+b^2\e^{\im b \phi(\xbf,\cdot)}A_t(\xbf-\cdot)^2(\mathfrak{f}_{1,0})_{\xbf} }{f}\\
        &\quad+\pair{2\im b\e^{\im b \phi^*(\xbf,\cdot)}A_t(\xbf-(\cdot)^*)(\nabla \mathfrak{f}_{1,0})_{\xbf}+b^2\e^{\im b \phi^*(\xbf,\cdot)}A_t(\xbf-(\cdot)^*)^2(\mathfrak{f}_{1,0})_{\xbf} }{f^*},
    \end{align*}
    where we have used the fact that $f^*(\xbf)=0$ for $\xbf\in E$ and the notation $\Lambda_\xbf$ for $\Lambda\in \tempdist{\R^2}$ defined in \eqref{eq:Lambdax}.
    For $\abs{\alpha}=1$, the properties of $K_0$ in \eqref{eq:K1}--\eqref{eq:K4} show that $\partial^\alpha K_0(\sqrt{\lambda}\abs{\xbf})$ is integrable. Hence, it follows from a standard approximation argument that $\nabla \mathfrak{f}_{1,0}$ equals $(2\pi)^{-1}\nabla K_0(\sqrt{\lambda}\abs{\cdot})$ and a suitable change of variables gives \eqref{eq:HbSlambda}.
\end{proof}
\medskip
As an immediate consequence of Proposition~\ref{prop:1} we get the following corollary. 
\begin{corollary}\label{cor:esssa}
    The operator $h_b^E$ is essentially self-adjoint. 
\end{corollary}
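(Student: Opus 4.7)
The plan is to combine Proposition~\ref{prop:1} with a Neumann series argument to exhibit $h_b^E+\lambda$ as surjective for $\lambda$ large, and then invoke the standard criterion for semibounded symmetric operators. Since $\mathfrak{H}_b^E$ is symmetric and non-negative on $\Ccinf{\bar{E}}$ by Lemma~\ref{lem:sym}, it is essentially self-adjoint if and only if $\range(\mathfrak{H}_b^E+\lambda)$ is dense in $L^2(E)$ for some $\lambda>0$; because $h_b^E=\overline{\mathfrak{H}_b^E}$, this is equivalent to the self-adjointness of $h_b^E$.

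First I would verify that the operator norm of $T_b(-\lambda)$ decays to zero as $\lambda\to +\infty$. Its integral kernel \eqref{eq:Tlambda} carries an extra factor of $bA_t(\xbf-\ybf)$ together with derivatives of $R_E(-\lambda)$, and the pointwise bounds on $K_0$ and $\nabla K_0$ recorded in \eqref{eq:K1}--\eqref{eq:K4} translate, via a standard Schur test, into a quantitative gain in $\lambda$. These estimates should already be contained in (or follow easily from) the analysis in Appendix~\ref{s:M} and Lemma~\ref{lem:T}. Once $\norm{T_b(-\lambda)}_{L^2(E)\to L^2(E)}<1$ is available for $\lambda$ beyond some threshold $\lambda_0$, the Neumann series produces a bounded inverse $(I+T_b(-\lambda))^{-1}$ on $L^2(E)$.

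Next, for any $g\in L^2(E)$ I would take $f=(I+T_b(-\lambda))^{-1}g$ in Proposition~\ref{prop:1}. Recalling that $S_b(-\lambda)$ is bounded on $L^2(E)$ (Lemma~\ref{lem:S}) and maps into $D(h_b^E)$ (a consequence of Lemma~\ref{lem:MinDHb} together with Lemma~\ref{lem:SM} after density), Proposition~\ref{prop:1} gives
\begin{equation*}
    (h_b^E+\lambda)\,S_b(-\lambda)(I+T_b(-\lambda))^{-1}g=g.
\end{equation*}
Hence $\range(h_b^E+\lambda)=L^2(E)$, which is in particular dense. By the criterion above, $\mathfrak{H}_b^E$ is essentially self-adjoint, with self-adjoint closure $h_b^E$.

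The main obstacle is the norm bound $\norm{T_b(-\lambda)}\to 0$ as $\lambda\to\infty$: everything else in the argument is soft. This is a purely technical kernel estimate relying on the decay properties of $K_0$ and $\nabla K_0$, handled in the appendix; the reflection term in $R_E$ only improves the bounds (the singularity sits off the diagonal when $\xbf\in E$), and the Gaussian decay of the MacDonald function at infinity controls the integrals uniformly for $b$ in compact sets, which will moreover be useful for the smoothness statement in Theorem~\ref{thm:1}\ref{thm:1Part2}.
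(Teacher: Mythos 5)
Your argument is correct and follows essentially the same route as the paper: Lemma~\ref{lem:T} already gives $\norm{T_b(-\lambda)}\leq C\abs{b^2+b}(1+\lambda)\lambda^{-2}\to 0$, so $(1+T_b(-\lambda))^{-1}$ exists by Neumann series, Proposition~\ref{prop:1} then yields surjectivity of $h_b^E+\lambda$, and the standard criterion for positive symmetric operators concludes, exactly as in \eqref{eq:HE_resolvent}. The only (harmless) slip is calling the decay of the MacDonald function ``Gaussian''; it is exponential, cf.\ \eqref{eq:K2}, which suffices for all the kernel estimates used.
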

\begin{proof}
From Lemma~\ref{lem:T} we have that $1+T_b(-\lambda)$ is invertible when $\lambda$ is sufficiently large. For such $\lambda$, Proposition~\ref{prop:1} implies that $\HEb+\lambda$ has a bounded inverse given by
\begin{equation}\label{eq:HE_resolvent}
    (\HEb+\lambda)^{-1}=S_b(-\lambda)(1+T_b(-\lambda))^{-1}=\sum_{n=0}^\infty S_b(-\lambda)(-T_b(-\lambda))^n.
\end{equation}
Since $h_b^E$ is symmetric and positive we conclude by applying a standard self-adjointness criterion, see for example \cite[Theorem 9.24]{Hall}.
\end{proof}
By Lemma~\ref{lem:sym} and Corollary~\ref{cor:esssa}, $(h_b^E+\lambda)^{-1}$ exists for all $\lambda>0$ and the $h_b^E$-boundedness of $\overline{P_1}$, $\overline{P_2}$, and $\overline{W}$ can easily be established.
\begin{lemma}\label{lem:W}
    Let $Y\in \{\overline{P_1},\overline{P_2},\overline{W}\}$, then $D(h_b^E)\subset D(Y)$ and there exists a constant $C$ such that  
    \begin{equation}\label{eq:W-Hb-bounded}
        \norm{Y(h_b^E+\lambda)^{-1}}\leq C\frac{\sqrt{1+\lambda}}{\lambda}
    \end{equation}
    for any $\lambda>0$. In particular,  $\overline{P_1}$, $\overline{P_2}$, and $\overline{W}$ are infinitesimally $h_b^E$-bounded.
\end{lemma}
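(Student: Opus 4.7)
The plan is to reduce every estimate to a single quadratic-form inequality for the magnetic momenta. Setting $P_j^0 := -\im\partial_j - bA_j$, so that $P_j = P_j^0 - \mathcal{A}_j$ and $\hbb = (P_1^0)^2 + (P_2^0)^2$, the key identity I aim to establish is
\begin{equation*}
    \sum_{j=1}^{2}\norm{\overline{P_j^0}f}^2 = \ip{f}{\HEb f}\qquad\text{for all }f\in D(\HEb),
\end{equation*}
from which all the claimed bounds follow by spectral calculus, with $\mathcal{A}_j$ and the potential-like terms in $W$ treated as bounded perturbations.

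To set up this identity, note first that for $f\in \Ccinf{\bar{E}}$ an integration by parts gives $\ip{f}{\hbb f} = \sum_j\norm{P_j^0 f}^2$: in the $x_1$-direction there is no boundary contribution because $f$ is compactly supported, while the boundary contribution in the $x_2$-direction at $\partial E$ vanishes because $f(x_1,0)=0$ and $A_2\equiv 0$. To extend the identity to $D(\HEb)$, I use that $\HEb = \overline{\hbb}$ furnishes for any $f\in D(\HEb)$ a sequence $f_n\in \Ccinf{\bar{E}}$ with $f_n\to f$ and $\hbb f_n\to\HEb f$ in $L^2(E)$. Applying the core identity to $f_n-f_m$ and Cauchy--Schwarz,
\begin{equation*}
    \norm{P_j^0(f_n-f_m)}^2 \leq \ip{f_n-f_m}{\hbb(f_n-f_m)} \leq \norm{f_n-f_m}\,\norm{\hbb(f_n-f_m)}\longrightarrow 0,
\end{equation*}
so $\{P_j^0 f_n\}$ is Cauchy. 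This places $f$ in $D(\overline{P_j^0})$ and, by continuity of the norm and the inner product, yields the displayed identity.

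With this in hand, the rest is mechanical. For $f\in L^2(E)$ and $\lambda>0$, apply the identity to $g:=(\HEb+\lambda)^{-1}f$ and use $\HEb\geq 0$ (Lemma~\ref{lem:sym}):
\begin{equation*}
    \norm{\overline{P_j^0}g}^2 \leq \ip{g}{\HEb g} \leq \ip{g}{(\HEb+\lambda)g} = \ip{g}{f} \leq \lambda^{-1}\norm{f}^2,
\end{equation*}
so $\norm{\overline{P_j^0}(\HEb+\lambda)^{-1}}\leq \lambda^{-1/2}$. Since $\mathcal{A}_j$ is bounded, one has $\overline{P_j}=\overline{P_j^0}-\mathcal{A}_j$ on $D(\overline{P_j^0})$, and combining with $\norm{(\HEb+\lambda)^{-1}}\leq\lambda^{-1}$ yields $\norm{\overline{P_j}(\HEb+\lambda)^{-1}}\leq \lambda^{-1/2}+\norm{\mathcal{A}_j}_\infty\lambda^{-1}\leq C\sqrt{1+\lambda}/\lambda$. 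The expression \eqref{eq:W} writes $\overline{W}$ as a linear combination of the $\overline{P_j}$ with bounded smooth coefficients plus a bounded multiplication operator, so the same bound holds for $Y=\overline{W}$. For the infinitesimal statement, the representation $f=(\HEb+\lambda)^{-1}(\HEb+\lambda)f$ together with \eqref{eq:W-Hb-bounded} gives $\norm{Yf}\leq C\lambda^{-1/2}\norm{\HEb f}+C\sqrt{1+\lambda}\,\norm{f}$, and the coefficient of $\norm{\HEb f}$ can be made arbitrarily small by choosing $\lambda$ large. The only step requiring care is the extension of the form identity from $\Ccinf{\bar{E}}$ to $D(\HEb)$, i.e.\ ensuring that the Dirichlet boundary condition survives the closure and that the first-order operators $P_j^0$ inherit the full domain $D(\HEb)$; this is precisely what the Cauchy argument above provides.
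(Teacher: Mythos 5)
Your proposal is correct and follows essentially the same route as the paper: the engine in both cases is the quadratic-form identity $\sum_j\norm{(-\im\partial_j-bA_j)f}^2=\ip{f}{\hbb f}$ on the core $\Ccinf{\bar{E}}$ (the middle equality in the paper's estimate \eqref{eq:W-estimate}), extended to $D(h_b^E)$ by a limiting argument, followed by bounding $\overline{P_j}$ and $\overline{W}$ via the boundedness of $\mathcal{A}$, $\nabla\cdot\mathcal{A}$, $V$ and substituting $f=(h_b^E+\lambda)^{-1}\psi$ with $h_b^E\geq 0$. Your only addition is to spell out the paper's ``standard limiting argument'' through the explicit Cauchy-sequence argument for $P_j^0 f_n$, which is a welcome but not essentially different elaboration.
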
 
\begin{proof}
    We only do the proof for $Y=\overline{W}$ since the other two cases are similar. For any $f\in D(W)=D(\hb)$ we have that 
    \begin{equation}\label{eq:W-estimate}
        \norm{Wf}^2\leq C\int_E \abs{(-\im\nabla -bA(\xbf))f(\xbf)}^2 \dd \xbf +C\norm{f}^2=C\ip{f}{\hb f}+C\norm{f}^2
    \end{equation}
    for some constant $C$. By a standard limiting argument it follows that \eqref{eq:W-estimate} holds true when $\hb$ and $W$ are replaced by $h_b^E$ and $\overline{W}$, respectively. Then, \eqref{eq:W-Hb-bounded} follows from \eqref{eq:W-estimate} by taking $\psi\in L^2(\R^2)$ and writing $f=(h_b^E+\lambda)^{-1}\psi$.
    
    By choosing $\lambda$ large enough in \eqref{eq:W-Hb-bounded}, the proof is concluded.
\end{proof}
By combining the results of Corollary~\ref{cor:esssa} and of Lemma~\ref{lem:W} we can use the Kato-Rellich theorem to conclude that 
\begin{equation}\label{eq:hboE}
    \overline{\hb+W}=H^E_b= h^E_b+\overline{W}
\end{equation}
is a self-adjoint operator on $D(h^E_b)$.

In order to conclude the proof of Theorem~\ref{thm:1}\ref{thm:1Part1} it remains to show that $H^E_b$ coincides with the Friedrichs extension of the quadratic form $Q_b^E$.

\subsubsection{ The Friedrichs extension of \texorpdfstring{$Q^E_b$}{QbE}}
\label{subsec:Friedrichs}
Recall that $Q^E_b$ is the quadratic form, with form domain $\Ccinf{E}$ (note that the form domain is not $\Ccinf{\bar{E}}$), given by 
\begin{equation*}
    Q_{b}^E(f,g)=\ip{f}{H_b \, g}_{L^2(E)} =\ip{f}{\HE \, g}_{L^2(E)}.
\end{equation*}
It is not difficult to see that $Q_{b}^E$ is semibounded, i.e.\ $Q_{b}^E(f,f)\geq -M\norm{f}^2$ for some $M$. Let $\mathcal{U}$ denote the completion of $\Ccinf{E}$ with respect to the inner product 
\begin{equation*}
    \ip{f}{g}_{Q^E_b}:=Q_{b}^E(f,g)+(1+M)\ip{f}{g}_{L^2(E)},
\end{equation*}
and let $q_{b}^E$ denote the extension of this inner product to all of $\mathcal{U}$. 
 Recall that the Friedrichs extension $H_{b,F}^E$ of $Q_b^E$ is defined by
    \begin{align*}
        &D(H_{b,F}^E)=\{f\in \mathcal{U}\mid \exists \, h\in L^2(E) \textup{ s.t. } q_{b}^E(g,f)=\ip{g}{h}_{L^2(E)} \textup{ for all } g\in \mathcal{U}\},\\
        &H_{b,F}^Ef=h-(1+M)f, \qquad f \in D(H_{b,F}^E) .
    \end{align*}
By \eqref{eq:hboE} and the self-adjointness of $H_{b,F}^E$, in order to show that $H^E_{b}$ coincides with $H_{b,F}^E$, it is sufficient to show that $H_{b,F}^E$ is an extension of $\hb+W$. For this purpose, we need the following technical lemma.
\begin{lemma}\label{lem:ipHb}
    Let $\chi\colon [0,\infty)\to [0,1]$ be a smooth function such that $\chi(x)=0$ for $x\leq1$ and $\chi(x)=1$ for $x\geq 2$. For any $f\in \Ccinf{\bar{E}}$ and $n\in \N$ the functions $f_n(\xbf)\coloneqq \chi(nx_2) f(\xbf)$ satisfy 
    \begin{align*}
        \lim_{n\to\infty} \ip{f_n-f}{\hb (f_n-f)}_{L^2(E)}=0.
    \end{align*}
\end{lemma}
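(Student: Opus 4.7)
The plan is to rewrite $\ip{g_n}{\hb g_n}$, with $g_n := f_n - f$, as an ordinary $L^2$-norm of the magnetic gradient via Green's formula, then estimate the result by exploiting both the thin-strip support of $g_n$ and the boundary vanishing $f|_{\partial E}=0$ built into the definition of $\Ccinf{\bar{E}}$.

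First I would write $g_n = \eta_n f$ with $\eta_n(\xbf) := \chi(nx_2)-1$, and observe that $\eta_n \equiv 0$ on $\{x_2 \geq 2/n\}$ and $g_n(x_1, 0) = -f(x_1, 0) = 0$, so that $g_n \in \Ccinf{\bar{E}}$ and vanishes on $\partial E$. Combined with the compact $x_1$-support of $g_n$ and with $A_2 \equiv 0$, this boundary vanishing lets me integrate by parts in each magnetic momentum operator $-\im\partial_j - bA_j$ without picking up boundary contributions, giving
\begin{equation*}
    \ip{g_n}{\hb g_n}_{L^2(E)} = \norm{(-\im\nabla - bA) g_n}_{L^2(E)}^2.
\end{equation*}
The magnetic Leibniz rule, together with $\partial_1 \eta_n = 0$ and $\partial_2 \eta_n = n\chi'(nx_2)$, then decomposes the right-hand side into two kinds of contributions: ``regular'' terms of the form $\int_E \eta_n^2 |(-\im\partial_j - bA_j) f|^2\dd\xbf$, which tend to zero as $n\to\infty$ by dominated convergence since their integrands are pointwise dominated by the $L^1(E)$ function $|(-\im\nabla - bA)f|^2$ and are supported on the shrinking strip $\{0 < x_2 < 2/n\}$; plus a single ``commutator'' term $n^2 \int_E |\chi'(nx_2)|^2 |f|^2 \dd\xbf$.

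The main obstacle is the commutator term, since a naive $L^\infty$-bound for $|f|$ on its support yields only the divergent estimate $n^2 \cdot n^{-1} = n$. This is precisely the step where the boundary condition $f|_{\partial E}=0$ built into $\Ccinf{\bar{E}}$ becomes essential: the fundamental theorem of calculus gives $|f(x_1, x_2)| \leq x_2 \norm{\partial_2 f}_{L^\infty}$, so on the support $\{1/n \leq x_2 \leq 2/n\}$ of $\chi'(nx_2)$ one has $|f|^2 \leq C x_2^2 = O(n^{-2})$. Since $f$ has compact $x_1$-support and the $x_2$-slab has width $1/n$, the commutator term is $O(n^2 \cdot n^{-2} \cdot n^{-1}) = O(n^{-1}) \to 0$, which completes the plan.
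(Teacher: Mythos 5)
Your proposal is correct and follows essentially the same route as the paper: Green's formula (with the boundary term killed by $g_n|_{\partial E}=0$) to reduce to the magnetic gradient, dominated convergence on the shrinking strip for the terms involving derivatives of $f$, and the mean-value/fundamental-theorem estimate $\abs{f(\xbf)}\leq C x_2$ on the support of $\chi'(nx_2)$ to get the $O(n^{-1})$ bound on the term with $n^2\abs{\chi'(nx_2)}^2\abs{f}^2$. The only cosmetic difference is that the paper absorbs the cross term from the Leibniz expansion into the factor $2$ in its displayed inequality, which your decomposition handles the same way implicitly.
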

\begin{proof}
    By definition there exists some $a>0$ such that $f$ is zero outside $U_a$. Hence by the mean value theorem there exists some $C>0$ such that for every $\x\in E$
    \begin{equation}\label{eq:f-estimate}
        \abs{f(\xbf)}\leq C \chi_{(-a,a)}(x_1) x_2,
    \end{equation}
    where $\chi_{(-a,a)}$ denotes the indicator function on the interval $(-a,a) \subset \R$.
   By Green's theorem it follows that
    \begin{align*}
    \ip{f_n-f}{\hb(f_n-f)}
     &\leq \int_E\Big[ \abs{\chi(nx_2)-1}^2[\abs{(-\im\partial_{1}+bx_2)f(\xbf)}^2+2\abs{\partial_{2}f(\xbf)}^2]\\
        &\phantom{\leq \int_E\Big[}+2n^2\abs{\chi'(n x_2)}^2\abs{f(\xbf)}^2\Big]\dd \xbf,
    \end{align*}
    and by Lebesgue's dominated convergence theorem the first term above converges to 0 when $n\to\infty$. Furthermore, an explicit calculation using \eqref{eq:f-estimate} shows that the second term above goes like $n^{-1}$ and the proof is over.
\end{proof}

The last part of Theorem~\ref{thm:1}\ref{thm:1Part1} is the content of the following lemma. 
\begin{lemma}
    The operator $H^E_b$ coincides with the Friedrichs extension $H^E_{b,F}$.
\end{lemma}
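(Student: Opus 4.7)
The plan is to show $H^E_{b,F} = H^E_b$ by exploiting that both operators are self-adjoint: $H^E_b$ was shown in \eqref{eq:hboE} to be self-adjoint and $H^E_{b,F}$ is self-adjoint by construction, so equality will follow once I establish the inclusion $H^E_{b,F}\supseteq H^E_b$. Recalling that $H^E_b$ is by definition the closure of $\restr{H_b}{\Ccinf{\bar{E}}}$ and that $H^E_{b,F}$ is closed, this inclusion reduces to the two-step assertion that $\Ccinf{\bar{E}}\subset D(H^E_{b,F})$ and $H^E_{b,F}f = H_b f$ for every $f\in \Ccinf{\bar{E}}$.

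First I would fix $f\in \Ccinf{\bar{E}}$ and verify that $f$ lies in the form domain $\mathcal{U}$. The approximating sequence is already provided by Lemma~\ref{lem:ipHb}: take $f_n(\xbf)\coloneqq \chi(n x_2)f(\xbf)\in \Ccinf{E}$ and check that $\norm{f_n-f}_{Q_b^E}\to 0$. The $L^2$-piece is immediate from dominated convergence; the $\hb$-piece is exactly the content of Lemma~\ref{lem:ipHb}; and the $W$-piece is controlled by a Cauchy--Schwarz bound of the form $\abs{\ip{g}{Wg}}\leq C\norm{g}\bigl(\norm{g}+\ip{g}{\hb g}^{1/2}\bigr)$ coming from the splitting of $W$ in \eqref{eq:W} together with the inequality $\norm{P_j g}^2\leq \ip{g}{\hb g}$ already used in the proof of Lemma~\ref{lem:W}. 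All three contributions therefore vanish in the limit.

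Next I would identify the element $h\in L^2(E)$ required by the definition of $H^E_{b,F}$. The natural candidate is $h\coloneqq H_b f + (1+M)f$, which lies in $L^2(E)$ since $f\in \Ccinf{\bar{E}}$. For $g\in \Ccinf{E}$ the identity $q_b^E(g,f) = \ip{g}{h}_{L^2(E)}$ is just integration by parts, with no boundary contribution because $g$ vanishes near $\partial E$; it then extends to all $g\in \mathcal{U}$ by density and continuity of both sides in the form norm. This delivers $f\in D(H^E_{b,F})$ with $H^E_{b,F}f = H_b f$, concluding the reduction and hence the lemma.

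The main obstacle, and the only genuinely non-trivial step, is the approximation $f_n\to f$ in the form norm. Although $f\in \Ccinf{\bar{E}}$ satisfies $f(x_1,0) = 0$ (this is built into the definition of $\Ccinf{\bar{E}}$, since the support must sit inside $(-a,a)\times(0,a)$), the derivative $n\chi'(nx_2)$ of the cut-off is singular, and one must use the linear vanishing of $f$ at $\partial E$ to absorb the factor $n^2$ appearing in $\abs{\chi'(nx_2)}^2\abs{f}^2$. This is precisely what Lemma~\ref{lem:ipHb} accomplishes, so once that lemma is in hand the remainder of the argument is routine.
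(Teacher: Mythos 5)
Your proposal is correct and follows essentially the same route as the paper: reduce to showing that $H^E_{b,F}$ extends $\restr{H_b}{\Ccinf{\bar{E}}}$ (whence it extends its closure $H^E_b$, and equality follows from self-adjointness of both), prove $\Ccinf{\bar{E}}\subset\mathcal{U}$ via the cut-off approximation of Lemma~\ref{lem:ipHb} together with the bound \eqref{eq:W-estimate} on the $W$-part, and then verify $q_b^E(g,f)=\ip{g}{H_bf}_{L^2(E)}+(1+M)\ip{g}{f}_{L^2(E)}$ for $g\in\Ccinf{E}$ before extending by density. The only cosmetic difference is that you justify this last identity by integration by parts with $g$ vanishing near $\partial E$, while the paper passes through the approximants $f_n$ using the symmetry of $\hb+W$ and $L^2$ convergence; both amount to the same Green's-formula step.
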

\begin{proof}
    Since $\Ccinf{E}$ is dense in $\mathcal{U}$ with respect to $\norm{\cdot}_\mathcal{U}=q_{b}^E(\cdot,\cdot)^{1/2}$ and
    \begin{equation*}
        Q_{b}^E(f,g)=\ip{f}{(\hb+W)g}_{L^2(E)} \qquad \forall \, f,g \,\in \Ccinf{E} \, ,
    \end{equation*}
    it follows from Lemma~\ref{lem:ipHb} and \eqref{eq:W-estimate} that $\Ccinf{\bar{E}}\subset \mathcal{U}$. Then, consider $f\in \Ccinf{\bar{E}}$ and $g\in \Ccinf{E}$, and let $f_n:=\chi_n f$, where $\chi_n$ is defined as in the proof of Lemma~\ref{lem:ipHb}. By the symmetry of $\hb+W$ and the $L^2$ convergence of $f_n$ to $f$ we get
    \begin{equation}
    \label{eq:aux1}
        q_{b}^E(g,f)=\ip{g}{(\hb+W) f}_{L^2(E)}+(1+M)\ip{g}{f}_{L^2(E)}.
    \end{equation}
    Since $\Ccinf{E}$ is dense in $\mathcal{U}$, the identity \eqref{eq:aux1} extends by continuity to all $g\in \mathcal{U}$. This shows that $\Ccinf{\bar{E}}\subseteq D(H_{b,F}^E)$ and that $H_{b,F}^E$ is an extension of $\hb+W$.
\end{proof}

{
\subsection{Proof of \texorpdfstring{Theorem~\ref{thm:1}\ref{thm:1Part2}}{Theorem 1.1(ii)}}
In order to prove Theorem~\ref{thm:1}\ref{thm:1Part2}, we first show \eqref{eq:HEexpansion} for $(\HEb+\lambda)^{-1}$, i.e.\ in the case when $W=0$, using the explicit structure of $(\HEb+\lambda)^{-1}$ given by \eqref{eq:HE_resolvent} and the estimate on the integral kernels provided in Appendix~\ref{s:intop}. Then, the general case is obtained using the second resolvent identity. 

\begin{lemma}\label{lem:hb123}
    Let $\Omega\subset \R$ be compact. There exist $\lambda_0>0$, $\delta>0$, and $C>0$ such that for all $\lambda\geq\lambda_0$ and $b\in \Omega$, 
    \begin{align}
        (h_b+\lambda)^{-1}(\xbf,\xbf')=\e^{\im b\phi(\xbf,\xbf')}(R_E(\xbf,\xbf')+\mathcal{K}_{b}(-\lambda)(\xbf,\xbf')) \label{eq:hbl1}
    \end{align}
    where $\mathcal{K}_{b}(-\lambda)$ is a smooth function of $b\in \Omega$ satisfying
\begin{equation}\label{eq:Kb}
    \abslr{\frac{\dd^m}{\dd b^m}\mathcal{K}_{b}(-\lambda)(\xbf,\xbf') }\leq C (16m^2+1)^m \lambda^{-1}\e^{-\delta\sqrt{\lambda}\abs{\xbf-\xbf'}}.
\end{equation}
\end{lemma}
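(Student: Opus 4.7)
The natural starting point is the Neumann series \eqref{eq:HE_resolvent}, which for $\lambda$ large enough reads
\begin{equation*}
(\HEb+\lambda)^{-1}=S_b(-\lambda)+\sum_{n=1}^{\infty}S_b(-\lambda)(-T_b(-\lambda))^n.
\end{equation*}
Since the $n=0$ term has integral kernel exactly $e^{\im b\phi(\xbf,\xbf')}R_E(\xbf,\xbf')$ by \eqref{eq:Slambda1}, this already produces the $R_E$-contribution in \eqref{eq:hbl1}, and one is led to define
\begin{equation*}
\mathcal{K}_b(-\lambda)(\xbf,\xbf')\coloneqq e^{-\im b\phi(\xbf,\xbf')}\sum_{n=1}^{\infty}[S_b(-\lambda)(-T_b(-\lambda))^n](\xbf,\xbf').
\end{equation*}
The task reduces to establishing pointwise convergence of the series, smoothness in $b\in\Omega$, and the derivative bound \eqref{eq:Kb}.

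The algebraic mechanism underlying everything is the cocycle identity for the Peierls phase, which a direct computation from \eqref{eq:phi} yields in the form
\begin{equation*}
\phi(\xbf,\z)+\phi(\z,\ybf)-\phi(\xbf,\ybf)=-\tfrac{1}{2}\det(\z-\xbf,\,\ybf-\xbf),
\end{equation*}
i.e.\ (minus) the magnetic flux of the unit field through the triangle with vertices $\xbf,\z,\ybf$. Iterating through the $n$ intermediate integrations in $S_b(-T_b)^n$ allows the global phase $e^{\im b\phi(\xbf,\xbf')}$ to factor out cleanly, leaving behind $n$ residual triangle-flux phases depending only on neighbouring integration variables. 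After this extraction the integrand no longer carries oscillation growing with $|\xbf-\xbf'|$; only MacDonald-type exponential decay between consecutive vertices survives.

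The $m=0$ estimate then follows from the pointwise and operator-norm bounds on $S_b(-\lambda)$ and $T_b(-\lambda)$ established in Appendix~\ref{s:intop}: both kernels decay like $K_0(\sqrt{\lambda}|\xbf-\ybf|)$, and $T_b(-\lambda)$ is in addition of operator norm $O(\lambda^{-1/2})$. Convolving exponentially decaying functions preserves exponential decay at a slightly smaller rate, giving
\begin{equation*}
\abs{[S_b(-\lambda)(-T_b(-\lambda))^n](\xbf,\xbf')}\le C\lambda^{-1}(C\lambda^{-1/2})^n e^{-\delta\sqrt{\lambda}\abs{\xbf-\xbf'}}
\end{equation*}
uniformly in $b\in\Omega$. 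Choosing $\lambda_0$ with $C\lambda_0^{-1/2}<1/2$ turns the series into a convergent geometric one and proves \eqref{eq:Kb} for $m=0$.

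For smoothness in $b$ and the factor $(16m^2+1)^m$ I would differentiate the series termwise. After the factorisation of the Peierls phase, the $b$-dependence of each term enters only through (i) a polynomial of degree at most two in $b$ per $T_b$-factor, coming from the $bA_t$ terms in \eqref{eq:Tlambda}, and (ii) the residual triangle-flux phases. Leibniz's rule distributes the $m$ derivatives across the $n+1$ factors; each derivative of a residual phase brings down a triangle flux pointwise bounded by the product of two neighbouring inter-vertex distances, each of which is absorbed into the adjacent exponential decays at a cost $\lambda^{-1/2}$, while each $b$-derivative of a $T_b$-polynomial simply lowers its degree. The \emph{main obstacle} is the combinatorial bookkeeping required to extract the precise factor $(16m^2+1)^m$ rather than a crude $m!(n+1)^m$ bound: the sharper estimate exploits the fact that each $T_b$-factor vanishes after three $b$-derivatives and requires a careful multinomial accounting of how the $m$ derivatives can be placed on the $n+1$ factors, combined with a Stirling-type inequality. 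Once this step is in place, summing the resulting geometric series in $n$ completes the proof with a slightly reduced decay rate $\delta$ and the stated constant.
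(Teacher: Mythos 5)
Your proposal follows essentially the same route as the paper: the Neumann series \eqref{eq:HE_resolvent}, the Peierls phase composition rule \eqref{eq:phis} with $\abs{\phi_s(\xbf,\ybf)}\leq\abs{\xbf}\abs{\ybf}$ to factor out $\e^{\im b\phi(\xbf,\xbf')}$, and termwise $b$-differentiation where residual phase derivatives are absorbed into the exponential decay (via $x^n\e^{-ax}\leq n^n a^{-n}$) and each $t_b$-factor is a degree-two polynomial in $b$; the paper obtains the $(16m^2+1)^m$ factor directly from this mechanism rather than through the heavier combinatorics you anticipate. The only minor inaccuracy is attributing operator norm $\BigO(\lambda^{-1/2})$ to $T_b(-\lambda)$ (Lemma~\ref{lem:T} gives $\BigO(\lambda^{-1})$, and the per-factor smallness really comes from the convolution of the exponentially decaying kernels), which does not affect the argument.
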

\begin{proof}
Let $b$ and $\lambda$ be arbitrary. In the following we consider products of the form $S_b(-\lambda)(-T_b(-\lambda))^n$ for $n\geq1$ and their derivatives with respect to $b$. Since we want to isolate a phase factor $\e^{\im b\phi(\xbf,\xbf')}$, we define $\phi_s(\xbf,\ybf)\coloneqq-\frac{1}{2}(x_1y_2-x_2y_1)$
and note that (cf.\ \eqref{eq:phi})
\begin{equation}\label{eq:phis}
    \phi(\xbf,\ybf)+\phi(\ybf,\xbf')=\phi(\xbf,\xbf')+\phi_s(\xbf-\ybf,\ybf-\xbf').
\end{equation}
A key property of $\phi_s$ is that $ \abslr{\phi_s(\xbf,\ybf)}\leq \abs{\xbf}\abs{\ybf}$.

By \eqref{eq:Tlambda} and \eqref{eq:T-estimate} we may write
\begin{equation}
\label{eq:forT}
    T_b(-\lambda)(\xbf,\ybf)=\e^{\im b\phi(\xbf,\ybf)} t_b(-\lambda)(\xbf,\ybf),
\end{equation}
where $t_b(-\lambda)$ is differentiable with respect to $b$ and there exists some constant $C>0$ (depending on $\Omega$) such that for all $\lambda>1$, $m\in \N$, and $b\in \Omega$,
\begin{align}\label{eq:tb}
  \abslr{\frac{\dd^m }{\dd b^m}t_b(-\lambda)(\xbf,\ybf)}\leq C \e^{-\frac{\sqrt{\lambda}}{2}\abs{\xbf-\ybf}}.
\end{align}
In fact, the left hand side above is identically zero for $m\geq 3$, cf.\ \eqref{eq:Tlambda}.
Combining \eqref{eq:forT} and \eqref{eq:Slambda1} with the composition rule for the Peierls phase \eqref{eq:phis}, gives for every $n\in \N$, 
\begin{align*}
    [S_b(-\lambda)(-T_b(-\lambda))^n](\xbf,\xbf')= \e^{\im b\phi(\xbf,\xbf')}K_{b,n}(-\lambda)(\xbf,\xbf')\, ,
\end{align*}
where the kernel $K_{b,n}(-\lambda)$ is defined by
\begin{align*}
    K_{b,n}(-\lambda)(\xbf,\xbf')=(-1)^n\int_E\dots \int_E k_{b,n}(-\lambda)(\xbf,\ybf_1,\dots,\ybf_{n},\xbf')\dd \ybf_1\dots\ybf_{n}
\end{align*}
with 
\begin{align*}
    k_{b,n}(-\lambda)(\xbf,\ybf_1,\dots,\ybf_{n},\xbf')&=\e^{\im b\phi_s(\xbf-\ybf_1,\ybf_1-\ybf_2)+\dots+\im b\phi_s(\ybf_{n-1}-\ybf_{n},\ybf_{n}-\xbf')}\\
    &\quad \cdot R_E(\xbf,\ybf_1)t_b(-\lambda)(\ybf_1,\ybf_2)\dots t_b(-\lambda)(\ybf_{n},\xbf').
\end{align*}
Next we show that the derivatives of $k_{b,n}$ with respect to $b$ can be bounded by functions which are integrable on $E^n$ for $b\in\Omega$. This is a consequence of the fact that all kernels appearing in $k_{b,n}$ decay exponentially away from the diagonal and thus the derivatives of the phase factor in $k_{b,n}$ can be bounded uniformly on $E^n$. Specifically, by \eqref{eq:tb}, \eqref{eq:S-estimate}, and the fact that $x^n \e^{-ax}\leq n^n a^{-n}$ for all $x\geq 0$, it follows that there exists $C>0$ such that for all $b\in \Omega$, $\lambda>1$ and $m\in \N$ we have
\begin{align*}
    \abslr{\frac{\dd^m}{\dd b^m}k_{b,n}(-\lambda)(\xbf,\ybf_1,\dots,\ybf_{n-1},\xbf') }&\leq C^{n+1}(16m^2+1)^m (1+\abs{\ln(\sqrt{\lambda}\abs{\xbf-\ybf_1})})\\
    &\quad \cdot \e^{-\frac{\sqrt{\lambda}}{4}(\abs{\xbf-\ybf_1}+\dots+\abs{\ybf_{n}-\xbf'})}.
\end{align*} 
Hence, by Lebesgue's dominated convergence theorem,
\begin{equation*}
   \abslr{ \frac{\dd^m}{\dd b^m}K_{b,n}(-\lambda)(\xbf,\xbf') }\leq  \frac{C^{n+1}}{\lambda^{n}}(16m^2+1)^m \e^{-\frac{\sqrt{\lambda}}{8}\abs{\xbf-\xbf'}},
\end{equation*}
where $C$ is some generic constant. Taking $\lambda$ sufficiently large we obtain from \eqref{eq:HE_resolvent} that 
\begin{align*}
    (h_b+\lambda)^{-1}(\xbf,\xbf')&=\e^{\im b\phi(\xbf,\xbf')}\Big(R_E(\xbf,\xbf')+\sum_{n=1}^\infty K_{b,n}(-\lambda)(\xbf,\xbf')\Big)\nonumber\\
    &\eqqcolon\e^{\im b\phi(\xbf,\xbf')}(R_E(\xbf,\xbf')+\mathcal{K}_{b}(-\lambda)(\xbf,\xbf')) 
\end{align*}
where $\mathcal{K}_{b}(-\lambda)$ is a smooth function of $b\in \Omega$ satisfying \eqref{eq:Kb}.
\end{proof}
Repeating now the arguments in the proof of Lemma~\ref{lem:hb123} but with $S_b(-\lambda)$ replaced by $\overline{W}S_b(-\lambda)$ and using \eqref{eq:1divx-estimate} instead of \eqref{eq:S-estimate} we obtain the following result. 
\begin{lemma}\label{lem:whb123}
    Let $\Omega\subset \R$ be compact. There exist $\lambda_0>0$, $\delta>0$ and $C>0$ such that for all $\lambda\geq\lambda_0$ and $b\in \Omega$, 
    \begin{align}\label{eq:Whbl1}
        -\overline{W}(h_b+\lambda)^{-1}(\xbf,\xbf')&=\e^{\im b\phi(\xbf,\xbf')}(\mathcal{W}(-\lambda)(\xbf,\xbf')+\tilde{\mathcal{K}}_{b}(-\lambda)(\xbf,\xbf'))
    \end{align}
    where $\e^{\im b\phi(\xbf,\xbf')}\mathcal{W}(-\lambda)(\xbf,\xbf')$ denotes the integral kernel of $-\overline{W}S_b(-\lambda)$ and the integral kernel $\tilde{\mathcal{K}}_{b}(-\lambda)$ is a smooth function of $b\in \Omega$ satisfying
    \begin{equation}\label{eq:Kbtilde}
        \abslr{\frac{\dd^m}{\dd b^m}\tilde{\mathcal{K}}_{b}(-\lambda)(\xbf,\xbf') }\leq C(16m^2+1)^m\e^{-\delta \sqrt{\lambda}\abs{\xbf-\xbf'}}.
    \end{equation}    
\end{lemma}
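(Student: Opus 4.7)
The plan is to mimic the proof of Lemma~\ref{lem:hb123}, replacing $S_b(-\lambda)$ by $-\overline{W}S_b(-\lambda)$ throughout. Starting from the Neumann expansion \eqref{eq:HE_resolvent} multiplied on the left by $-\overline{W}$, one gets
\begin{equation*}
    -\overline{W}(h_b+\lambda)^{-1} = \sum_{n=0}^\infty (-\overline{W}S_b(-\lambda))(-T_b(-\lambda))^n.
\end{equation*}
The $n=0$ contribution is, by definition, $\e^{\im b\phi(\xbf,\xbf')}\mathcal{W}(-\lambda)(\xbf,\xbf')$. The remaining task is to package $\sum_{n\geq 1}$ into $\e^{\im b\phi(\xbf,\xbf')}\tilde{\mathcal{K}}_b(-\lambda)(\xbf,\xbf')$ satisfying \eqref{eq:Kbtilde}.

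First I would factor out the phase $\e^{\im b\phi(\xbf,\xbf')}$ from each $n$-fold integral exactly as in Lemma~\ref{lem:hb123}, using the composition law $\phi(\xbf,\ybf)+\phi(\ybf,\xbf')=\phi(\xbf,\xbf')+\phi_s(\xbf-\ybf,\ybf-\xbf')$ iteratively along the chain $\xbf\to\ybf_1\to\cdots\to\ybf_n\to\xbf'$. What remains inside the integrand is a residual oscillating phase built from $\phi_s$ of consecutive differences, multiplied by $\mathcal{W}(-\lambda)(\xbf,\ybf_1)\prod_{j}t_b(-\lambda)(\ybf_j,\ybf_{j+1})$. Then I would differentiate in $b$ under the integral sign; each derivative either hits the residual phase (producing a polynomial factor in the $\ybf_j$-differences, controlled via $|\phi_s(u,v)|\leq|u||v|$) or one of the $t_b$ factors (which is at most quadratic in $b$, cf.\ \eqref{eq:Tlambda} and \eqref{eq:tb}).

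The polynomial growth produced by the phase derivatives is absorbed into the half-exponential decay using $x^k\e^{-ax}\leq(k/a)^k$, exactly as in Lemma~\ref{lem:hb123}. The key substitution is that \eqref{eq:1divx-estimate} replaces \eqref{eq:S-estimate}: the factor $\mathcal{W}(-\lambda)(\xbf,\ybf_1)$ now carries a $|\xbf-\ybf_1|^{-1}$-type singularity at the diagonal rather than a logarithmic one. This singularity is still locally integrable in $\R^2$, so the $\ybf_1$-integration converges, and iterating the triangle inequality on the exponents across the remaining convolutions yields a pointwise bound of the form $C^{n+1}\lambda^{-n/2}(16m^2+1)^m\e^{-\delta\sqrt{\lambda}|\xbf-\xbf'|}$. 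Choosing $\lambda_0$ so large that $C\lambda_0^{-1/2}<1$ makes the series convergent and smooth in $b\in\Omega$, delivering \eqref{eq:Whbl1}--\eqref{eq:Kbtilde}.

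The main obstacle I anticipate is arithmetic rather than conceptual: partitioning each factor $\e^{-\sqrt{\lambda}|\cdot|/2}$ into three pieces used to simultaneously (i) absorb the $|\xbf-\ybf_1|^{-1}$ singularity locally in the $\ybf_1$-integration, (ii) sustain a geometric series in $\lambda^{-1/2}$ uniformly for $b\in\Omega$ and $m\in\N$, and (iii) retain a clean exponential rate $\delta\sqrt{\lambda}$ in the final separation $|\xbf-\xbf'|$. Once this bookkeeping is in place, the rest is a verbatim rerun of Lemma~\ref{lem:hb123}.
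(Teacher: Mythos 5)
Your proposal is correct and follows essentially the same route as the paper, whose proof is precisely to rerun the argument of Lemma~\ref{lem:hb123} with $S_b(-\lambda)$ replaced by $\overline{W}S_b(-\lambda)$ and with \eqref{eq:1divx-estimate} in place of \eqref{eq:S-estimate}, the locally integrable $\abs{\xbf-\ybf_1}^{-1}$ singularity being absorbed exactly as you describe. The only bookkeeping detail you gloss over is that the $b$-derivative can also hit the residual (phase-stripped) kernel of $-\overline{W}S_b(-\lambda)$ itself, which depends polynomially on $b$ through the $bA_t$ term; this is handled identically to the $t_b$ factors and does not affect the argument.
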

Using Lemma~\ref{lem:hb123} and Lemma~\ref{lem:whb123} we are now able to finish the proof of Theorem~\ref{thm:1}.
\begin{proof}[Proof of Theorem~\ref{thm:1}\ref{thm:1Part2}]
    If $\lambda>0$ is sufficiently large, then the second resolvent identity together with \eqref{eq:1divx-estimate} imply that 
\begin{equation}\label{eq:resolvent1}
    (\HE+\lambda)^{-1}=(\HEb+\lambda)^{-1}(1+\overline{W}(\HEb+\lambda)^{-1})^{-1}.
\end{equation}
Therefore, to extend the result in Lemma~\ref{lem:hb123} to the full resolvent $(\HE+\lambda)^{-1}$ we have to consider products of the form $(\HEb+\lambda)^{-1}(-\overline{W}(\HEb+\lambda)^{-1})^{n}$ for $n\geq 1$.  

By repeating the arguments used in the proof of Lemma~\ref{lem:hb123}, but using instead \eqref{eq:hbl1} and \eqref{eq:Whbl1} together with the estimates \eqref{eq:Kb} and \eqref{eq:Kbtilde} we obtain that   
\begin{equation}\label{eq:hbwhb}
    (\HEb+\lambda)^{-1}(-\overline{W}(h_b+\lambda)^{-1})^n(\xbf,\ybf)=\e^{\im b\phi(\xbf,\ybf)}(\mathcal{V}_n(-\lambda)(\xbf,\ybf)+\tilde{\mathcal{K}}_{b,n}(-\lambda)(\xbf,\ybf)),
\end{equation}
where $\e^{\im b\phi(\xbf,\ybf)} \mathcal{V}_n(-\lambda)(\xbf,\ybf)$ is the kernel of $S_b(-\lambda)(-\overline{W}S_b(-\lambda))^n$ and $\tilde{\mathcal{K}}_{b,n}(-\lambda)$ is a smooth function of $b\in \Omega$ which satisfies 
\begin{align*}
    \abslr{\frac{\dd^m }{\dd b^m}\tilde{\mathcal{K}}_{b,n}(-\lambda)(\xbf,\ybf)}\leq C(16m^2+1)^m (n+2)^m C_\lambda^{n} \e^{-\delta\sqrt{\lambda}\abs{\xbf-\ybf}}
\end{align*}
where $C_\lambda \to 0$ as $\lambda \to \infty$. For the kernel $\mathcal{V}_n(-\lambda)$ one obtains a similar estimate with almost the same arguments. The only change is that both $S_b(-\lambda)$ and $\overline{W}S_b(-\lambda)$ have kernels that are not bounded on $E\times E$. However, this problem can be circumvented by using Hölder's inequality together with the specific estimates in \eqref{eq:S-estimate} and \eqref{eq:1divx-estimate}. Hence we have 
\begin{equation*}
    \abslr{\frac{\dd^m}{\dd b^m}\mathcal{V}_n(-\lambda)(\xbf,\xbf') }\leq C(16m^2+1)^m\tilde{C}_\lambda^{n}\e^{-\delta \sqrt{\lambda}\abs{\xbf-\ybf}}\, ,
\end{equation*}
where $\tilde{C}_\lambda$ goes to $0$ for $\lambda\to\infty$. By choosing $\lambda$ sufficiently large and summing \eqref{eq:hbwhb} from $n=1$ to $\infty$ gives a kernel which is smooth as a function of $b\in \Omega$. By \eqref{eq:resolvent1} the proof is complete.
\end{proof}

\section{Smoothness of bulk and edge current densities}
\label{sec:regularity}
The aim of this section is to prove Proposition~\ref{prop:2} and Theorem~\ref{thm:smoothnes}. The proofs are based on the explicit integral kernel estimates in Appendix~\ref{s:intop}.
We start with a simple but fundamental lemma showing norm exponential localization properties of the operators we deal with.

\begin{lemma}\label{lem:eRe}
    Let $\mathcal{R}_k=(-\overline{W})^k\Rb$ for $k\in \{0,1\}$. Then, for any $\lambda>0$ sufficiently large there exists $\epsilon >0$, such that for any
    \[Y\in \{S_b(-\lambda),T_b(-\lambda),(1+T_b(-\lambda))^{-1}, \mathcal{R}_1,(1-\mathcal{R}_1)^{-1},\mathcal{R}_2,(1-\mathcal{R}_2)^{-1}\}\]
    the operator $\e^{-\epsilon \abs{X}}Y\e^{\epsilon\abs{X}}$ is bounded on $L^2(E)$.
\end{lemma}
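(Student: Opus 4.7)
The plan is to combine a Combes--Thomas / Schur argument with a Neumann-series trick for the inverses. For each \emph{simple} operator $Y$ in the list (namely $S_b(-\lambda)$, $T_b(-\lambda)$, $\mathcal{R}_1$, $\mathcal{R}_2$), I would start from the integral-kernel estimates assembled in Appendix~\ref{s:intop} (in the spirit of Lemma~\ref{lem:whb123}), which give bounds of the form $\abs{Y(\x,\y)}\le C_\lambda\, f(\abs{\x-\y})\,\e^{-\alpha_\lambda \abs{\x-\y}}$, with $\alpha_\lambda\to\infty$ as $\lambda\to\infty$ and $f$ an integrable majorant with at most a logarithmic singularity at the origin. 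The elementary inequality $\abs{\y}-\abs{\x}\le\abs{\x-\y}$ then yields the pointwise estimate
\[
  \abs{\e^{-\epsilon\abs{\x}}\,Y(\x,\y)\,\e^{\epsilon\abs{\y}}}\le\abs{Y(\x,\y)}\,\e^{\epsilon\abs{\x-\y}},
\]
so that for $\epsilon<\alpha_\lambda$ the right-hand side still satisfies a uniform Schur criterion on $E\times E$; consequently $\e^{-\epsilon\abs{X}}\,Y\,\e^{\epsilon\abs{X}}$ is bounded on $L^2(E)$, with a norm that moreover tends to $0$ as $\lambda\to\infty$.

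To handle the two inverse operators $(1+T_b(-\lambda))^{-1}$ and $(1-\mathcal{R}_k)^{-1}$, I would exploit the algebraic identity
\[
  \e^{-\epsilon\abs{X}}\,Y^{n}\,\e^{\epsilon\abs{X}}=\bigl(\e^{-\epsilon\abs{X}}\,Y\,\e^{\epsilon\abs{X}}\bigr)^{n},
\]
which converts a Neumann expansion in $Y$ into a Neumann expansion in its conjugate. Choosing $\lambda$ large enough so that the Schur bounds of the first step force $\norm{\e^{-\epsilon\abs{X}}\,Y\,\e^{\epsilon\abs{X}}}<1$ simultaneously for $Y\in\{T_b(-\lambda),\mathcal{R}_1,\mathcal{R}_2\}$, the Neumann series converges in operator norm; its sum coincides with $\e^{-\epsilon\abs{X}}(1\pm Y)^{-1}\e^{\epsilon\abs{X}}$ on $L^2(E)$, and the claimed boundedness follows. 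Finally, picking a single $\epsilon>0$ that is compatible with all the Schur bounds delivers the uniform $\epsilon$ asked for in the statement.

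The step I expect to be most delicate is establishing the kernel decay for $\mathcal{R}_2=(-\overline{W})^{\,2}\Rb$: since $\overline{W}$ is first-order while $\Rb$ smooths by only two derivatives, a naive application of two factors of $\overline{W}$ to the kernel of $\Rb$ sits exactly at the borderline of what the regularity estimates afford, and the second factor of $\overline{W}$ need not even act on vectors in $D(\overline{W})$ in a direct operator-theoretic reading. In parallel with the proof of Lemma~\ref{lem:whb123}, the remedy I would follow is to write $\Rb=S_b(-\lambda)(1+T_b(-\lambda))^{-1}$ and to distribute the two factors of $\overline{W}$ across the geometric series in $T_b(-\lambda)$, using at each step only the kernel bound for $\overline{W}S_b(-\lambda)$ together with the ordinary bound for $T_b(-\lambda)$; the resulting convolution products retain the exponential decay in $\abs{\x-\y}$ needed for the Schur estimate of the first paragraph, and the argument closes.
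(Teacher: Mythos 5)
Your main argument is exactly the paper's proof: the authors dispose of the lemma in one line by invoking the kernel bounds \eqref{eq:S-estimate}, \eqref{eq:T-estimate}, \eqref{eq:Rb-estimate}, \eqref{eq:1divx-estimate}, the triangle inequality $\abs{\ybf}\leq\abs{\xbf}+\abs{\xbf-\ybf}$, and the Schur test, and your Neumann-series treatment of $(1+T_b(-\lambda))^{-1}$ and $(1-\mathcal{R}_k)^{-1}$ (conjugated norm $<1$ for $\lambda$ large, plus the identity $\e^{-\epsilon\abs{X}}Y^{n}\e^{\epsilon\abs{X}}=(\e^{-\epsilon\abs{X}}Y\e^{\epsilon\abs{X}})^{n}$, checked say on compactly supported functions since the weights are unbounded) is precisely what their ``simple calculations'' hide. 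So the core of the proposal is correct and coincides with the paper.

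The one place where you diverge is your last paragraph on $\mathcal{R}_2$. Read literally as $(\overline{W})^{2}\Rb$ this operator is indeed problematic, but your proposed remedy does not work as stated: in the expansion $\Rb=S_b(-\lambda)\sum_{n\geq 0}(-T_b(-\lambda))^{n}$ there is only \emph{one} factor of $S_b(-\lambda)$, so you cannot give each copy of $\overline{W}$ its own $S_b(-\lambda)$ to act on; you are still left with $(\overline{W})^{2}S_b(-\lambda)$, whose kernel has a singularity of order $\abs{\xbf-\ybf}^{-2}$ on the diagonal, which is not locally integrable in two dimensions, so the Schur argument breaks down there. However, this case never actually needs to be handled: the estimates the paper cites cover only one power of $\overline{W}$, and the operators used downstream (in the proof of Proposition~\ref{prop:2}, via \eqref{eq:HE_resolvent} and \eqref{eq:resolvent1}) are $\Rb$, $-\overline{W}\Rb$, and the associated inverses $(1-\cdot)^{-1}$. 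The subscripts in the lemma's list are best read as an index shift (so that the two operators in question are $(-\overline{W})^{0}\Rb$ and $(-\overline{W})^{1}\Rb$, consistently with the stated restriction $k\in\{0,1\}$), and then your first two paragraphs already give the complete proof.
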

\begin{proof}
    The results follow from simple calculations using the explicit estimates in \eqref{eq:S-estimate}, \eqref{eq:T-estimate}, \eqref{eq:Rb-estimate}, and \eqref{eq:1divx-estimate}, together with the triangle inequality and the Schur test.
\end{proof}

\subsection{Proof of \texorpdfstring{Proposition~\ref{prop:2}}{Proposition 1.4}}

    In the following we use the shorthand notation $R_{b}^E=\RbE$ and $\mathcal{R}_k=(-\overline{W})^k\Rb$, for $k\in \{0,1\}$. Let $\lambda>0$ be large enough such that the statements in Lemma~\ref{lem:eRe} hold true for some $\epsilon>0$. Then, it is possible to show that for any $N\in \N$,
    \begin{equation}\label{eq:RboEN}
        (R_{b}^E)^N=\mathcal{R}_0\sum_{k_j\in \{0,1\}}\mathcal{R}_{k_1}\cdots \mathcal{R}_{k_{N-1}} B_{(k_1,\dots, k_{N-1})},
    \end{equation}
   where the sum in \eqref{eq:RboEN} goes over all the possible $N-1$-tuples constructed from the set $\{0,1\}$, $B_{(k_1,\dots k_{N-1})}$ is a bounded operator for which $\e^{-\epsilon\abs{X}}B_{(k_1,\dots k_{N-1})}\e^{\epsilon\abs{X}}$ is also bounded for all $N-1$-tuples $(k_1,\dots ,k_{N-1}$), and the sum is defined as $(1-\mathcal{R}_1)^{-1}$ when $N=1$. Let us show this for $N=3$ as the cases for $N\neq 3$ follow exactly the same strategy. Starting from \eqref{eq:resolvent1}, that is  $R_{b}^E=\mathcal{R}_0(1-\mathcal{R}_1)^{-1}$. and expanding $(1-\mathcal{R}_1)^{-1}$ to first and second order we obtain
    \begin{align*}
        (R_{b}^E)^3&=\mathcal{R}_0(1+\mathcal{R}_1+\mathcal{R}_1^2(1-\mathcal{R}_1)^{-1})\mathcal{R}_0(1+\mathcal{R}_1(1-\mathcal{R}_1)^{-1})\mathcal{R}_0(1-\mathcal{R}_1)^{-1}\\
        &=\mathcal{R}_0^3B_{(0,0)}+\mathcal{R}_0^2\mathcal{R}_1B_{(0,1)}+\mathcal{R}_0\mathcal{R}_1\mathcal{R}_0B_{(1,0)}+\mathcal{R}_0\mathcal{R}_1^2B_{(1,1)}
    \end{align*}
    for some bounded operators $B_{(0,0)},B_{(0,1)},B_{(1,0)},B_{(1,1)}$, which by Lemma~\ref{lem:eRe} are such that $\e^{-\epsilon\abs{X}}B_{(k_i,k_j)}\e^{\epsilon\abs{X}}$ is bounded for all $2$-tuples $(k_i,k_j)\in \{0,1\}^2$.

    Since $\lambda$ and $b$ are fixed, we use the shorthand notation $S_b(-\lambda)\equiv S$ and $T_b(-\lambda)\equiv T$  for the remaining part of this proof. By \eqref{eq:HE_resolvent} we can express the resolvent of the edge Hamiltonian in terms of $S$ and $T$, that is $\mathcal{R}_0=S(1+T)^{-1}$ and $\mathcal{R}_1=-\overline{W}S(1+T)^{-1}$, thus obtaining
    \begin{equation*}
        \mathcal{R}_0\mathcal{R}_{k_1}\dots \mathcal{R}_{k_{N-1}}=S(1+T)^{-1}(-\overline{W})^{k_1}S(1+T)^{-1}\dots (-\overline{W})^{k_{N-1}}S(1+T)^{-1}.
    \end{equation*}
    Recall that $k_i \in \{0,1\}$, for all $0\leq i \leq N$.
    By writing $(1+T)^{-1}=1-T+\dots +(-T)^{n}+(-T)^{n+1}(1+T)^{-1}$ and using the same argument that gave \eqref{eq:RboEN}, we have that $ \mathcal{R}_0\mathcal{R}_{k_1}\dots \mathcal{R}_{k_{N-1}}$ can be written as a sum of terms of the form 
    \begin{equation}
    \label{eq:aux2}
       S(-T)^{a_0}(-\overline{W})^{k_1}S(-T)^{a_1}(-\overline{W})^{k_2}S(-T)^{a_2}\cdots(-\overline{W})^{k_{n}}S)(-T)^{a_{n}}B
    \end{equation}
    where $n\in \{0,\dots,N-1\}$ and $a_j\in \{0,\dots,N-1\}$ with $n+a_0+\dots+a_n=N-1$ (when $n=0$ the term is of the form $S(-T)^{N-1}$) and $B$ is some bounded operator. By Lemma~\ref{lem:eRe} and \eqref{eq:RboEN}, $\e^{-\epsilon\abs{X}}B\e^{\epsilon\abs{X}}$ is bounded. Note that $n$ counts the number of times an operator of the form $(-\overline{W})^{k_j}S$ appears in \eqref{eq:aux2}. Likewise, $a_0+\dots+a_n$ counts the number of times the operator $T$ appears in \eqref{eq:aux2}. 
    The key observation is that every term in the sum in \eqref{eq:RboEN} can be written as a product of the form 
    \begin{equation*}
        SY_{2}\dots Y_{N} B
    \end{equation*} 
    where $Y_j$ is either $S$, $T$, or $\overline{W}S$ and $B$ is a bounded operator, such that $\e^{-\epsilon\abs{X}}B\e^{\epsilon\abs{X}}$ is also bounded. 
    
    By the proof of Lemma~\ref{lem:HX} below, it follows that $\im[\HE,X_j]S=2\overline{P_j}S$. Hence, it only remains to show that for any $m\in \N$, there exists $N$ sufficiently large, such that any operator of the form
    \begin{equation*}
        gY_{1}Y_{2}\dots Y_{N} \e^{\epsilon\abs{X}},
    \end{equation*} 
    where $Y_i$, $i \in \{1,\dots,N\}$, is either $S$, $T$, $\overline{W}S$, or $\overline{P_j}S$, maps $L^2(E)$ continuously into $C^m(E)$. However, this immediately follows from Lemma~\ref{lem:stws}. \hfill \qedsymbol
 
\subsection{Proof of \texorpdfstring{Theorem~\ref{thm:smoothnes}}{Theorem 1.5}}

   We only consider $g_1F(\HE)g_2$ since the other cases follow exactly the same line of argument.
    
    Let $\lambda>0$ be large enough that Proposition~\ref{prop:2} holds and let $m\in \N$ be arbitrary. Choose $N$ and $\epsilon$ as in Proposition~\ref{prop:2}. By the functional calculus we may write 
    \begin{align*}
        g_1F(\HE)g_2&=\Big(g_1 (\HE+\lambda)^{-N}\e^{\epsilon\abs{X}}\Big)\\
        &\quad \cdot\Big(\e^{-\epsilon\abs{X}}(\HE+\lambda)^{-1}F(\HE)(\HE+\lambda)^{2N+1}\Big)\\
        &\quad \cdot\Big((\HE+\lambda)^{-N}g_2\Big)\\
        &\eqqcolon ABC,
    \end{align*}
    where $A$ and $C$ are bounded operators. Next we argue that $B$ is a Hilbert-Schmidt operator. Since $F$ decays faster than any polynomial, $F(\HE)(\HE+\lambda)^{2N+1}$ is bounded by the functional calculus. Thus we only have to show that $\e^{-\epsilon\abs{X}}(\HE+\lambda)^{-1}$ is a Hilbert-Schmidt operator. By the proof of Proposition~\ref{prop:2}, $(\HE+\lambda)^{-1}=S_b(-\lambda)\mathcal{B}$ for some bounded operator $\mathcal{B}$, and thus the estimate \eqref{eq:S-estimate} implies that $\e^{-\epsilon\abs{X}}(\HE+\lambda)^{-1}$ is a Hilbert-Schmidt operator. Hence $B$ is a Hilbert-Schmidt operator with a kernel $B(\cdot,\cdot)\in L^2(E\times E)$ given by
    \begin{equation*}
        B(\xbf,\ybf)=\sum_{j,k=1}^\infty c_{jk}\psi_{j}(\xbf)\overline{\psi_k(\ybf)},
    \end{equation*}    
    where $\{\psi_n\}_{n\geq 1}$ is an orthonormal basis in $L^2(E)$ and $\sum \abs{c_{jk}}^2<\infty$. This shows that $g_1F(\HE)g_2$ is an integral operator with kernel
    \begin{equation*}
        (g_1F(\HE)g_2)(\xbf,\ybf)=\sum_{j,k=1}^\infty c_{jk}(A\psi_{j})(\xbf)\overline{(C^*\psi_k)(\ybf)}.
    \end{equation*}
    For any $n\in \N$, Proposition~\ref{prop:2} implies that the function 
    \begin{equation*}
        K_n(\xbf,\ybf)=\sum_{j,k=1}^n c_{jk}(A\psi_{j})(\xbf)\overline{(C^*\psi_k)(\ybf)}
    \end{equation*}
    is $C^m(E\times E)$. Furthermore, combining Proposition~\ref{prop:2} and Lemma~\ref{lem:L2toLinf} shows that for any $\alpha\in \N_0^{2}$ with $\abs{\alpha}\leq m$, both $\partial^\alpha A$ and $\partial^\alpha C^*$ are integral operators with kernels satisfying $ (\partial^\alpha A)(\xbf,\cdot),(\partial^\alpha C^*)(\xbf,\cdot)\in L^2(E)$ for every $\xbf \in L^2(E)$ and with $L^2(E)$ norms uniformly bounded in $\xbf$. Using this together with Parseval's identity gives that there exists $C>0$ such that for all $\xbf\in E$,
    \begin{equation*}
        \sum_{j=1}^\infty \abs{(\partial^\alpha A\psi_{j})(\xbf)}^2=\sum_{j=1}^\infty \abs{\ip{\overline{\partial^\alpha A(\xbf,\cdot)}}{\psi_j}_{L^2(E)}}^2= \norm{\partial^\alpha A(\xbf,\cdot)}_{L^2(E)}^2\leq C.
    \end{equation*}
    Similar bound holds when replacing $\partial^\alpha A$ with $\partial^\alpha C^*$. Using such estimates we obtain for any $\alpha,\beta\in \N_0^{2}$ with $\abs{\alpha}+\abs{\beta}\leq m$, that 
    \begin{equation}
        \label{eq:auxHS}
    \begin{aligned}
        &\abslr{\sum_{j,k=n+1}^\infty c_{jk}(\partial^\alpha A\psi_{j})(\xbf)\overline{\partial^\beta C^*\psi_k(\ybf)}}^2\\
        &\qquad\leq \Big(\sum_{j,k=n+1}^\infty \abs{c_{jk}}^2\Big) \Big(\sum_{j=n+1}^\infty \abs{(\partial^\alpha A\psi_{j})(\xbf)}^2\Big)\Big(\sum_{k=n+1}^\infty \abs{(\partial^\beta C^*\psi_{j})(\ybf)}^2\Big)\\
        &\qquad\leq c\sum_{j,k=n+1}^\infty \abs{c_{jk}}^2,
    \end{aligned}
    \end{equation}
    where $c$ is some constant. Hence, $\partial_1^{\alpha}\partial_2^\beta K_n(\xbf,\ybf)$ converges uniformly for $\abs{\alpha}+\abs{\beta}\leq m$ and thus the integral kernel of $g_1F(\HE)g_2$ is $C^{m}(E)$. Since $m$ was chosen arbitrarily $g_1F(\HE)g_2$ has a smooth kernel.    
\hfill \qedsymbol

\section{Comparison of bulk and edge current densities}
\label{sec:GPT}
This section is devoted to the proof of Theorem~\ref{thm:currents}.
Let us start by setting up the general framework of geometric perturbation theory as done in \cite{CorneanMoscolariTeufel}.

Fix $\ell>2$ and consider the set $\Xi(t):=\{\x \in E \, | \, \mathrm{dist}(\x,\partial E) \leq t \sqrt{\ell} \}$, $t>0$.  Then let $0\leq \eta_0, \eta_\ell \leq 1$ be two smooth non-negative functions only depending on $x_2$ such that $\eta_0(\x)+\eta_\ell(\x)=1$ for every $\x \in E$. Moreover, we assume that 

\begin{equation*}
\begin{aligned}
&\mathrm{supp}(\eta_0) \subset \Xi(2) ,\\
&\mathrm{supp}(\eta_\ell) \subset E \setminus \Xi(1) , \\
&\| \partial^{n}_2\eta_i \|_{\infty} \simeq \ell^{-\frac{n}{2}},\quad n\geq 1, \quad i \in \left\{0,\ell\right\} .
\end{aligned}
\end{equation*}
We now introduce another couple of  non-negative functions  $0\leq \widetilde{\eta}_0, \widetilde{\eta}_\ell \leq 1$ again only depending on $x_2$, with the properties:
\begin{equation}
\label{eq:defEta2}
\begin{aligned}
& \mathrm{supp}(\widetilde{\eta}_0) \subset  \Xi\Big(\frac{11}{4}\Big), \\
& \mathrm{supp}(\widetilde{\eta}_\ell) \subset E \setminus \Xi\Big(\frac{1}{4}\Big), \\
& \widetilde{\eta}_i \eta_i = \eta_i , \quad i \in \left\{0,\ell\right\} ,  \\
& \dist\left(\mathrm{supp}(\partial_2 \widetilde{\eta}_i), \mathrm{supp}(\eta_i) \right) \simeq \sqrt{\ell} , \\
&\| \partial^{n}_2{ \widetilde{\eta_i}} \|_{\infty} \simeq \ell^{-\frac{n}{2}},\quad n\geq 1, \quad i \in \left\{0,\ell\right\} .
\end{aligned}
\end{equation}
The functions $\widetilde{\eta}_0, \widetilde{\eta}_\ell$ are a sort of stretched version of ${\eta}_0$ and ${\eta}_\ell$, such that the supports of the derivative of $\eta_i$ and $\widetilde{\eta}_i$ are disjoint.

{
\begin{lemma}\label{lem:HX}
    Let $j\in \{1,2\}$ and let $z$ be in the resolvent set of $\HE$. Then $\im[\HE,X_j] (\HE-z)^{-1}$ is bounded on $L^2(E)$ and for any bounded smooth function $\eta\in \Cinf{E}$ which does not depend on $x_j$,
    \begin{equation*}
        \eta\im[\HE,X_j] (\HE-z)^{-1}=\im[\HE,X_j]\eta (\HE-z)^{-1}.
    \end{equation*}
\end{lemma}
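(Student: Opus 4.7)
The plan is to give a concrete meaning to $\im[\HE, X_j]$ on the core $\Ccinf{\bar{E}}$, identify it there with $2P_j$, and then leverage the $h_b^E$-boundedness of $\overline{P_j}$ from Lemma~\ref{lem:W} to conclude. The first step is computational: for $\psi \in \Ccinf{\bar{E}}$ the function $X_j\psi$ is again in $\Ccinf{\bar{E}}$ (multiplication by the polynomial $x_j$ preserves smoothness, boundary behavior, and compactness of the support), and $\HE$ acts on this core as the differential operator $P_1^2 + P_2^2 + V$ by \eqref{eq:HbEdef}. Using $[P_k, X_j] = -\im\,\delta_{jk}$ and $[V, X_j] = 0$, a pointwise calculation yields
\begin{equation*}
    \im[\HE, X_j]\psi = 2 P_j\psi, \qquad \psi \in \Ccinf{\bar{E}}.
\end{equation*}

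Next I would exploit Lemma~\ref{lem:W} together with \eqref{eq:hboE}, which imply that $D(\HE) = D(h_b^E)$ with equivalent graph norms, and in particular $D(\HE) \subset D(\overline{P_j})$. For $z$ in the resolvent set of $\HE$, the closed graph theorem (or the explicit estimate \eqref{eq:W-Hb-bounded} for $-z = \lambda$ large, extended by a standard perturbative argument to the rest of the resolvent set) shows that $\overline{P_j}(\HE - z)^{-1}$ is a bounded operator on $L^2(E)$. I would then \emph{interpret} $\im[\HE, X_j](\HE - z)^{-1}$ as the unique bounded extension of the densely defined expression $\im\bigl(\HE X_j - X_j \HE\bigr)(\HE - z)^{-1}$, which by the first step agrees with $2\overline{P_j}(\HE - z)^{-1}$ on the dense subspace $(\HE - z)\Ccinf{\bar{E}}$. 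This proves the boundedness part of the lemma and records the working identity $\im[\HE, X_j](\HE - z)^{-1} = 2\overline{P_j}(\HE - z)^{-1}$.

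For the commutation with $\eta$, the key observation is that $\partial_j\eta = 0$ gives $[P_j, \eta] = -\im(\partial_j\eta) = 0$ on $\Ccinf{\bar{E}}$, hence $P_j(\eta\phi) = \eta P_j\phi$ there. A standard density/closability argument—approximating $\phi \in D(\overline{P_j})$ by $\phi_n \in \Ccinf{\bar{E}}$ in the graph norm of $\overline{P_j}$ and checking (if necessary via mollification, when $\eta$ does not extend smoothly across $\partial E$) that $\eta\phi_n \in D(\overline{P_j})$—then shows that multiplication by $\eta$ preserves $D(\overline{P_j})$ and satisfies $\overline{P_j}\eta = \eta\overline{P_j}$ on this domain. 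Composing on the right with $(\HE - z)^{-1}$, whose range lies in $D(\HE) \subset D(\overline{P_j})$, immediately gives the stated identity. The principal obstacle will be making rigorous sense of $\im[\HE, X_j]$ in the first place, since $X_j$ does not preserve $D(\HE)$; I would sidestep this by performing the commutator computation on the concrete core $\Ccinf{\bar{E}}$, where both $X_j$ and $\HE$ act pointwise, and then identifying the only natural bounded extension of $\im[\HE, X_j](\HE - z)^{-1}$ with $2\overline{P_j}(\HE - z)^{-1}$—after which the commutation with $\eta$ is the routine statement that multiplication by a smooth bounded $\eta$ with $\partial_j\eta=0$ commutes with the first-order operator $\overline{P_j}$.
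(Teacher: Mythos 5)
Your argument is correct, but it follows a genuinely different route from the paper. The paper stays inside its kernel machinery: it verifies $\im[\HE,X_j]f=2\overline{P_j}f$ for $f$ in the class $\M$ of Lemma~\ref{lem:MinDHb}, uses Lemma~\ref{lem:SM} and the distributional computation of Proposition~\ref{prop:1} together with \eqref{eq:1divx-estimate} to get $\im[\HE,X_j]S_b(-\lambda)=2\overline{P_j}S_b(-\lambda)$, then passes to $(\HEb+\lambda)^{-1}$ via the Neumann series \eqref{eq:HE_resolvent}, to $(\HE+\lambda)^{-1}$ via \eqref{eq:resolvent1}, and finally to general $z$ by the first resolvent identity \eqref{eq:firstresolvent}. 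You instead work on the operator core $\Ccinf{\bar{E}}$, where $X_j$ acts and the commutator identity $\im[\HE,X_j]\psi=2P_j\psi$ is a pointwise computation, obtain boundedness of $\overline{P_j}(\HE-z)^{-1}$ from the relative bounds of Lemma~\ref{lem:W} and \eqref{eq:hboE} (or the closed graph theorem, since $D(\HE)=D(h_b^E)\subset D(\overline{P_j})$ and both operators are closed), identify the bounded extension on the dense set $(\HE-z)\Ccinf{\bar{E}}$, and commute $\eta$ through $\overline{P_j}$ by a closability/density argument. Both proofs hinge on the same working identity $\im[\HE,X_j](\HE-z)^{-1}=2\overline{P_j}(\HE-z)^{-1}$; what the paper's route buys is kernel-level information (the explicit formula \eqref{eq:pjs} and decay estimates for $\overline{P_j}$ composed with resolvents) which is reused in Proposition~\ref{prop:2} and Theorem~\ref{thm:currents}, whereas your route is shorter, purely operator-theoretic, needs only the Section~\ref{sec:selfadjoint} self-adjointness results, and treats all $z$ in the resolvent set at once. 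One detail you rightly flag and should carry out: for $\phi_n\in\Ccinf{\bar{E}}$ and $\eta$ merely smooth and bounded on the open half-plane, $\eta\phi_n$ need not lie in $\Ccinf{\bar{E}}$; for $j=2$ it does (since $\eta$ is then a smooth function of $x_1$ on all of $\R$), while for $j=1$ one can approximate by the translates $\eta(\cdot+1/k)$ in $x_2$ (or mollify), which together with closedness of $\overline{P_j}$ gives $\eta D(\overline{P_j})\subset D(\overline{P_j})$ and $\overline{P_j}\eta=\eta\overline{P_j}$ there, completing your last step.
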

\begin{proof}
    First we consider the case $z=\lambda>0$ for $\lambda$ sufficiently large. For any $f\in\M$ (cf.\ Lemma~\ref{lem:MinDHb}), a straightforward computation (mimicking the proof of Lemma~\ref{lem:MinDHb}) shows that $\im [\HE,X_j] f=2\overline{P_j}f$. Hence by arguing as in the proof of Proposition~\ref{prop:1} and applying Lemma~\ref{lem:SM} and \eqref{eq:1divx-estimate} we obtain
    $\im[\HE,X_j] S_b(-\lambda)= 2 \overline{P_j} S_b(-\lambda)$. Combining this with \eqref{eq:HE_resolvent} shows that $\im [\HE,X_j](\HEb+\lambda)^{-1}=2\overline{P_j}(\HEb+\lambda)^{-1}.$
    Now the result follows from \eqref{eq:resolvent1}, and the fact that $\eta$ does not depend on $x_j$. In the general case we use the first resolvent identity,
    \begin{equation}\label{eq:firstresolvent}
        (\HE-z)^{-1}=(\HE+\lambda)^{-1}+(z+\lambda) (\HE+\lambda)^{-1}(\HE-z)^{-1},  
    \end{equation}
    from which the result immediately follows. 
\end{proof}
}
The space $L^2(E)$ can be canonically identified with a subset of $L^2(\R^2)$, where the identification operator $I_{E}: L^2(E)  \to L^2(\R^2)$ is given by 
\begin{equation*}
    I_{E}(\psi):= \chi_{E} \psi \in L^2(\R^2) .
\end{equation*}
\begin{proposition}
\label{prop:bulkEdgeHamt}

The multiplication operator by $\widetilde{\eta}_\ell$ maps the domain of $\HB$ into the domain of $\HE$. Moreover, for every $\psi$ in the domain of $\HB$ we have 
\begin{equation*}
    \HE \, \tilde{\eta}_{\ell} \psi =  \HB  \, \tilde{\eta}_{\ell} \psi .
\end{equation*}
\end{proposition}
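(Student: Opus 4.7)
The strategy is to approximate $\psi\in D(\HB)$ by smooth compactly supported functions, exploit the fact that $\widetilde{\eta}_\ell$ vanishes in a neighborhood of $\partial E$ so that the cutoff approximants automatically lie in $\Ccinf{\bar E}\subset D(\HE)$, and close the argument by appealing to the closedness of $\HE$. Since $H_b$ is essentially self-adjoint on $\Ccinf{\R^2}$, I pick a sequence $\psi_n\in \Ccinf{\R^2}$ with $\psi_n\to \psi$ and $\HB\psi_n\to \HB\psi$ in $L^2(\R^2)$. Because $\widetilde{\eta}_\ell$ is smooth, bounded and supported in $E\setminus \Xi(1/4)$ (hence at positive distance from $\partial E$), the products $\widetilde{\eta}_\ell\psi_n$ are smooth, compactly supported, and vanish near $\partial E$. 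In particular $\widetilde{\eta}_\ell \psi_n\in \Ccinf{\bar E}\subset D(\HE)$, and by locality the two operators coincide on such functions, so $\HE(\widetilde{\eta}_\ell \psi_n)=\HB(\widetilde{\eta}_\ell \psi_n)$.

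Next I use the Leibniz-type commutator identity. Writing $P=-\im\nabla -\mathcal{A}-bA$ and using $[P_j,\widetilde{\eta}_\ell]=-\im (\partial_j \widetilde{\eta}_\ell)$, a direct computation gives
\begin{equation*}
\HB(\widetilde{\eta}_\ell \psi_n) = \widetilde{\eta}_\ell\, \HB\psi_n - (\Delta \widetilde{\eta}_\ell)\psi_n - 2\im (\nabla \widetilde{\eta}_\ell)\cdot P\,\psi_n.
\end{equation*}
The first two terms on the right clearly converge in $L^2$ to $\widetilde{\eta}_\ell\,\HB\psi$ and $-(\Delta\widetilde{\eta}_\ell)\psi$, since $\widetilde{\eta}_\ell$ and $\Delta\widetilde{\eta}_\ell$ are bounded. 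For the last term it suffices to show that the components of $P$ are $\HB$-bounded on $L^2(\R^2)$; this is the bulk analog of Lemma~\ref{lem:W} and follows from the very same Green-type quadratic-form estimate (in fact strictly simpler, as no boundary terms arise). With that in hand, $P\psi_n\to P\psi$ in $L^2(\R^2)$, and since $\nabla\widetilde{\eta}_\ell$ is bounded the commutator term also converges to $-2\im(\nabla\widetilde{\eta}_\ell)\cdot P\psi$.

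Finally I close the argument. Each $\widetilde{\eta}_\ell \psi_n$ is supported in $E$, so it can be identified with an element of $L^2(E)$ without loss, and the limit obtained above is exactly $\HB(\widetilde{\eta}_\ell \psi)$ (viewed as an element of $L^2(E)$ thanks to the support of $\widetilde{\eta}_\ell$). Since $\widetilde{\eta}_\ell \psi_n\to \widetilde{\eta}_\ell \psi$ in $L^2(E)$ and $\HE(\widetilde{\eta}_\ell \psi_n)$ converges in $L^2(E)$, the closedness of $\HE$ yields $\widetilde{\eta}_\ell\psi\in D(\HE)$ and $\HE(\widetilde{\eta}_\ell\psi)=\HB(\widetilde{\eta}_\ell\psi)$, as claimed. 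The only non-bookkeeping ingredient is the $\HB$-boundedness of the magnetic momentum on the whole plane; everything else is a standard combination of density, the Leibniz-type commutator identity, and closedness, the crucial geometric input being $\mathrm{dist}(\mathrm{supp}(\widetilde{\eta}_\ell),\partial E)>0$.
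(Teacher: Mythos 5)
Your proposal is correct and follows essentially the same route as the paper: approximate $\psi$ by $C_c^\infty(\R^2)$ functions using essential self-adjointness of $H_b$, observe that multiplication by $\widetilde{\eta}_\ell$ lands these approximants in $\Ccinf{\bar{E}}$ where $\HE$ and $\HB$ agree, and conclude by closedness of $\HE$. The only difference is that you spell out the convergence of $\HB(\widetilde{\eta}_\ell\psi_n)$ via the Leibniz/commutator identity and the $\HB$-boundedness of the magnetic momentum, a step the paper treats more tersely by invoking its formula \eqref{eq:explicitHb}.
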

\begin{proof}
Consider $\psi \in D(\HB)$. Then, there exists a sequence $\varphi_n\in C^{\infty}_c(\R^2)$ such that $\varphi_n \to \psi$ and $\HB\varphi_n \to \HB\psi$ in $L^2(\R^2)$. Moreover, we have that $\tilde{\eta}_{\ell}\varphi_n  \in C^{\infty}_c(\bar{E})$, $\tilde{\eta}_{\ell}\varphi_n \to \tilde{\eta}_{\ell}\psi$ and $\HB\tilde{\eta}_{\ell}\varphi_n \to \HB\tilde{\eta}_{\ell}\psi$ in $L^2(\R^2)$, which follows from \eqref{eq:explicitHb}. By definition of $\HE$, we have that 
\begin{equation*}
	\HE\left(\tilde{\eta}_{\ell}\varphi_n\right)=(\hb+W )\left(\tilde{\eta}_{\ell}\varphi_n\right)=\HB \left(\tilde{\eta}_{\ell}\varphi_n\right)\, .
\end{equation*}
Moreover, from the inclusion $L^2(E) \subset L^2(\R)$ we have that $(\hb+W )\left(\tilde{\eta}_{\ell}\varphi_n\right)$ is a Cauchy sequence in $L^2(E)$ and since $\HE=\overline{\hb+W}$, the proof is concluded. 
\end{proof}
Note that a direct consequence of Proposition~\ref{prop:bulkEdgeHamt} is the identity 
\begin{equation}\label{eq:HBX-HEX}
    \im [\HB,X_1]\widetilde{\eta}_\ell=\im (\HB\widetilde{\eta}_\ell X_1-X_1\HB\widetilde{\eta}_\ell)=\im [\HE,X_1]\widetilde{\eta}_\ell
\end{equation}
Define, for any $z\in \rho(\HB)\cap\rho(\HE)$, the bounded operator in $L^2(E)$:
\begin{equation*}
    U_{\ell}(z):=\widetilde{\eta}_{\ell} \left(\HB - z \right)^{-1} \eta_{\ell}  +   \widetilde{\eta}_0 \left( \HE - z \right)^{-1} \eta_0.
\end{equation*}
Then, for $z$ as above, we have
\begin{equation*}
    \left(\HE - z \right)U_{\ell}(z)=1 + W_{\ell}(z) \, ,
\end{equation*}
where $W_{\ell}(z)$ is the bounded operator given by 
\begin{equation}\label{eq:Wlomega}
    \begin{aligned}
        W_{\ell}(z)&:= \left(-2 \iu \nabla \widetilde{\eta}_\ell \cdot \left(-\iu\nabla - \mathcal{A} - b A\right) - (\Delta \widetilde{\eta}_\ell) \right)   \left(\HB - z \right)^{-1} \eta_{\ell}  \\
        &\qquad+ \left(-2 \iu \nabla \widetilde{\eta}_0 \cdot \left(-\iu\nabla - \mathcal{A} - b A\right) - (\Delta \widetilde{\eta}_0) \right)   \left( \HE - z \right)^{-1} \eta_{0} .
        \end{aligned}        
\end{equation}
Therefore, the resolvent of the edge Hamiltonian obeys the identity:
\begin{equation}
\label{gpt1}
\left(\HE - z \right)^{-1}=U_{\ell}(z)  - \left(\HE - z \right)^{-1}W_{\ell}(z),
\end{equation}
which will be crucial in the proof of Theorem~\ref{thm:currents}.

Another important ingredient for the proof is the Helffer-Sjöstrand formula \cite{HelfferSjostrand1989,davies1995spectral} which we recall here. By following the strategy of \cite{CorneanFournaisFrankHelffer,CorneanMoscolariTeufel}, we can write $F(H^{\;\; / E}_{b})$ (for $F\in \Schwartsfunct{\R}$) as
\begin{equation}\label{dc22}
F(H^{\;\; / E}_{b})=-\frac{1}{\pi} \int_{\mathcal{D}}  \bar{\partial} {F_N}(z) (H^{\;\; / E}_{b}-z)^{-1} \mathrm{d}z_1\mathrm{d}z_2\, , \quad z=z_1+\iu z_2,
\end{equation}
where $\mathrm{Re}z=z_1, \mathrm{Im}z=z_2 \in \R$, $\mathrm{d}z_1\mathrm{d}z_2$ is the Lebesgue measure of $\C\cong \R \times \R$, $\mathcal{D}:=\R \times [-1,1]$, and $F_N$ is an almost analytic extension of $F$ constructed as follows: Let $0\leq g(y)\leq 1$ with $g\in \Ccinf{\R}$ such that $g(y)=1$ if $|y|\leq 1/2$ and $g(y)=0$ if $|y|>1$. Fix some $N\geq 2$ and define 
	\begin{equation*}
	F_N(z_1+\iu z_2) := g(z_2) \sum_{j=0}^N \frac{1}{j!} \frac{\partial^j F}{\partial {z_1}^j}(z_1) (\iu z_2)^j.
	\end{equation*}
We note the following properties which are easily verified:
\begin{enumerate}[label=(\roman*)]
	\item $F_N(x)=F(x) \qquad \textup{for all } \, x \in \R$;
	\item $\mathrm{supp}(F_N) \subset \mathcal{D}$;
	\item \label{prop3HS} there exists a positive constant $C_N$ (dependent on $N$) such that
	\begin{equation*}
        \left|\bar{\partial} {F_N}(z)\right| \leq C_N \frac{|z_2|^N}{\langle z_1\rangle^N} \, .
    \end{equation*}
\end{enumerate}
Property \ref{prop3HS} will be crucial in the proof of Theorem~\ref{thm:currents}.

\subsection{Proof of \texorpdfstring{Theorem~\ref{thm:currents}}{Theorem 1.7}}

Let us start by considering the operator associated to the current density in the upper half-plane. Recall that we use the notation $F$ for a Schwartz function which equals $F_{\mathrm{FD}}$ on the set $\left[-1+\inf \left(\sigma(\HB)\right),+\infty\right)$. Furthermore, we use the shorthand notation $\RB(z)=(\HB-z)^{-1}$ and $\RE(z)=(\HE-z)^{-1}$, $z\in \C$. In order to calculate the difference 
\[\im[\HE,X_1]F(\HE)-\chi_{E} \im [\HB,X_1]F(\HB) \chi_{E}\, ,\]
we apply the Helffer-Sjöstrand formula \eqref{dc22} to obtain 
\begin{align*}
        &\im[\HE,X_1]F(\HE)-\chi_{E} \im [\HB,X_1]F(\HB) \chi_{E}
        \\
        &\qquad=-\frac{1}{\pi} \int_{\mathcal{D}}  \bar{\partial} {F_N}(z) \Big(\im[\HE,X_1]\RE(z)- \chi_{E}\im[\HB,X_1]\RB(z)\chi_{E} \Big)\mathrm{d}z_1\mathrm{d}z_2.
\end{align*}
Thus everything is reduced to the analysis of $\HER(z)-\chi_E\HR(z)\chi_E$ where we denote $\HR(z)\coloneqq\im [\HB,X_1]\RB(z)$ and $\HER(z)\coloneqq\im [\HE,X_1]\RE(z)$.

By using geometric perturbation theory as described above, we can write the resolvent of the edge Hamiltonian as a sum of terms for which we can control the localization properties in the direction perpendicular to the boundary, cf.\ \eqref{eq:Wlomega}. Specifically, combining \eqref{gpt1}, that is 
\begin{equation*}
    \RE(z)=U_{\ell}(z)  - \RE(z)W_{\ell}(z),
\end{equation*}
    with \eqref{eq:HBX-HEX}, Lemma~\ref{lem:HX}, and its obvious counterpart for $\HB$ (cf.\ Remark~\ref{rem:HB}) we obtain
\begin{equation*}
    \HER(z)=\widetilde{\eta}_{\ell}\HR(z)\eta_{\ell}  +   \widetilde{\eta}_0 \HER(z) \eta_0-\HER(z) W_{\ell}(z).
\end{equation*}
By using also that $\chi_{E}=\eta_\ell +\eta_0$ we have
    \begin{equation}
    \label{eq:TClass1}
    \begin{aligned}
    \HER(z)  - \chi_{E}\HR(z)\chi_{E}&=(\widetilde{\eta}_{\ell}-\eta_\ell) \HR(z) \eta_{\ell}  - \eta_0 \HR(z) \eta_{\ell} \\ 
    &\quad-\chi_{E}\HR(z) \eta_{0} 
    +\widetilde{\eta}_0 \HER(z) \eta_0 - \HER(z)W_{\ell}(z) .
    \end{aligned}
    \end{equation}
    The first four terms can be treated in the same way since they all contain at least one factor which is localized near the boundary. In the following we will consider only the fourth term. Afterwards, we consider the last term which requires a separate analysis.
    
    In order to deal only with regular kernels, we employ the following strategy, borrowed from \cite[Proposition 1.10]{CorneanMoscolariTeufel}.  Whenever we have an operator family $A(z)$ which is analytic on $\mathcal{D}$ and with a polynomial growth in $z_1$ at infinity, then 
      \begin{equation*}
	\int_{\mathcal{D}}  \bar{\partial} {F_N}(z) A(z)\mathrm{d}z_1\mathrm{d}z_2=0.
\end{equation*}
    Let $\lambda>0$ be large enough such that Proposition~\ref{prop:2} holds. Since $(\HE+\lambda)^{-1}$ is clearly analytic on $\mathcal{D}$, applying the first resolvent identity \eqref{eq:firstresolvent} repeatedly gives
    \begin{equation}
    \label{eq:GreenthmInfty}
    \begin{aligned}
     &\int_{\mathcal{D}}  \bar{\partial} {F_N}(z)  \widetilde{\eta}_0\HER(z)\eta_0  \mathrm{d}z_1\mathrm{d}z_2 \\
     &= \int_{\mathcal{D}} \bar{\partial} F_N(z)\widetilde{\eta}_0  (\lambda+z)^n \im\left[\HE,X_1\right] \left( \HE +\lambda \right)^{-n} \left( \HE -z \right)^{-1}\eta_0\mathrm{d}z_1\mathrm{d}z_2  .
     \end{aligned}
    \end{equation}
    for any $n\in \N$. By making minor changes to the proof of Theorem~\ref{thm:smoothnes} it can be shown that $\im[\HE,X_1] ( \HE+ \lambda)^{-n} ( \HE -z)^{-1}$ is an integral operator with a kernel that becomes increasingly regular as $n$ goes to infinity. It suffices to choose $n$ such that the kernel becomes continuous on $E\times E$. Then the localization due to $\eta_0$ implies that the kernel of the operator in \eqref{eq:GreenthmInfty} is identically zero far from the boundary. 
    Note that for the terms in \eqref{eq:TClass1} involving $\HR(z)$ one uses an analogue of Theorem~\ref{thm:smoothnes}, cf.\ Remark~\ref{rem:HB}, to reach the same conclusion.

    Let us treat the last term in \eqref{eq:TClass1}. Since $N$ is arbitrary, it suffices to prove that for all $M\in \N$, there exists $N\in \N$ and a constant $C_{\ell,N,M}$ such that 
\begin{equation*}
	    \sup_{x_1 \in \R} x_2^{M} \abslr{\int_{\mathcal{D}} \bar{\partial} F_{N}(z) \left( \HER(z) W_{\ell}(z)\right)((x_1,x_2),(x_1,x_2))\mathrm{d}z_1\mathrm{d}z_2}\leq C_{\ell,N,M}.
\end{equation*}    
    Let us consider the diagonal of the kernel of $x_2^M\HER(z)W_{\ell}(z)$. Writing $x_2^M=[(x_2-y_2)+y_2]^M$ and expanding with the binomial theorem it follows that we have to consider kernels of the form
\begin{equation*}
	     K_{M,m}(\xbf,\xbf)=\int_{E} (x_2-y_2)^{m} (y_2)^{M-m} \left( \HER(z) \right)(\x;\y) (W_{\ell}(z))(\y;\x) \dd \ybf,
\end{equation*}    
    for $m\in \{0,\dots,M\}$.
    Define for any $z\in \C$, $\zeta\coloneqq \jnorm{\mathrm{Re}(z)}\abs{\mathrm{Im}(z)}^{-1}$. By using that $W_{\ell}(z)(\ybf,\xbf)=0$ for all $y_2>\frac{11}{4}\sqrt{\ell}$ (cf.\ \eqref{eq:defEta2} and \eqref{eq:Wlomega}) together with \cite[Lemma A.4]{CorneanMoscolariTeufel} (which follow from the estimates \eqref{eq:resolvent-estimate1} and \eqref{eq:resolvent-estimate2}), it follows that there exist $C,C_{\ell,M}, \delta_0> 0$ and $a \geq 1$ such that 
    \begin{equation*}
    \left|(y_2)^{M-m}W_{\ell}(z)(\y;\x)\right| \leq C \ell^{(M-m)/2} \left| W_{\ell}(z)(\y;\x) \right| \leq C_{\ell,M} \zeta^a \mathrm{e}^{-\frac{\delta_0}{\zeta}\abs{\x-\y}}.
    \end{equation*}
    Furthermore, by using the simple inequality $(x_2-y_2)^{m}  e^{-\frac{\delta}{\zeta}\abs{\x-\y}} \leq C_{m,\delta} \zeta^{m}$ we obtain that 
    \begin{equation*}
        \abs{K_{M,m}(\xbf,\xbf)}\leq C_{\ell,M,m,\delta_0}\zeta^{a+m}\int_E \abs{\HER(z)(\xbf,\ybf)}\e^{-\frac{\delta_0}{2\zeta}\abs{\xbf-\ybf}} \dd \ybf,
    \end{equation*}
    for some constant $C_{\ell,M,m,\delta_0}$.
    To show that this integral is bounded we apply again the first resolvent identity \eqref{eq:firstresolvent} repeatedly. Choose $\lambda$ and $n$ are as before i.e.\ such that $\im[\HE,X_1] ( \HE+ \lambda)^{-n} ( \HE -z)^{-1}$ has a continuous kernel. Moreover, by an estimate similar to \eqref{eq:auxHS}, and the fact that for every bounded operator $A$ and Hilbert-Schmidt operator $B$ we have that $\norm{AB}_2 \leq \norm{A}\norm{B}_2$, where $\norm{\cdot}_2$ denotes the Hilbert-Schmidt norm, it follows that 
    \begin{equation*}
        \abs{\im[\HE,X_1] ( \HE+ \lambda)^{-n} ( \HE -z)^{-1}(\xbf,\ybf)}\leq C_{\lambda,n}\abs{\mathrm{Im}(z)^{-1}},
    \end{equation*}
    for some constant $C_{\lambda,n}$. Furthermore, by the estimates in \eqref{eq:resolvent-estimate1} and \eqref{eq:resolvent-estimate2} and the fact that $\im [\HE,X_1]=2\overline{P}_1$ it follows that $\im[\HE,X_1] ( \HE+ \lambda)^{-k}(\xbf,\cdot)\in L^1(\R^2)$ for $k\geq 1$. Now using the first resolvent identity \eqref{eq:firstresolvent} as in \eqref{eq:GreenthmInfty} gives
    \begin{equation*}
        \abs{K_{M,m}(\xbf,\xbf)}\leq C\zeta^{a+m+n+2}.
    \end{equation*}
    for some constant $C$ depending on $\ell,M,m,\delta_0,n$.
    Choosing $N$ sufficiently large, the property \ref{prop3HS} of the almost analytic extension $F_N$ immediately gives the desired result.
    \hfill \qedsymbol

\appendix 
\section{Manipulation of distributions} \label{s:dist}
In this section we collect some useful results regarding formal manipulation of distributions with specific structure. Even though these results can be seen as a standard consequence of the general theory of distributions \cite{Ho1}, we show here a proof in our setting for the convenience of the reader.

We denote by $\slowlyinc{\R^2}$ the set of slowly increasing functions on $\R^2$, i.e.\ functions $f\in \Cinf{\R^2}$ such that for all $\alpha\in \N_0^2$ there exist $c$ and $k$ (depending on $\alpha$ and $f$) such that
\begin{equation*}
    \abs{\partial^\alpha f(\xbf)}\leq c\jnorm{\xbf}^k,
\end{equation*}
for all $\xbf\in \R^2$.

Given any $\xbf\in \R^2$ and $\Lambda\in \tempdist{\R^2}$, we define $\Lambda_\xbf$ to be the temperate distribution given by 
\begin{equation}\label{eq:Lambdax}
    \pair{\Lambda_\xbf}{f}\coloneqq\pair{\Lambda}{f(\xbf-\cdot)}=(\Lambda*f)(\xbf),
\end{equation}
for any $f\in \Schwartsfunct{\R^2}$. 

 For any smooth function $\psi\in \Cinf{\R^2\times \R^2}$ we define, for any $\xbf\in \R^2$ and $\alpha\in \N_0^2$, the function $\morph{\psi_\xbf^\alpha}{\R^2}{\C}$ by 
\begin{equation*}
    \psi_\xbf^\alpha(\ybf)\coloneqq\partial_1^\alpha \psi(\xbf,\ybf).
\end{equation*} 
When $\abs{\alpha}=0$ we use the shorthand notation $\psi_\xbf$ instead of $\psi_\xbf^\alpha$.
\begin{lemma}\label{lem:diff1}
    Let $\psi\in \Cinf{\R^2\times\R^2}$ be a function such that for every $\alpha,\beta\in \N_0^2$ and $n\in \N$, there exist constants $c$ and $k$ (depending on $\alpha$, $\beta$, and $n$) such that 
    \begin{equation*}
        \jnorm{\ybf}^n \abs{\partial_1^\alpha\partial_2^\beta \psi(\xbf,\ybf)}\leq c \jnorm{\xbf}^k,
    \end{equation*}
    for all $\xbf,\ybf\in \R^2$.
    Then, for any $\Lambda\in \tempdist{\R^2}$ the function $\xbf\mapsto \pair{\Lambda}{\psi_\xbf}$ is smooth and
    \begin{equation}\label{eq:diffdist}
        \partial^\alpha\pair{\Lambda}{\psi_\xbf}=\pair{\Lambda}{\psi_\xbf^\alpha},
    \end{equation}
    for all $\alpha\in \N_0^2$.
\end{lemma}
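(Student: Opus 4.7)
The plan is to reduce the statement to the case $|\alpha|=1$ via induction and then prove differentiability by establishing convergence of a Taylor-type remainder to zero in the Schwartz topology, after which the continuity of $\Lambda \in \tempdist{\R^2}$ does the rest. Before anything else, one should note that the hypothesis immediately gives $\psi_\xbf\in\Schwartsfunct{\R^2}$ for every fixed $\xbf$ (the bounds with $\alpha=0$ and arbitrary $\beta,n$ are exactly the Schwartz seminorm estimates for $\psi_\xbf$), so the pairing $\pair{\Lambda}{\psi_\xbf}$ is well defined.

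For the base case $\alpha=\ebf_j$, fix $\xbf\in\R^2$ and, for $0<|h|\leq 1$, set
\begin{equation*}
R_h(\ybf):=\frac{\psi_{\xbf+h\ebf_j}(\ybf)-\psi_\xbf(\ybf)}{h}-\psi_\xbf^{\ebf_j}(\ybf).
\end{equation*}
Taylor's theorem with integral remainder in the first variable yields
\begin{equation*}
R_h(\ybf)=h\int_0^1(1-t)\,\partial_1^{2\ebf_j}\psi(\xbf+th\ebf_j,\ybf)\,\dd t.
\end{equation*}
The goal is to show $R_h\to 0$ in $\Schwartsfunct{\R^2}$ as $h\to 0$. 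For any $\beta\in\N_0^2$ and $n\in\N$, the hypothesis applied with multi-index $(2\ebf_j,\beta)$ gives constants $c,k$ such that, uniformly for $|h|\leq 1$ and $t\in[0,1]$,
\begin{equation*}
\jnorm{\ybf}^n\bigl|\partial_\ybf^\beta\partial_1^{2\ebf_j}\psi(\xbf+th\ebf_j,\ybf)\bigr|\leq c\,\jnorm{\xbf+th\ebf_j}^k\leq c\,(\jnorm{\xbf}+1)^k.
\end{equation*}
Since the derivatives $\partial_\ybf^\beta$ commute with the integral, this bounds the Schwartz seminorm of $R_h$ by $|h|\cdot c(\jnorm{\xbf}+1)^k$, which tends to $0$ as $h\to 0$. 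The continuity of $\Lambda$ on $\Schwartsfunct{\R^2}$ then gives $\pair{\Lambda}{R_h}\to 0$, which is exactly the statement that $\partial_j\pair{\Lambda}{\psi_\xbf}$ exists and equals $\pair{\Lambda}{\psi_\xbf^{\ebf_j}}$.

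To obtain smoothness and the formula \eqref{eq:diffdist} for general $\alpha$, I would iterate: the function $\widetilde\psi(\xbf,\ybf):=\partial_1^{\ebf_j}\psi(\xbf,\ybf)$ satisfies exactly the same hypothesis as $\psi$ (with shifted multi-indices), so the above argument applied to $\widetilde\psi$ shows that $\pair{\Lambda}{\psi_\xbf^{\ebf_j}}=\pair{\Lambda}{\widetilde\psi_\xbf}$ is itself continuously differentiable with the expected formula. A routine induction on $|\alpha|$ then yields \eqref{eq:diffdist} and smoothness simultaneously; continuity of each $\partial^\alpha\pair{\Lambda}{\psi_\xbf}$ in $\xbf$ follows by the same Taylor/Schwartz-continuity argument applied to the appropriate shifted function. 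The only real technical point — and hence the main obstacle — is the uniform-in-$(\ybf,t,h)$ seminorm estimate for $R_h$, but this is precisely what the hypothesis is designed to produce: polynomial growth in $\xbf$ is harmless because $\xbf$ is held fixed, while arbitrary decay in $\ybf$ delivers every Schwartz seminorm at once.
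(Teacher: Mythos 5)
Your proof is correct and follows essentially the same route as the paper: both express the difference-quotient remainder via a Taylor/fundamental-theorem-of-calculus formula involving $\partial_1^{\alpha+\ebf_j}\psi$, bound every Schwartz seminorm of the remainder by $C|h|$ uniformly using the hypothesis, and then invoke the continuity of the tempered distribution (the paper via its explicit seminorm estimate), with the general $\alpha$ handled by iterating the first-order case. The only cosmetic difference is your use of the order-two integral remainder versus the paper's mean-value-theorem step, which are interchangeable here.
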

\begin{proof}
    Since $\Lambda$ is a temperate distribution, there exist constants $C$ and $m$ such that 
    \begin{equation}\label{eq:tempdist}
        \abs{\pair{\Lambda}{f}}\leq C\sup\left\{ \jnorm{\ybf}^n \abs{\partial^\beta f(\ybf)} \mid \ybf\in \R^2,\abs{\beta}\leq n \right\},
    \end{equation}
    for all $f\in \Schwartsfunct{\R^2}$. Let $\xbf\in \R^2$, $h\in \R$, $\alpha\in \N_0^2$, and $i\in \{1,2\}$ be arbitrary. In the following $\ebf_i$ denotes the $i-$th standard vector in $\R^2$. We only prove \eqref{eq:diffdist} for $\alpha=\ebf_i$ since the general case follows by simply repeating the same argument.
    
    First notice that the function 
    \begin{equation*}
        f_{\xbf,\alpha,h}(\ybf)\coloneqq \int_0^1 \psi_{\xbf+sh\ebf_i}^\alpha(\ybf)-\psi_{\xbf}^\alpha(\ybf) \dd s
    \end{equation*}
    is in $\Schwartsfunct{\R^2}$. Indeed, for all $\beta\in \N_0^2$, $n\in \N$, and $s\in [0,1]$ the mean value theorem implies the existence of $t_s\in (0,1)$ such that
    \begin{equation}
	\begin{aligned}\label{eq:phixah}
        \jnorm{\ybf}^n\abs{\partial^\beta f_{\xbf,\alpha,h}(\ybf)}&\leq\abs{h} \int_0^1 s\jnorm{\ybf}^n\abs{\partial_{1}^{\alpha+\ebf_i}\partial^\beta_2\psi(\xbf+t_0 sh\ebf_i,\ybf)} \dd s\\
        &\leq c\abs{h}(2h^2+1)^{k/2} \jnorm{\xbf}^k,
    \end{aligned}
\end{equation}
    with $c$ and $k$ depending on $n$, $\alpha$, and $\beta$. By the fundamental theorem of calculus     
    \begin{align*}
        \abslr{h^{-1}\left(\pair{\Lambda}{\psi_{\xbf+h\ebf_i}}-\pair{\Lambda}{\psi_\xbf}\right)- \pair{\Lambda}{\psi_\xbf^\alpha}}=\abs{\pair{\Lambda}{f_{\xbf,\alpha,h}}},
    \end{align*}
    for any $h\neq 0$. By applying \eqref{eq:tempdist} and \eqref{eq:phixah} to the left hand side and taking $h\to 0$ we obtain \eqref{eq:diffdist}.
\end{proof}

\begin{lemma}\label{lem:diff2}
    Suppose that $\Lambda\in \tempdist{\R^2}$ and define
    \begin{equation*}
        \psi(\xbf,\ybf)\coloneqq g(\xbf,\xbf-\ybf)f(\xbf-\ybf),
    \end{equation*}
    with $g\in \slowlyinc{\R^2\times \R^2}$, $f\in \Schwartsfunct{\R^2}$. Then, for any $\alpha\in \N_0^2$ and $\xbf\in \R^2$,
    \begin{align}\label{eq:diff2}
        \partial^\alpha\pair{\Lambda}{\psi_\xbf} =\sum_{\beta\leq \alpha} \binom{\alpha}{\beta} \pair{ g_{\xbf}^\beta (\partial^{\alpha-\beta}\Lambda)_\xbf}{f}.
    \end{align}
\end{lemma}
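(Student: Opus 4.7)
The plan is to reduce the statement, via Lemma~\ref{lem:diff1}, to a chain-rule computation of $\psi_\xbf^\alpha$ combined with a duality identity relating $(\partial^\gamma\Lambda)_\xbf$ to $\Lambda_\xbf$. First I would check the hypotheses of Lemma~\ref{lem:diff1} for $\psi(\xbf,\ybf) = g(\xbf,\xbf-\ybf) f(\xbf-\ybf)$. Since $g \in \slowlyinc{\R^2\times\R^2}$ and $f \in \Schwartsfunct{\R^2}$, every derivative $\partial_1^{\alpha'}\partial_2^{\beta'}\psi$ is a finite sum of terms of the form $(\partial^\sigma g)(\xbf,\xbf-\ybf)(\partial^\tau f)(\xbf-\ybf)$, each of which is bounded by $c\jnorm{\xbf}^k\jnorm{\xbf-\ybf}^{k-N}$ for arbitrarily large $N$. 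The Peetre-type inequality $\jnorm{\ybf}\leq C\jnorm{\xbf}\jnorm{\xbf-\ybf}$ then gives $\jnorm{\ybf}^n\abs{\partial_1^{\alpha'}\partial_2^{\beta'}\psi(\xbf,\ybf)} \leq c\jnorm{\xbf}^{k'}$ uniformly in $\ybf$ (choosing $N$ large enough), as required. Lemma~\ref{lem:diff1} therefore yields $\partial^\alpha \pair{\Lambda}{\psi_\xbf} = \pair{\Lambda}{\psi_\xbf^\alpha}$.

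Next I would expand $\psi_\xbf^\alpha$ by the chain rule. Introducing the auxiliary function $H(\xbf,\wbf) \coloneqq g(\xbf,\wbf) f(\wbf)$, we have $\psi(\xbf,\ybf) = H(\xbf,\xbf-\ybf)$. The two occurrences of $\xbf$ enter independently (one as the first argument of $H$, one through the shift $\xbf - \ybf$), so writing $\partial_{x_i}$ on $\psi$ as $\partial_1^{e_i}+\partial_\wbf^{e_i}$ acting on $H$ and then using commutativity to apply the multinomial expansion, one obtains
\begin{equation*}
    \psi_\xbf^\alpha(\ybf) = \sum_{\beta\leq \alpha}\binom{\alpha}{\beta}\bigl[\partial^{\alpha-\beta}(g_\xbf^\beta f)\bigr](\xbf-\ybf),
\end{equation*}
where we used $\partial_1^\beta H(\xbf,\wbf) = g_\xbf^\beta(\wbf) f(\wbf)$ and that $\partial_\wbf$ acts only on the second argument of $H$.

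To finish, each term is rewritten in the distributional form appearing on the right-hand side of \eqref{eq:diff2}. By the definition \eqref{eq:Lambdax}, the contribution is $\binom{\alpha}{\beta}\pair{\Lambda_\xbf}{\partial^{\alpha-\beta}(g_\xbf^\beta f)}$. A short direct computation shows
\begin{equation*}
    \pair{\Lambda_\xbf}{\partial^\gamma h} = \pair{(\partial^\gamma\Lambda)_\xbf}{h},
\end{equation*}
since the sign $(-1)^{|\gamma|}$ produced by integration by parts against $\partial^\gamma\Lambda$ exactly cancels the sign from $\partial_\ybf^\gamma[h(\xbf-\ybf)] = (-1)^{|\gamma|}(\partial^\gamma h)(\xbf-\ybf)$. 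Applying this with $\gamma = \alpha-\beta$ and $h = g_\xbf^\beta f$, and absorbing the smooth slowly-increasing factor $g_\xbf^\beta$ into the distribution via the usual definition of the product of a distribution with a slowly-increasing function, gives \eqref{eq:diff2}. The only mildly delicate point is the sign bookkeeping in this last step; once the identity $\pair{\Lambda_\xbf}{\partial^\gamma h} = \pair{(\partial^\gamma\Lambda)_\xbf}{h}$ is in place, the rest is mechanical.
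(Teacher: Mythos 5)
Your proposal is correct, and it matches the paper only in its first step: like the paper, you verify the hypotheses of Lemma~\ref{lem:diff1} (your Peetre-inequality estimate is exactly the ``calculation using the Leibniz rule and the chain rule'' the paper alludes to) to get $\partial^\alpha\pair{\Lambda}{\psi_\xbf}=\pair{\Lambda}{\psi_\xbf^\alpha}$. After that the routes differ. You compute $\psi_\xbf^\alpha$ in closed form via the binomial expansion of the two commuting derivations acting on $H(\xbf,\wbf)=g(\xbf,\wbf)f(\wbf)$, obtaining $\psi_\xbf^\alpha(\ybf)=\sum_{\beta\leq\alpha}\binom{\alpha}{\beta}[\partial^{\alpha-\beta}(g_\xbf^\beta f)](\xbf-\ybf)$, and then convert each term with the convolution-type identity $\pair{\Lambda_\xbf}{\partial^\gamma h}=\pair{(\partial^\gamma\Lambda)_\xbf}{h}$ before absorbing $g_\xbf^\beta$ into the distribution. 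The paper instead argues by induction on $\abs{\alpha}$: the base case $\abs{\alpha}=1$ rests on the rewriting $\psi_\xbf^\alpha(\ybf)=g_\xbf^\alpha(\xbf-\ybf)f(\xbf-\ybf)-\partial^\alpha\psi_\xbf(\ybf)$, which moves the $\ybf$-derivative onto $\Lambda$ directly through the definition of the distributional derivative, and the inductive step re-applies Lemma~\ref{lem:diff1} and the base case to each summand, the binomial coefficients recombining automatically. Your approach buys a one-pass derivation of \eqref{eq:diff2}, at the price of justifying the general Leibniz expansion and the sign bookkeeping for arbitrary $\gamma$ (which you do correctly: the two factors of $(-1)^{\abs{\gamma}}$ cancel); the paper's approach only ever needs the first-order computation but pays with inductive bookkeeping. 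Both rest on the same two facts, namely that $g_\xbf^\beta$ is slowly increasing in $\wbf$ so $g_\xbf^\beta f\in\Schwartsfunct{\R^2}$ and the product $g_\xbf^\beta(\partial^{\alpha-\beta}\Lambda)_\xbf$ is well defined, and the duality identity just mentioned; I see no gap in your argument.
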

\begin{proof}
    A calculation using the Leibniz rule and the chain rule shows that $\psi$ satisfies the assumptions of Lemma~\ref{lem:diff1}, and thus $\partial^\alpha\pair{\Lambda}{\psi_\xbf}=\pair{\Lambda}{\psi_\xbf^\alpha}$ for all $\xbf\in \R^2$. From this we proceed by induction on $\abs{\alpha}$. Let $\abs{\alpha}=1$. Then 
    \begin{align*}
        \psi_\xbf^\alpha(\ybf)=\partial_{1}^\alpha\psi(\xbf,\ybf)=g_\xbf^\alpha(\xbf-\ybf)f(\xbf-\ybf)-\partial^\alpha \psi_\xbf(\ybf),
    \end{align*}
    from which \eqref{eq:diff2} follows immediately (recall \eqref{eq:Lambdax}).

    For the inductive step, let $k\geq 1$ and suppose that \eqref{eq:diff2} holds when $\abs{\alpha}=k$. Let $\alpha$ be a multi-index with $\abs{\alpha}=k+1$ and write $\alpha=\alpha_0+\ebf_i$ for some $i\in \{1,2\}$. By Lemma~\ref{lem:diff1} and the induction hypothesis
    \begin{equation*}
        \pair{\Lambda}{\psi_\xbf^\alpha}=\partial^{\ebf_i}\sum_{\beta\leq \alpha_0} \binom{\alpha_0}{\beta} \pair{g_{\xbf}^\beta(\partial^{\alpha_0-\beta}\Lambda)_\xbf}{f}.
    \end{equation*}
    Applying the basis step of the induction for each term in the sum but with $\Lambda$ replaced with $\partial^{\alpha_0-\beta}\Lambda$ and $\psi$ replaced with
    $  \R^2\times\R^2\ni (\xbf,\ybf)\mapsto g_{\xbf}^\beta(\xbf-\ybf) f(\xbf-\ybf) $,
    completes the proof.
    \end{proof}

\section{Integral operators}\label{s:intop}
A central point in our proofs is to show that certain operators are integral operators and then manipulate their integral kernels. Apart from the well-known Schur test (see e.g.\ \cite[Lemma 18.1.12]{Ho3}) the following lemma is useful (see e.g.\ \cite[Lemma A.1.2]{Simon}).
\begin{lemma}\label{lem:L2toLinf}
    Let $\Omega\subset \R^2$ be an open set and let $p\in [1,\infty)$. Further let $q$ satisfy $p^{-1}+q^{-1}=1$, and let $T$ be a bounded operator on $L^p(\Omega)$. Then $T$ is a bounded operator from $L^p(\Omega)$ to $L^\infty(\Omega)$ if and only if $T$ is an integral operator with a kernel satisfying 
    \begin{equation}\label{eq:kernel_exists}
        \sup_{\xbf\in\Omega}\Bigl(\int_\Omega \abs{T(\xbf,\ybf)}^q \dd \ybf\Bigr)^{1/q}<\infty.
    \end{equation}
    Furthermore, $\norm{T}_{L^2(\Omega)\to L^\infty(\Omega)}$ equals the left hand side of \eqref{eq:kernel_exists}.
\end{lemma}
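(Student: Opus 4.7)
The plan is to prove the two implications separately; the sufficiency direction is immediate, while the necessity requires constructing the kernel from the operator through a duality/Riesz argument, with measure-theoretic care.

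For the sufficiency ($\Leftarrow$), assume $T$ is an integral operator whose kernel satisfies $M := \sup_{\xbf}\|T(\xbf,\cdot)\|_{L^q(\Omega)} < \infty$. Then for every $f \in L^p(\Omega)$ and every $\xbf \in \Omega$, Hölder's inequality gives
\begin{equation*}
    \abs{(Tf)(\xbf)} \leq \int_\Omega \abs{T(\xbf,\ybf)}\abs{f(\ybf)}\dd\ybf \leq \norm{T(\xbf,\cdot)}_{L^q(\Omega)}\norm{f}_{L^p(\Omega)} \leq M\norm{f}_{L^p(\Omega)},
\end{equation*}
which shows $T\colon L^p(\Omega)\to L^\infty(\Omega)$ is bounded with operator norm at most $M$.

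For the necessity ($\Rightarrow$), set $C := \norm{T}_{L^p(\Omega)\to L^\infty(\Omega)}$ and note that the key idea is that, once one fixes a sensible pointwise meaning of $(Tf)(\xbf)$, the linear functional $f \mapsto (Tf)(\xbf)$ should have norm at most $C$ on $L^p$, whence the Riesz representation theorem produces $T(\xbf,\cdot) \in L^q(\Omega)$ with $\norm{T(\xbf,\cdot)}_{L^q(\Omega)} \leq C$. The construction proceeds as follows. Since $L^p(\Omega)$ is separable for $p<\infty$, fix a countable dense $\Q$-linear subspace $\mathcal{V} \subset L^p(\Omega)$. For each $f \in \mathcal{V}$ select a pointwise representative $u_f$ of $Tf$ satisfying $\abs{u_f(\xbf)} \leq C\norm{f}_{L^p}$ outside a null set $N_f$. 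Because $\mathcal{V}$ is countable and $T$ is $\Q$-linear, there is a single null set $N\subset\Omega$ outside of which the map $L_\xbf\colon f \mapsto u_f(\xbf)$ is $\Q$-linear on $\mathcal{V}$ and satisfies $\abs{L_\xbf(f)} \leq C\norm{f}_{L^p}$. For $\xbf \notin N$, extend $L_\xbf$ by continuity to a bounded linear functional on $L^p(\Omega)$ of norm $\leq C$, and invoke Riesz to obtain $T(\xbf,\cdot)\in L^q(\Omega)$ with $\norm{T(\xbf,\cdot)}_{L^q(\Omega)}\leq C$ and
\begin{equation*}
    L_\xbf(f) = \int_\Omega T(\xbf,\ybf)f(\ybf)\dd\ybf,\qquad f \in L^p(\Omega).
\end{equation*}
Taking the supremum in $\xbf$ then yields the left hand side of the claimed equation, bounded by $C$, which combined with the sufficiency direction gives equality of norms.

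The main obstacle, and the point requiring the most care, is that $Tf$ is an equivalence class in $L^\infty$, so $(Tf)(\xbf)$ is not a priori defined pointwise. This is handled by the separability argument above: the countable union of null sets remains null, so a single exceptional set $N$ suffices for all $f \in \mathcal{V}$. The remaining task is to verify that for a \emph{general} $f \in L^p(\Omega)$ the integral $\int T(\xbf,\ybf)f(\ybf)\dd\ybf$ is indeed an admissible representative of $Tf$. For this, approximate $f$ by $f_k \in \mathcal{V}$ in $L^p$; then $Tf_k \to Tf$ in $L^\infty$ so (after enlarging $N$ once more by a null set) the representatives $u_{f_k}$ converge uniformly outside $N$ to a representative of $Tf$, while simultaneously $L_\xbf(f_k) \to L_\xbf(f)$ pointwise for $\xbf \notin N$ by continuity of the constructed functional. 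Hence $L_\xbf(f)$ agrees with $Tf(\xbf)$ a.e., completing the identification of $T$ as an integral operator with the required kernel.
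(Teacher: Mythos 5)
The paper never proves this lemma itself: it is quoted from \cite[Lemma A.1.2]{Simon}, so there is no internal argument to compare with. Your proof is essentially the standard duality (Dunford--Pettis type) argument that underlies that reference, and it is correct in its essentials: H\"older gives sufficiency, and for necessity the combination of a countable dense $\mathbb{Q}$-linear (for complex scalars, $(\mathbb{Q}+\im\mathbb{Q})$-linear) subspace, a single exceptional null set $N$, extension of $f\mapsto u_f(\xbf)$ to a functional of norm at most $C=\norm{T}_{L^p\to L^\infty}$, the Riesz representation theorem (legitimate here since Lebesgue measure on $\Omega$ is $\sigma$-finite, also for $p=1$), and your final approximation step identifying $L_\xbf(f)$ with a representative of $Tf$ for general $f$ is exactly the right construction; defining $T(\xbf,\cdot)=0$ for $\xbf\in N$ makes the supremum in \eqref{eq:kernel_exists} at most $C$, and the converse inequality from the sufficiency part gives the norm equality (the ``$L^2$'' in the last sentence of the statement is a typo for $L^p$). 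The one point you leave untouched is measurability of the kernel as a function on $\Omega\times\Omega$: your construction produces, for each $\xbf$, an equivalence class $T(\xbf,\cdot)\in L^q(\Omega)$ and shows that $\xbf\mapsto L_\xbf(f)$ is measurable and represents $Tf$, which is enough for how the lemma is used in this paper, but if ``integral operator'' is read as requiring a jointly measurable kernel you should add the standard remark: for $p>1$ the map $\xbf\mapsto T(\xbf,\cdot)$ is weakly measurable into the separable space $L^q(\Omega)$, hence strongly measurable by Pettis' theorem, which yields a product-measurable version; the case $p=1$, $q=\infty$ needs a separate (also classical) argument, e.g.\ via the identification of bounded operators from $L^1(\Omega)$ to $L^\infty(\Omega)$ with elements of $L^\infty(\Omega\times\Omega)=(L^1(\Omega\times\Omega))^*$.
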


\subsection{Properties of \texorpdfstring{$S_b(-\lambda)$}{Sb(-lambda)} and \texorpdfstring{$T_b(-\lambda)$}{Tb(-lambda)}}\label{s:M}
In this section we derive properties of the operators $S_b(-\lambda)$ and $T_b(-\lambda)$ by considering their integral kernels. These results are all based on the fact that 
\begin{equation}\label{eq:K_0def}
    \FT^{-1}\Big(\frac{1}{\abs{\xi}^2+\lambda}\Big)(\xbf)=\frac{1}{2\pi}K_0(\sqrt{\lambda}\abs{\xbf}),
\end{equation}
where $K_0$ is the Macdonald function (modified Bessel function of the second kind). Recall that the Macdonald functions $K_\nu$, $\nu\in \Z$, are smooth positive functions on $(0,\infty)$ such that \cite{abramowitz1968handbook}
\begin{align}
    K_0'(x)&=-K_1(x), \quad \textup{and }\quad K_1'(x)=K_0(x)-\frac{K_1(x)}{x}\, ,\label{eq:K1}\\
    K_\nu(x)&\sim\sqrt{\frac{\pi }{2x }}\e^{-x}\quad \textup{as }x\to\infty \, , \label{eq:K2}\\
    K_0(x)&\sim-\log(x) \quad \textup{as }x\to 0\label{eq:K3}\, ,\\
    K_\nu(x)&\sim x^{-\nu}2^{\nu-1}\Gamma(\nu)\quad \textup{as }x\to0, \nu\neq 0\label{eq:K4},
\end{align}
where $f\sim g$ as $x\to a$ if and only if 
$\lim_{x\to a} \tfrac{f(x)}{g(x)}=1$.
Note that both $K_0(\abs{\cdot})$ and $K_1(\abs{\cdot})$ belong to $L^1(\R^2)$. By using the properties \eqref{eq:K2} and \eqref{eq:K3} in \eqref{eq:Slambda1}, it follows that there exists $C>0$ such that for all $\lambda>0$,

    \begin{equation}\label{eq:S-estimate}
        \abs{S_b(-\lambda)(\xbf,\ybf)}\leq C(1+\abs{\ln(\sqrt{\lambda}\abs{\xbf-\ybf})})\e^{-\sqrt{\lambda}\abs{\xbf-\ybf}}.
    \end{equation}
Combining this with Schur's test and Lemma~\ref{lem:L2toLinf}, immediately gives the following result. 
\begin{lemma}\label{lem:S}
    The operator $S_b(-\lambda)$ is a bounded self-adjoint operator on $L^2(E)$ and also bounded as an operator from $L^2(E)$ to $L^\infty(E)$.
\end{lemma}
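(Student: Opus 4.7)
The proof amounts to combining the pointwise kernel estimate \eqref{eq:S-estimate} with two standard integral-operator tools (Schur's test and Lemma~\ref{lem:L2toLinf}) and verifying the symmetry of the kernel. Concretely, my plan has three independent parts, one for each assertion in the lemma.

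\textbf{$L^2$-boundedness via Schur's test.} By \eqref{eq:S-estimate}, $|S_b(-\lambda)(\xbf,\ybf)|$ is bounded above by a function of $|\xbf-\ybf|$ alone, so the two Schur integrals coincide after a change of variables. It therefore suffices to show that
\[
\sup_{\xbf\in E}\int_E \bigl(1+|\ln(\sqrt{\lambda}|\xbf-\ybf|)|\bigr)\e^{-\sqrt{\lambda}|\xbf-\ybf|}\dd \ybf<\infty.
\]
I would extend the integrand to $\R^2$, switch to polar coordinates, and split the radial integral into $r<1/\sqrt{\lambda}$ and $r\geq 1/\sqrt{\lambda}$. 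On the small-$r$ region the factor $r|\ln(\sqrt{\lambda}r)|$ is integrable (the $2$-dimensional volume element neutralizes the logarithmic singularity), while on the large-$r$ region the exponential decay dominates the logarithmic growth. The resulting bound is uniform in $\xbf$, which gives $S_b(-\lambda)$ bounded on $L^2(E)$.

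\textbf{$L^2\to L^\infty$ boundedness.} By Lemma~\ref{lem:L2toLinf} with $p=q=2$, it is enough to show
\[
\sup_{\xbf\in E}\int_E |S_b(-\lambda)(\xbf,\ybf)|^2\dd \ybf<\infty.
\]
Squaring \eqref{eq:S-estimate} gives a bound of the form $\bigl(1+|\ln(\sqrt{\lambda}|\xbf-\ybf|)|\bigr)^2\e^{-2\sqrt{\lambda}|\xbf-\ybf|}$, which is integrable by the same polar-coordinate argument as above (a squared logarithm is still integrable against $r\,\dd r$ near $0$).

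\textbf{Self-adjointness.} This reduces to verifying $\overline{S_b(-\lambda)(\ybf,\xbf)}=S_b(-\lambda)(\xbf,\ybf)$. From the explicit formula \eqref{eq:phi}, $\phi(\xbf,\ybf)=-\tfrac{1}{2}(x_1-y_1)(x_2+y_2)$ is antisymmetric in its arguments, so $\e^{\im b\phi(\ybf,\xbf)}=\e^{-\im b\phi(\xbf,\ybf)}$ and the complex conjugate of the phase factor is exactly the one appearing at $(\xbf,\ybf)$. The Dirichlet Laplacian resolvent kernel $R_E$ is real and symmetric, since it is built from $K_0(\sqrt{\lambda}|\cdot|)$ evaluated at $\xbf-\ybf$ and $\xbf-\ybf^*$ via the reflection method, both of which are manifestly symmetric under $\xbf\leftrightarrow\ybf$. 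Combining the two gives the required kernel symmetry, hence $S_b(-\lambda)$ is self-adjoint on $L^2(E)$.

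No step here is genuinely hard; the only mild subtlety is ensuring integrability of the logarithmic singularity on the diagonal, which is handled by the $2$-dimensional volume element. The proof is otherwise a direct application of the quoted estimates and lemmas.
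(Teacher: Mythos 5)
Your proposal is correct and follows exactly the route the paper intends: the paper's proof consists of the single remark that \eqref{eq:S-estimate} combined with Schur's test and Lemma~\ref{lem:L2toLinf} immediately gives the result, and your three parts (Schur's test with the radial integral handling the logarithmic singularity, the $p=q=2$ case of Lemma~\ref{lem:L2toLinf}, and the kernel symmetry from the antisymmetry of $\phi$ together with the symmetry of $R_E$) are precisely the details being left implicit. Nothing is missing.
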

We are also interested in the regularity of $S_b(-\lambda)f$ when $f$ is a test function. Recall that $\M$ denotes the set smooth functions on $E$ with exponentially decaying derivatives of all orders such that $f(x_1,0)\equiv0$ for all $x_1\in \R$ (cf.\ Lemma~\ref{lem:MinDHb}).
\begin{lemma}\label{lem:SM}
    The operator $S_b(-\lambda)$ maps $\Ccinf{E}$ into $\M$.
\end{lemma}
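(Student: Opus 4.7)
The plan is to verify the three defining properties of $\M$---smoothness up to $\partial E$, vanishing on $\partial E$, and exponential decay of all derivatives---by working directly with the integral representation
\[(S_b(-\lambda)f)(\xbf)=\int_E e^{\im b\phi(\xbf,\ybf)}R_E(\xbf,\ybf)f(\ybf)\,\mathrm{d}\ybf.\]
Since $f\in \Ccinf{E}$, its support is contained in a compact set $K\subset E$ separated from $\partial E$ by some distance $d>0$.

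The boundary vanishing is immediate: when $x_2=0$ one has $\xbf^*=\xbf$, so $|\xbf-\ybf|=|\xbf-\ybf^*|$ for every $\ybf$, hence $R_E(\xbf,\ybf)\equiv 0$ and $(S_b(-\lambda)f)(x_1,0)=0$.

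For smoothness up to the boundary, I will split according to the location of $\xbf$. When $\xbf$ lies in a neighborhood of $\partial E$, say $x_2<d/2$, both $|\xbf-\ybf|$ and $|\xbf-\ybf^*|$ exceed $d/2$ uniformly for $\ybf\in \mathrm{supp}(f)$, so the integrand is smooth in $\xbf$ with exponentially decaying $\xbf$-derivatives (by \eqref{eq:K1}--\eqref{eq:K2}), and differentiation under the integral sign is immediate. For $\xbf$ in the interior of $E$, where $\xbf$ may approach $\mathrm{supp}(f)$, the kernel $K_0(\sqrt{\lambda}|\cdot|)$ is only logarithmically singular on the diagonal and defines a tempered distribution to which Lemma~\ref{lem:diff2} applies, with $g(\xbf,\ybf)=e^{\im b\phi(\xbf,\ybf)}$ of slow-growth type. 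This expresses $\partial^\alpha(S_b(-\lambda)f)(\xbf)$ as a finite sum of distributional pairings, each a continuous function of $\xbf$; the reflected term is handled identically after the change of variables $\ybf\mapsto \ybf^*$. Thus $S_b(-\lambda)f\in \Cinf{\bar{E}}$.

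Exponential decay then follows from \eqref{eq:K1}--\eqref{eq:K2}: for $|\xbf|\geq 2\sup_{\ybf\in K}|\ybf|$ one has $|\xbf-\ybf|,|\xbf-\ybf^*|\geq \tfrac{1}{2}|\xbf|$ on $\mathrm{supp}(f)$, so both $K_0(\sqrt{\lambda}|\xbf-\ybf|)$ and all its $\xbf$-derivatives decay like $e^{-c|\xbf|}$. The polynomial factors generated by differentiating the phase $e^{\im b\phi(\xbf,\ybf)}$ (which are linear in $\xbf$ and $\ybf$) are absorbed into the exponential, and integration over the bounded support of $f$ delivers the required estimate $|\partial^\alpha (S_b(-\lambda)f)(\xbf)|\leq C_\alpha e^{-c|\xbf|}$. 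The main technical subtlety is handling the diagonal singularity of $K_0$ in the interior regime; the cleanest route is through Lemma~\ref{lem:diff2}, but one could equivalently integrate by parts, exploiting $\partial_\xbf K_0(\sqrt{\lambda}|\xbf-\ybf|)=-\partial_\ybf K_0(\sqrt{\lambda}|\xbf-\ybf|)$, to transfer $\xbf$-derivatives onto the smooth compactly supported factor $e^{\im b\phi(\xbf,\ybf)}f(\ybf)$ and circumvent the singularity altogether.
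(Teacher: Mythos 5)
Your proof is correct and follows essentially the same route as the paper's: the diagonal singularity is handled via the distributional differentiation lemmas of Appendix~\ref{s:dist} applied to the tempered distribution $(2\pi)^{-1}K_0(\sqrt{\lambda}\abs{\cdot})$, while smoothness up to $\partial E$, the exponential decay \eqref{eq:Mdef}, and the boundary vanishing all come from the MacDonald function properties \eqref{eq:K1}--\eqref{eq:K4} combined with the compact support of $f$ and the triangle inequality. Your boundary-vanishing step, noting that $R_E(\xbf,\ybf)$ vanishes identically when $x_2=0$ since $\abs{\xbf-\ybf}=\abs{\xbf-\ybf^*}$ there, is just a slightly more direct phrasing of the paper's radial-symmetry/change-of-variables argument.
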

\begin{proof}
    As in the proof of Proposition~\ref{prop:1}, we will argue using distributional derivatives (cf.\ Appendix~\ref{s:dist}) in order to avoid problems with the singularities of $S_b(-\lambda)$ along the diagonal $\xbf=\ybf$. Following the strategy in the proof of Proposition~\ref{prop:1}, let $f\in \Ccinf{E}$ and write 
 \begin{equation}\label{eq:Slambda}
	   \begin{aligned}
        \left(S_b(-\lambda)f\right)(\xbf)=\pair{\mathfrak{f}_{1,0}}{\e^{\im b\phi(\xbf,\xbf-\cdot)}f(\xbf-\cdot)}-\pair{\mathfrak{f}_{1,0}}{\e^{\im b\phi^*(\xbf,\xbf-\cdot)}f^*(\xbf-\cdot)},
    \end{aligned}
\end{equation}
    for all $\xbf\in E$, where $\mathfrak{f}_{1,0}$ is the distribution given by $(2\pi)^{-1}K_0(\sqrt{\lambda}\abs{\cdot})$ (cf.\ the notation introduced in \eqref{eq:fna}). Note that we have used the compact support of $f$ in $E$ to extend the domain of integration to all of $\R^2$. 
    
    By Lemma~\ref{lem:diff1}, it follows directly that $S_b(-\lambda)f$ is smooth. Furthermore, from the exponential decay of $K_1(\sqrt{\lambda}\abs{\cdot})$ (cf.\ \eqref{eq:K2}), the compact support of $f$, and the inequality (valid for all $\xbf,\wbf\in E$ and $c>0$)
    \begin{align*}
        \e^{c\abs{\xbf}}\leq\e^{c\abs{\xbf-\wbf}}\e^{c\abs{\wbf}}=\e^{c\abs{(\xbf-\wbf)^*}}\e^{c\abs{\wbf}},
    \end{align*}
    we obtain \eqref{eq:Mdef}.
    
    It remains to show that $(S_b(-\lambda)f)(x_1,0)=0$ for any $x_1\in \R$. To prove this, one simply uses that $K_0(\sqrt{\lambda}\abs{\cdot})$ is a radial function together with the change of variable $\wbf\mapsto \wbf^*$ in the first term in \eqref{eq:Slambda}.
\end{proof}
Consider now $T_b(-\lambda)$, by using \eqref{eq:K1}--\eqref{eq:K4} in \eqref{eq:Tlambda} we get that there exists $C>0$ such that for all $\lambda>0$,
\begin{equation}\label{eq:T-estimate}
    \abs{T_b(-\lambda)(\xbf,\ybf)}\leq C(b^2+b)\frac{\lambda+1}{\lambda}\e^{-\frac{\sqrt{\lambda}}{2}\abs{\xbf-\ybf}}.
\end{equation}
Combining \eqref{eq:T-estimate} with Schur's test immediately gives the following result.
\begin{lemma}\label{lem:T}
    The operator $T_b(-\lambda)$ is bounded on $L^2(E)$, with $\norm{T_b(-\lambda)}\leq C\abs{b^2+b}\frac{1+\lambda}{\lambda^2}$ for some constant $C$ which is independent of $\lambda>0$ and $b$.
\end{lemma}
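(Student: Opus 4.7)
The plan is straightforward: apply Schur's test directly to the kernel bound \eqref{eq:T-estimate}. Since that estimate controls $|T_b(-\lambda)(\xbf,\ybf)|$ by a translation-invariant, radially symmetric, exponentially decaying kernel on $\R^2$, the two Schur integrals coincide (up to the restriction to $E$) and reduce to a single explicit calculation.

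More concretely, the first step is to observe that by \eqref{eq:T-estimate},
\[
\int_E \abs{T_b(-\lambda)(\xbf,\ybf)}\dd\ybf \;\leq\; C\abs{b^2+b}\,\frac{\lambda+1}{\lambda}\int_{\R^2}\e^{-\tfrac{\sqrt{\lambda}}{2}\abs{\zbf}}\dd\zbf,
\]
uniformly in $\xbf\in E$, with the analogous bound holding uniformly in $\ybf\in E$ after switching the roles of the variables. The second step is to evaluate the remaining integral in polar coordinates: $\int_{\R^2}\e^{-a\abs{\zbf}}\dd\zbf=2\pi\int_0^\infty r\e^{-ar}\dd r=2\pi/a^2$, which with $a=\sqrt{\lambda}/2$ gives $8\pi/\lambda$. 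Substituting this back produces the Schur bound $C'\abs{b^2+b}(\lambda+1)/\lambda^2$ for both integrals.

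The third step is to invoke Schur's test (for instance in the form of \cite[Lemma 18.1.12]{Ho3} cited in the appendix): since the geometric mean of the two suprema of the absolute integrals of the kernel is itself bounded by $C'\abs{b^2+b}(\lambda+1)/\lambda^2$, the same bound controls the operator norm $\norm{T_b(-\lambda)}_{L^2(E)\to L^2(E)}$, which is precisely the claim.

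There is no substantive obstacle here: all the work is already encoded in the pointwise estimate \eqref{eq:T-estimate}, which in turn rests on the asymptotics \eqref{eq:K1}--\eqref{eq:K4} of the Macdonald functions together with the explicit form \eqref{eq:Tlambda} of $T_b(-\lambda)$ (note in particular that the prefactor $bA_t(\xbf-\ybf)\cdot[2\im\nabla_1+bA_t(\xbf-\ybf)]$ yields powers of $\abs{\xbf-\ybf}$ that are absorbed into the exponential via $x^n\e^{-ax}\leq n^n a^{-n}$, accounting for the $1/\lambda$ already present in the prefactor of \eqref{eq:T-estimate}).
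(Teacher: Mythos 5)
Your proposal is correct and follows exactly the paper's argument: the paper proves Lemma~\ref{lem:T} precisely by combining the kernel estimate \eqref{eq:T-estimate} with Schur's test, and your explicit polar-coordinate evaluation of the exponential integral just spells out the computation the paper leaves implicit.
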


\subsection{Resolvent estimates}\label{sec:estimates}
{
In this section we collect useful estimates of various integral kernels used in the proofs.

Combining \eqref{eq:HE_resolvent} with the estimates \eqref{eq:S-estimate} and \eqref{eq:T-estimate} gives that for $\lambda>0$ sufficiently large, there exist $C,\delta>0$ such that
\begin{equation}\label{eq:Rb-estimate}
    \abs{(\HEb+\lambda)^{-1}(\xbf,\ybf)}\leq C(1+\abs{\ln(\sqrt{\lambda}\abs{\xbf-\ybf})})\e^{-\delta\sqrt{\lambda}\abs{\xbf-\ybf}}.
\end{equation}
By arguing as in the proof of Proposition~\ref{prop:1}, it follows that $\overline{P_j}S_b(-\lambda)$ is an integral operator with kernel satisfying
    \begin{equation}\label{eq:pjs}
        \begin{aligned}
            \overline{P_j}S_b(-\lambda)(\xbf,\ybf)=\e^{\im b\phi(\xbf,\ybf)}[&-b(A_t(\xbf-\ybf))_j-\mathcal{A}_j(\xbf)\\
            &+\im(\partial_j\mathcal{A}_j(\xbf))-\im \nabla_1]R_E(\xbf,\ybf).
        \end{aligned}
    \end{equation}
Since $\overline{W}=-2\mathcal{A}\cdot P-\im(\nabla\cdot\mathcal{A})+\mathcal{A}^2+V$ it is not difficult to obtain that for $\lambda>0$ sufficiently large, exist $C,\delta>0$ such that for any $U\in\{ \overline{P_j}S_b(-\lambda),\overline{W}S_b(-\lambda), \overline{W}(\HEb+\lambda)^{-1}\}$ we have the estimate
\begin{equation}\label{eq:1divx-estimate}
    \abs{U(\xbf,\ybf)}\leq C(1+\abs{\xbf-\ybf}^{-1})\e^{-\delta\sqrt{\lambda}\abs{\xbf-\ybf}},
\end{equation}
where $C$ only depends on the potential $V$ through $\norm{V}_{\infty}$ when $U$ is either $ \overline{W}S_b(-\lambda)$ or $\overline{W}(\HEb+\lambda)^{-1}$. 

Therefore, putting together the estimates in \eqref{eq:Rb-estimate} and \eqref{eq:1divx-estimate} with the proof of Theorem~\ref{thm:1}\ref{thm:1Part2}, we get that for $\lambda>0$ sufficiently large, there exist $C,\delta>0$ such that
\begin{equation}\label{eq:resolvent-estimate1}
    \abs{(\HE+\lambda)^{-1}(\xbf,\ybf)}\leq C(1+\abs{\ln(\sqrt{\lambda}\abs{\xbf-\ybf})})\e^{-\delta\sqrt{\lambda}\abs{\xbf-\ybf}}
\end{equation}
and
\begin{equation}\label{eq:resolvent-estimate2}
    \abs{U(\HE+\lambda)^{-1}(\xbf,\ybf)}\leq  C(1+\abs{\xbf-\ybf}^{-1})\e^{-\delta\sqrt{\lambda}\abs{\xbf-\ybf}}.
\end{equation}
where $U\in \{\overline{W}, \overline{P}_j\}$.
}

Then, in order to extend \eqref{eq:resolvent-estimate1} and \eqref{eq:resolvent-estimate2} from $\lambda \in \R$ to $z$ in the complex plane, we can use the results showed in \cite[Proposition B.1, Corollary B.4]{CorneanFournaisFrankHelffer} and extend them {\it mutatis mutandis} to our setting since the potentials we consider are of class $BC^\infty(\R^2,\R)$. Therefore, we get that there exist $C,\delta>0$ such that 
\begin{equation}\label{eq:HEz}
    \abs{(\HE-z)^{-1}(\xbf,\ybf)}\leq C\zeta^2 (1+\abslr{\ln(\abs{\xbf-\ybf})})e^{-\delta\zeta^{-1}\abs{\xbf-\ybf}},
\end{equation}
for all $z\in \C$ with $\abs{\mathrm{Im}(z)}\in (0,1]$ and where $\zeta=\jnorm{\mathrm{Re(z)}}\abs{\mathrm{Im}(z)}^{-1}$.

By considering the first resolvent identity coupled with the estimates \eqref{eq:HEz}, \eqref{eq:resolvent-estimate2} and the theory of polar operators \cite[Lemma 2, p 213]{Vladimirov} (see also \cite{Pankrashkin}), we get that there exist $C',\delta'>0$ such that
\begin{equation}\label{eq:DHEz}
    \abs{\overline{P}_j(\HE-z)^{-1}(\xbf,\ybf)}\leq C\zeta^3 (1+\abs{\xbf-\ybf}^{-1})e^{-\delta\zeta^{-1}\abs{\xbf-\ybf}},
\end{equation}
for all $z\in \C$ with $\abs{\mathrm{Im}(z)}\in (0,1]$. 

\subsection{Regularity of products of kernel operators}\label{s:kernel}
The aim of this section is to describe the regularity of products of the operators $S_b(-\lambda)$, $T_b(-\lambda)$, $\overline{P_j}S_b(-\lambda)$, and $\overline{W}S_b(-\lambda)$ for $\lambda>0$. The main hurdle when considering the regularity of products of such operators is that their integral kernels do not necessarily remain integrable when differentiated. The key observation for circumventing this problem is that all four operators can be written as a sum of integral operators whose integral kernels have a specific structure that allows us to use the regularity properties of the Fourier transform.

Define for any $n\in \N$ and $\alpha\in \N_0^2$, 
\begin{equation}\label{eq:fna}
    \mathfrak{f}_{n,\alpha}\coloneqq\FT^{-1}\Big( \frac{\xi^\alpha}{(\abs{\xi}^2+\lambda)^n} \Big),
\end{equation}
where the Fourier transform should be understood in the distributional sense. By elementary properties of the Fourier transform it follows from \eqref{eq:Slambda1}, \eqref{eq:Tlambda}, \eqref{eq:pjs}, and \eqref{eq:W} that the operators $S_b(-\lambda)$, $T_b(-\lambda)$, $\overline{P_j}S_b(-\lambda)$, and $\overline{W}S_b(-\lambda)$ can be written as sums of integral operators with integral kernels taking the forms
\begin{numcases}{K(\xbf,\ybf)=}
        \e^{\im b\phi(\xbf,\ybf)}a(\xbf)\mathfrak{f}_{n,\alpha}(\xbf-\ybf),\label{eq:kernels1}\\
        \e^{\im b\phi(\xbf,\ybf)}a(\xbf)h(\xbf-\ybf)\mathfrak{f}_{n,\alpha}(\xbf-\ybf^*),\label{eq:kernels2}
\end{numcases}
where $a\in \BCinf{\R^2}$, $h\in \slowlyinc{\R^2}$ and $\abs{\alpha}\leq n$. In the following we will always assume that $a$, $h$, $\alpha$ and $n$ satisfy these assumptions when considering kernels of the form \eqref{eq:kernels1} or \eqref{eq:kernels2}. The main advantage in writing the kernels using the inverse Fourier transform instead of the functions $K_\nu$ is that it allows to easily apply the convolution theorem (cf. \cite[Theorem 30.4]{Treves} and \eqref{eq:convolution} below) when dealing with products of kernels like the one in \eqref{eq:kernels1}.

Before going into the mathematical details of the proofs, let us state the main result here. 
\begin{lemma}\label{lem:stws}
    Let $g\in \Ccinf{E}$. Furthermore, let $\lambda>0$ and $b\in \R$ be arbitrary and let $Y_1=S_b(-\lambda)$, $Y_2=T_b(-\lambda)$, $Y_3=\overline{P_j}S_b(-\lambda)$, $Y_4=\overline{W}S_b(-\lambda)$. Then, for any $m\in \N$ there exist $N\in \N$ and $\epsilon>0$ (depending on $\lambda$ and $m$) such that for all $k_1,\dots,k_N\in \{1,2,3,4\}$, the operators $gY_{k_1}\dots Y_{k_N} \e^{\epsilon \abs{X}}$ map $L^2(E)$ continuously into $C^{m}(E)$, i.e.\ there exists $C>0$ such that for every $f\in L^2(E)$,
\begin{equation*}
    \norm{gY_{k_1}\dots Y_{k_N} \e^{\epsilon \abs{X}}f}_{C^m(E)}\leq C\norm{f}_{L^2(E)}.
\end{equation*}
\end{lemma}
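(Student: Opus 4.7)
The plan is to show that, after sufficiently many compositions, the integral kernel of $gY_{k_1}\cdots Y_{k_N}\e^{\epsilon|X|}$ is a $C^m$ function of the first argument $\xbf$, with $L^2$-norm in the second argument $\ybf$ bounded uniformly in $\xbf$. The claim then follows by applying Lemma~\ref{lem:L2toLinf} to each $\xbf$-derivative up to order $m$ separately.

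First I would invoke the structural decomposition of Section~B.3: each of $Y_1,\dots,Y_4$ is a finite sum of integral operators whose kernels take one of the forms \eqref{eq:kernels1} or \eqref{eq:kernels2}, with $a\in\BCinf{\R^2}$, $h\in\slowlyinc{\R^2}$, and $|\alpha|\le n$. Consequently $gY_{k_1}\cdots Y_{k_N}\e^{\epsilon|X|}$ is a finite sum of integral operators whose kernels are $(N-1)$-fold integrals over $E^{N-1}$ of products of such elementary kernels, multiplied by $g(\xbf)\e^{\epsilon|\ybf|}$. Using the Peierls-phase composition law \eqref{eq:phis} iteratively, I can factor out a single outer phase $\e^{\im b\phi(\xbf,\ybf)}$, leaving only bounded phases $\e^{\im b\phi_s(\cdot,\cdot)}$ inside the integrand.

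The heart of the argument is the observation that convolutions of the elementary factors increase the indices additively: $\mathfrak{f}_{n_1,\alpha_1}\ast\mathfrak{f}_{n_2,\alpha_2}=\mathfrak{f}_{n_1+n_2,\alpha_1+\alpha_2}$, by the convolution theorem together with \eqref{eq:fna}. When all composed kernels are of type \eqref{eq:kernels1} this identity, combined with changes of variables to absorb the non-translation-invariant pieces $a(\xbf_j)$ and the bounded phases $\e^{\im b\phi_s}$, shows that the iterated kernel is an aggregate of $\mathfrak{f}_{N_{\text{tot}},\alpha_{\text{tot}}}$-type functions. Since $\mathfrak{f}_{n,\alpha}\in C^m(\R^2)$ as soon as $2n>|\alpha|+m+2$, taking $N$ sufficiently large provides arbitrarily high regularity. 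For factors of type \eqref{eq:kernels2} the reflection $\ybf^\ast$ breaks pure translation invariance, so there I would not rely on convolution; instead I would estimate pointwise via the exponential-decay bounds \eqref{eq:S-estimate}--\eqref{eq:1divx-estimate}, exploiting that $\xbf,\ybf\in E$ implies $|\xbf-\ybf^\ast|\ge|\xbf-\ybf|$, so that the reflected factors still contribute the same exponential decay in the intermediate integration variables and may be treated as smooth, exponentially decaying coefficient functions.

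Derivatives in $\xbf$ distribute over the factor $a(\xbf)$ (bounded with all derivatives), the outer phase $\e^{\im b\phi(\xbf,\ybf)}$ (producing polynomial factors in $(\xbf,\ybf)$ that are controlled by the exponential decay), and the factors $\mathfrak{f}_{n,\alpha}(\xbf-\cdot)$ (each differentiation raising $|\alpha|$ by one, which is compensated by the surplus in $N_{\text{tot}}$). The factor $\e^{\epsilon|\ybf|}$ is absorbed by choosing $\epsilon<\delta\sqrt{\lambda}$ with $\delta$ as in \eqref{eq:S-estimate}--\eqref{eq:1divx-estimate}, because the compact support of $g$ in $\xbf$ and the exponential decay in $|\xbf-\ybf|$ together dominate it. The main technical obstacle is the bookkeeping: tracking how the indices $(n_j,\alpha_j)$ aggregate under both composition and $\xbf$-differentiation, and verifying that for any prescribed $m$ one can choose $N$ large enough that $2N_{\text{tot}}-|\alpha_{\text{tot}}|>m+2$ in every term of the finite sum, especially for the mixed products where some elementary kernels carry a reflection and therefore have to be estimated outside the convolution calculus.
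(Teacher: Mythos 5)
Your overall skeleton (decompose each $Y_i$ into elementary kernels of the forms \eqref{eq:kernels1}--\eqref{eq:kernels2}, factor out one Peierls phase via \eqref{eq:phis}, gain regularity from iterated convolutions of the $\mathfrak{f}_{n,\alpha}$, then absorb $\e^{\epsilon\abs{X}}$ using the exponential off-diagonal decay and the compact support of $g$, and finish with Lemma~\ref{lem:L2toLinf}) is the same as the paper's, but the central step is not justified. The composition of two kernels of type \eqref{eq:kernels1} is \emph{not} a convolution: after factoring the outer phase, the integrand still contains $\e^{\im b\phi_s(\xbf-\ybf,\ybf-\xbf')}$, and $\phi_s$ is bilinear, so it couples the two difference variables; moreover the coefficient of the second factor is $a_2(\ybf)$, evaluated at the integration variable. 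No change of variables removes either obstruction, so you cannot conclude that the iterated kernel is an aggregate of $\mathfrak{f}_{N_{\mathrm{tot}},\alpha_{\mathrm{tot}}}$-type functions --- it is only of that form modulo a remainder, and producing and controlling that remainder is the heart of the matter. The paper's Lemma~\ref{lem:Fourier_decay2} does this by Taylor-expanding both $\e^{\im b(\cdot)\phi_s}$ and $a_2(\xbf)-a_2(\ybf)$ to order $M+1$ with integral remainder: the polynomial terms, once the monomials in $\xbf-\ybf$ and $\ybf-\xbf'$ are moved inside the Fourier transforms, become genuine convolutions covered by \eqref{eq:convolution} with a quantifiable worsening of the indices $(n,\eta)$, while the remainder carries a high power of $\phi_s$ that kills the singularities and is estimated directly as a $C^M$ kernel with exponential decay \eqref{eq:Gexpdecay}. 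In addition, the composition integrals run over $E$, not $\R^2$, so even for the polynomial terms the convolution theorem applies only after extending the domain; the correction terms (which are reflected kernels) must then be handled separately using $\dist(\Supp g,\partial E)>0$, as in the first step of Lemma~\ref{lem:K2}. Neither of these two points appears in your proposal.

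The second gap is the mixed case. Treating a reflected factor $\mathfrak{f}_{n,\alpha}(\,\cdot-(\cdot)^*)$ as a ``smooth, exponentially decaying coefficient'' is not accurate (it is singular when both of its arguments approach $\partial E$), and, more importantly, it does not tell you how to take up to $m$ derivatives in $\xbf$ of an expression of the form $\int_E \mathfrak{f}_{n_1,\alpha_1}(\xbf-\ybf)\,(\cdots)\dd\ybf$ once the convolution calculus is no longer available: $\partial_\xbf^\alpha\mathfrak{f}_{n_1,\alpha_1}(\xbf-\ybf)$ fails to be integrable near $\ybf=\xbf$ for $\abs{\alpha}$ large, so you cannot differentiate under the integral sign, and the exponential bounds \eqref{eq:S-estimate}--\eqref{eq:1divx-estimate} give no derivative gain. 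The paper's Lemma~\ref{lem:K3} supplies the missing mechanism: split the $\ybf$-integration into a small ball around $\xbf$ and its complement, and in the near region change variables $\wbf=\xbf-\ybf$ so that every $\xbf$-derivative falls on the reflected factor $\mathfrak{f}_{n_2,\alpha_2}(\xbf-\wbf-(\xbf')^*)$, which is nonsingular there precisely because $x_2>c$ on $\Supp g$. Some such argument is indispensable; flagging the mixed products as ``bookkeeping'' leaves the essential difficulty unresolved. The remaining ingredients of your outline (the criterion $2n>\abs{\alpha}+m+2$ for $\mathfrak{f}_{n,\alpha}\in C^m$, the choice $\epsilon<\delta\sqrt{\lambda}$, and the reduction of the $C^m$-bound to Lemma~\ref{lem:L2toLinf}) are consistent with the paper.
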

\begin{proof}
    As discussed above, the product $Y_{k_1}\dots Y_{k_N}$ can be written as a sum of integral operators with kernels of the form
\begin{equation}\label{eq:K}
    K(\xbf,\xbf')=\int_E\cdots\int_E K_1(\xbf,\ybf_1)\dots K_{N}(\ybf_{N-1},\xbf')\dd \ybf_1\dots \dd\ybf_{N-1},
\end{equation}
where each $K_j$ is either given by \eqref{eq:kernels1} or \eqref{eq:kernels2}. For every such $K$, if $N\geq m+3$, Lemma~\ref{lem:K1}, Lemma~\ref{lem:K2}, and Lemma~\ref{lem:K3} below give, for every $\alpha\leq m$ the existence of $C,\epsilon>0$ such that  
\begin{equation}
\label{eq:aux21}
   \abs{ \partial_1^\alpha( g(\xbf) K(\xbf,\xbf'))}\leq Ce^{-\epsilon\abs{\xbf'}},
\end{equation}
for all $\xbf,\xbf'\in E$. Note that by the proofs of Lemma~\ref{lem:K2} and Lemma~\ref{lem:K3}, the localization obtained by multiplying with $g$ is necessary. Note also that the condition $N\geq m+3$ is due to Lemma~\ref{lem:K2}. From inequality \eqref{eq:aux21} and Lebesgue's dominated convergence theorem the result follows. 
\end{proof}

The rest of the section is devoted to proving the three results used in the proof of Lemma~\ref{lem:stws}, namely Lemma~\ref{lem:K1}, Lemma~\ref{lem:K2}, and Lemma~\ref{lem:K3} below. These lemmas regard the regularity of product of integral kernels of the form \eqref{eq:K} with $K_j$ given as in either \eqref{eq:kernels1} or \eqref{eq:kernels2}. Specifically, Lemma~\ref{lem:K1} covers the cases when $K_1$ is of the form \eqref{eq:kernels2}, Lemma~\ref{lem:K2} covers the case when no $K_j$ is of the form \eqref{eq:kernels2}, and Lemma~\ref{lem:K3} the case when for some $j\geq 2$, $K_j$ is of the form \eqref{eq:kernels2}. 

By the convolution theorem (cf. \cite[Theorem 30.4]{Treves}) it follows that 
\begin{equation}\label{eq:convolution}
    \mathfrak{f}_{n+m,\alpha+\beta}=\FT^{-1}\Big(\frac{\xi^\alpha}{(\abs{\xi}^2+\lambda)^n}\frac{\xi^\beta}{(\abs{\xi}^2+\lambda)^m}\Big)= \mathfrak{f}_{n,\alpha}* \mathfrak{f}_{m,\beta},
\end{equation}
for all $\alpha,\beta\in \N_0^2$ and $n,m\in \N$. If $\abs{\alpha}\leq n$, then $f_{n,\alpha}$ may be written as a finite convolution of $f_{1,\beta}$ where $\abs{\beta}\leq 1$ and, from the properties of $K_0$ and $K_1$, it follows that $\mathfrak{f}_{n,\alpha}\in \Cinf{\R^2\setminus\{0\}}\cap L^1(\R^2)$ and that for any $c>0$ and $\beta\in \N_0^2$ there exist $C,\delta>0$ such that 
\begin{equation}\label{eq:ffrak}
    \abs{\partial^\beta\mathfrak{f}_{n,\alpha}(\xbf)}\leq C\e^{-\delta\abs{\xbf}},
\end{equation}
for all $\abs{\xbf}\geq c$. Hence, by using the fact that $
    \abs{\xbf-\ybf}\leq \abs{\xbf-\ybf^*}$, for all $\xbf,\ybf\in E$, it follows that, if $K_j$ is of either the form \eqref{eq:kernels1} or \eqref{eq:kernels2}, then there exists some $\delta_0>0$ such that 
    \begin{equation}\label{eq:qintegral}
        \max\bigg\{ \sup_{\xbf\in E} \int_{E}\e^{\delta\abs{\xbf-\ybf}}\abs{K_j(\xbf,\ybf)}\dd \ybf , \sup_{\ybf\in E} \int_{E}\e^{\delta\abs{\xbf-\ybf}}\abs{K_j(\xbf,\ybf)}\dd \xbf\bigg\} <\infty
    \end{equation}
for all $\delta\in [0,\delta_0)$.

Below we prove the first case where $K_1$ is of the form \eqref{eq:kernels2}. This is the easiest case since the presence of $\ybf^*$ in the formula for $K_1(\xbf,\ybf)$ and a localizing function $g\in \Ccinf{E}$ allow us to simply use Lebesgue's dominated convergence theorem when differentiating $K$. This result will be applied several times when dealing with the other two cases.
\begin{lemma}\label{lem:K1}
    Let $K$ be as in \eqref{eq:K} and assume that $K_1$ is of the form \eqref{eq:kernels2}, i.e.\ $K_1(\xbf,\ybf)=\e^{\im b\phi(\xbf,\ybf)}a_1(\xbf)h_1(\xbf-\ybf)\mathfrak{f}_{n_1,\alpha_1}(\xbf-\ybf^*)$. Then, for any $g\in \Ccinf{E}$ and $\alpha\in\N_0^2$, there exist $C,\delta>0$ such that 
    \begin{equation}\label{eq:qestimate}
        \abs{\partial_1^\alpha g(\xbf)K(\xbf,\xbf')}\leq C\e^{-\delta\abs{\xbf'}}
    \end{equation}
    for all $\xbf\in E,\xbf'\in E$. 
\end{lemma}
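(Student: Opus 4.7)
The plan is to exploit the key feature of the form \eqref{eq:kernels2}, namely that the singular factor $\mathfrak{f}_{n_1,\alpha_1}$ is evaluated at $\xbf-\ybf^*$ rather than $\xbf-\ybf$. Since $g\in\Ccinf{E}$ has compact support inside the \emph{open} half-plane $E$, there exists $c>0$ such that $x_2\geq c$ for all $\xbf\in\Supp(g)$. Combined with $y_2>0$, this gives
\begin{equation*}
    \abs{\xbf-\ybf^*}\geq \abs{x_2+y_2}\geq c,\qquad \xbf\in\Supp(g),\ \ybf\in E,
\end{equation*}
so the argument of $\mathfrak{f}_{n_1,\alpha_1}$ never reaches the singularity at the origin. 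By \eqref{eq:ffrak}, $\mathfrak{f}_{n_1,\alpha_1}$ and all its derivatives decay exponentially on $\{\abs{\cdot}\geq c\}$.

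Thanks to this observation, the map $\xbf\mapsto g(\xbf)K_1(\xbf,\ybf)$ is smooth in $\xbf$ uniformly in $\ybf\in E$: the Peierls phase $\e^{\im b\phi(\xbf,\ybf)}$ and $a_1$ are smooth of polynomial growth, $h_1\in\slowlyinc{\R^2}$, and all $\xbf$-derivatives of $\mathfrak{f}_{n_1,\alpha_1}(\xbf-\ybf^*)$ are smooth and exponentially decaying. Hence I can apply Leibniz's rule and differentiate under the integral sign via Lebesgue's dominated convergence theorem, obtaining
\begin{equation*}
    \partial_1^\alpha\bigl(g(\xbf)K(\xbf,\xbf')\bigr)=\int_E\!\cdots\!\int_E \widetilde{K}_1(\xbf,\ybf_1)K_2(\ybf_1,\ybf_2)\cdots K_N(\ybf_{N-1},\xbf')\dd\ybf_1\cdots\dd\ybf_{N-1},
\end{equation*}
where $\widetilde{K}_1(\xbf,\ybf):=\partial_1^\alpha(g(\xbf)K_1(\xbf,\ybf))$ is a finite sum of terms, each of which is bounded, uniformly in $\xbf\in E$, by $C\e^{-\delta_1\abs{\xbf-\ybf^*}}\leq C\e^{-\delta_1\abs{\xbf-\ybf}}$ for some $C,\delta_1>0$ (using \eqref{eq:ffrak}, the boundedness of $g$ and its derivatives, and that $a_1,h_1,\phi$ contribute only polynomial factors that can be absorbed into a slightly smaller exponential rate).

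The final step is a standard iterated Schur-type estimate. Since $\xbf\in \Supp(g)$ is confined to a bounded set, it suffices to prove an $\e^{-\delta\abs{\xbf-\xbf'}}$ bound. Using the triangle inequality
\begin{equation*}
    \abs{\xbf-\xbf'}\leq \abs{\xbf-\ybf_1}+\abs{\ybf_1-\ybf_2}+\cdots+\abs{\ybf_{N-1}-\xbf'},
\end{equation*}
I extract a factor $\e^{-\delta\abs{\xbf-\xbf'}/N}$ from the product of exponentials (this is where I need the rate in each factor to be strictly larger than $\delta/N$, forcing me to take $\delta$ small) and estimate the remaining integrals iteratively via the uniform bounds \eqref{eq:qintegral} valid for each $K_j$ as well as for $\widetilde K_1$. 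This yields \eqref{eq:qestimate} after absorbing the bounded diameter of $\Supp(g)$ into the constant $C$.

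The main obstacle, albeit a mild one, is bookkeeping of the exponential rates: I need to ensure that after splitting off the overall $\e^{-\delta\abs{\xbf-\xbf'}}$ factor, what remains still satisfies \eqref{eq:qintegral} with some positive rate so that the iterated integrals converge. This is handled by choosing $\delta\in(0,\delta_0)$ sufficiently small, where $\delta_0$ is the common exponential rate available from \eqref{eq:ffrak} and \eqref{eq:qintegral}.
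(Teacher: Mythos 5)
Your proof is correct and follows essentially the same route as the paper's: the localization $\dist(\Supp g,\partial E)>0$ keeps $\abs{\xbf-\ybf^*}$ away from the singularity of $\mathfrak{f}_{n_1,\alpha_1}$, so \eqref{eq:ffrak} gives exponential decay of all $\xbf$-derivatives of $g(\xbf)K_1(\xbf,\ybf)$ after absorbing the polynomial factors from $\phi$, $a_1$, $h_1$, and then $\abs{\xbf-\ybf}\leq\abs{\xbf-\ybf^*}$, the triangle inequality, the weighted Schur bounds \eqref{eq:qintegral}, and dominated convergence yield \eqref{eq:qestimate}. The only cosmetic difference is that you route the decay through $\e^{-\delta\abs{\xbf-\xbf'}}$ while the paper extracts $\e^{-\delta\abs{\ybf_1}}$ directly (both exploiting the boundedness of $\Supp g$), and your phrase about extracting the factor ``from the product of exponentials'' should be read as inserting $\e^{\pm\delta\abs{\cdot}}$ on each segment, since $K_2,\dots,K_N$ are only exponentially weighted in the integrated sense of \eqref{eq:qintegral}, exactly as you then use.
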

\begin{proof}
    Let $g\in \Ccinf{E}$ be arbitrary. Note first that all the partial derivatives of $\e^{\im b\phi(\xbf,\ybf)}h_1(\xbf-\ybf)a_1(\xbf)$ are bounded on $E\times E$ by some polynomial in $\abs{\xbf-\ybf^*}$. Hence, by using the exponential decay given by \eqref{eq:ffrak} and the fact that the singularity of $K_1$ plays no role since $\dist(\Supp g, \partial E)>c>0$ for some $c$, for all $\alpha\in \N_0^2$ there exist $C,\delta>0$ such that
    \begin{equation}
    \label{eq:aux212}
        \abs{\partial_1^\alpha g(\xbf)K_1(\xbf,\ybf)}\leq\chi_{\Supp g}(\xbf)C\e^{-\delta\abs{\xbf-\ybf^*}},
    \end{equation}
    where $\chi_{\Supp g}$ denotes the indicator function of the support of $g$. Then, triangle inequality gives
    \begin{equation}\label{eq:ineq1}
        \e^{-\delta\abs{\xbf-\ybf^*}}\leq \e^{-\delta\abs{\xbf-\ybf}}\leq \e^{\delta\abs{\xbf}}\e^{-\delta\abs{\ybf}},
        \end{equation}
    From \eqref{eq:ineq1}, the above estimate, \eqref{eq:qintegral}, and Lebesgue's dominated convergence theorem the result follows.
\end{proof}
When $K_1(\xbf,\ybf)=\e^{\im b\phi(\xbf,\ybf)}a_1(\xbf)\mathfrak{f}_{n_1,\alpha_1}(\xbf-\ybf)$ there are two basic cases to consider. Either every $K_j$ is of the same form of $K_1$, or at least one $K_j$ is of the form \eqref{eq:kernels2}. In both cases, the following lemma is used to handle products of two kernels of the form $\e^{\im b\phi(\xbf,\ybf)}a(\xbf)\mathfrak{f}_{n,\alpha}(\xbf-\ybf)$ by using the convolution property of the Fourier transform and the regularity properties of $\mathfrak{f}_{\eta,\alpha}$.

\begin{lemma}\label{lem:Fourier_decay2}
    Let $K_1,K_2$ be of the form in \eqref{eq:kernels1}. Then, for any $M\in \N$, there exists a function $G(\cdot,\cdot)\colon \R^2 \times \R^2 \to \C$ such that

    \begin{enumerate}[label=(\roman*)]
        \item for all $\xbf'\in \R^2$, $G(\cdot,\xbf')\in C^{M}(\R^2)$.
        \item For all $\alpha\in\N_0^2$ with $\abs{\alpha}\leq M$ there exist $C,\delta>0$ such that
        \begin{equation}\label{eq:Gexpdecay}
            \abs{\partial_1^{\alpha}G(\xbf,\xbf')}\leq C\e^{-\delta\abs{\xbf-\xbf'}}, \qquad \forall \, \xbf,\xbf'\in \R^2\, .
        \end{equation}
        \item \label{lem:Fourier_decay2_p3} The product kernel $K_{\R^2}(\xbf,\xbf'):=\int_{\R^2} K_1(\xbf,\ybf)K_2(\ybf,\xbf')\dd \ybf$ can be written as 
        \begin{equation}\label{eq:almostconvolution}
            K_{\R^2}(\xbf,\xbf')=\e^{\im b\phi(\xbf,\xbf')}G(\xbf,\xbf')+\e^{\im b\phi(\xbf,\xbf')}\sum_{n,\eta} a_{n,\eta}(\xbf) \mathfrak{f}_{n,\eta} (\xbf-\xbf'),
        \end{equation}
        where the sum is taken over all pairs $(n,\eta)\in \N\times\N_0^{2}$ such that $\abs{\eta}\leq \abs{\alpha_1}+\abs{\alpha_2}+2M+2$ and
        \begin{equation*}
            \abs{\eta}-2n=\abs{\alpha_1}+\abs{\alpha_2}-2n_1-2n_2-k
        \end{equation*}
        for some $k\in \{0,\dots,2M+2\}$. Furthermore, $a_{n,\eta}\in \BCinf{\R^2}$.
    \end{enumerate}
    
\end{lemma}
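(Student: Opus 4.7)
The plan is to substitute $\ybf = \xbf - \zbf$, use the Peierls phase composition to isolate the global phase $e^{\iu b \phi(\xbf, \xbf')}$, Taylor expand the smooth factor in $\zbf$, and handle the polynomial and remainder contributions separately. Writing $\wbf := \xbf - \xbf'$ and combining the identity \eqref{eq:phis} with $\phi_s(\zbf, \wbf - \zbf) = \phi_s(\zbf, \wbf)$ (which follows directly from the explicit form of $\phi_s$), one obtains
\[
    K_{\R^2}(\xbf, \xbf') = e^{\iu b \phi(\xbf, \xbf')} a_1(\xbf) \int_{\R^2} F(\xbf, \zbf, \wbf)\, \mathfrak{f}_{n_1, \alpha_1}(\zbf)\, \mathfrak{f}_{n_2, \alpha_2}(\wbf - \zbf)\, \dd \zbf,
\]
where $F(\xbf, \zbf, \wbf) := e^{\iu b \phi_s(\zbf, \wbf)} a_2(\xbf - \zbf)$ lies in $\Cinf{\R^2 \times \R^2 \times \R^2}$. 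Since $\phi_s$ is linear in $\zbf$ and $\partial_{\zbf_j}\phi_s(\zbf, \wbf)$ is linear in $\wbf$, Taylor expanding $F$ in $\zbf$ around $\zbf = 0$ to order $K := M + 2$ and using the Leibniz rule gives $(\partial_\zbf^\sigma F)(\xbf, 0, \wbf) = \sum_{\sigma_1 + \sigma_2 = \sigma} c_{\sigma_1, \sigma_2}\, \wbf^{\rho(\sigma_1)}\, (\partial^{\sigma_2} a_2)(\xbf)$ with $\abs{\rho(\sigma_1)} = \abs{\sigma_1}$, while the remainder has the form $R_K(\xbf, \zbf, \wbf) = \sum_{\abs{\sigma} = K} \zbf^\sigma g_\sigma(\xbf, \zbf, \wbf)$ with $g_\sigma$ smooth and $\abs{g_\sigma} \leq C(1 + \abs{\wbf})^K$.

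Each polynomial contribution is of the form $a_1(\xbf)(\partial^{\sigma_2} a_2)(\xbf)\, \wbf^{\rho(\sigma_1)}\int_{\R^2} \zbf^\sigma \mathfrak{f}_{n_1, \alpha_1}(\zbf)\, \mathfrak{f}_{n_2, \alpha_2}(\wbf - \zbf)\, \dd \zbf$. Using the Fourier identity $\widehat{\zbf^\sigma u}(\xi) = \iu^{\abs{\sigma}}\partial_\xi^\sigma \hat u(\xi)$, the convolution rule \eqref{eq:convolution}, and Leibniz in Fourier space, this integral equals a finite sum of $\mathfrak{f}_{m, \tau}(\wbf)$'s with $\abs{\tau} - 2m = \abs{\alpha_1} + \abs{\alpha_2} - 2n_1 - 2n_2 - \abs{\sigma}$ and $\abs{\tau} \leq \abs{\alpha_1} + \abs{\alpha_2} + \abs{\sigma}$. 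Multiplying by $\wbf^{\rho(\sigma_1)}$ and applying the same identity once more produces $\sum_{n, \eta} c_{n, \eta}\, \mathfrak{f}_{n, \eta}(\wbf)$ with $\abs{\eta} - 2n = \abs{\alpha_1} + \abs{\alpha_2} - 2n_1 - 2n_2 - k$ and $\abs{\eta} \leq \abs{\alpha_1} + \abs{\alpha_2} + k$, where $k := \abs{\sigma} + \abs{\rho(\sigma_1)} \leq 2\abs{\sigma} \leq 2(K-1) = 2M + 2$. Collecting the factors $a_1(\xbf)(\partial^{\sigma_2}a_2)(\xbf)$ into coefficients $a_{n, \eta} \in \BCinf{\R^2}$ gives the claimed representation in (iii).

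Setting $G(\xbf, \xbf') := a_1(\xbf) \sum_{\abs{\sigma} = K} \int_{\R^2} \zbf^\sigma g_\sigma(\xbf, \zbf, \wbf)\, \mathfrak{f}_{n_1, \alpha_1}(\zbf)\, \mathfrak{f}_{n_2, \alpha_2}(\wbf - \zbf)\, \dd \zbf$, the pointwise bound in (ii) at $\abs{\alpha} = 0$ follows at once from \eqref{eq:ffrak} together with the elementary inequality $e^{-\delta(\abs{\zbf} + \abs{\wbf - \zbf})} \leq e^{-\delta\abs{\wbf}}$, after absorbing the polynomial growth of $g_\sigma$ in $\abs{\wbf}$ into a slightly smaller exponential rate. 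For (i) and the higher-order bounds in (ii), one differentiates under the integral: $\xbf$-derivatives hitting the smooth factors $a_1$ and $g_\sigma$ remain bounded, whereas those falling on $\mathfrak{f}_{n_2, \alpha_2}(\wbf - \zbf)$ are transferred by integration by parts in $\zbf$ onto the factor $\zbf^\sigma g_\sigma\, \mathfrak{f}_{n_1, \alpha_1}(\zbf)$. The main technical obstacle is then to verify that, after any distribution of up to $M$ such $\zbf$-derivatives via Leibniz, the resulting $\zbf$-dependent factor has Fourier symbol of order at most $\abs{\alpha_1} - 2n_1 + M - \abs{\sigma} \leq -n_1 - 2 \leq -3$ (using $\abs{\alpha_1} \leq n_1$ and $\abs{\sigma} = K = M + 2$), so that the integrand remains a bounded continuous function with exponential decay preserved. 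This bookkeeping is precisely what forces the choice $K = M + 2$.
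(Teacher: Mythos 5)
Your proposal is correct and rests on the same mechanism as the paper's proof -- a Taylor expansion to order $M+2$ whose polynomial part is converted into the $\mathfrak{f}_{n,\eta}$ sum through the Fourier identities and the convolution rule \eqref{eq:convolution}, while the integral remainder supplies a factor of degree $M+2$ that makes the leftover kernel $C^M$ with exponential decay -- but the execution differs in two places. First, instead of the paper's decomposition (adding and subtracting $a_2(\xbf)$ and then expanding the exponential in powers of $\phi_s$ and expanding $a_2(\xbf)-a_2(\ybf)$ separately), you substitute $\mathbf{z}=\xbf-\ybf$, use the bilinearity identity $\phi_s(\mathbf{z},\wbf-\mathbf{z})=\phi_s(\mathbf{z},\wbf)$, and Taylor-expand the single smooth factor $\e^{\im b\phi_s(\mathbf{z},\wbf)}a_2(\xbf-\mathbf{z})$ jointly in $\mathbf{z}$; this is tidier bookkeeping, and your index count ($k=|\sigma|+|\rho(\sigma_1)|\leq 2M+2$, $|\eta|\leq|\alpha_1|+|\alpha_2|+2M+2$, with coefficients $a_1\,\partial^{\sigma_2}a_2\in \BCinf{\R^2}$) matches the statement. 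Second, in the remainder you keep the full monomial $\mathbf{z}^\sigma$, $|\sigma|=M+2$, on $\mathfrak{f}_{n_1,\alpha_1}$, so that $\mathfrak{f}_{n_2,\alpha_2}(\wbf-\mathbf{z})$ stays singular and all $\xbf$-derivatives must be routed to the first factor; the paper instead distributes the monomials coming from $\phi_s^{M+2}$ over both factors via the binomial theorem, so both become $C^M$ with bounded derivatives and differentiation under the integral is immediate by dominated convergence. Your route works -- the symbol-order count $|\alpha_1|-2n_1-(M+2)+M\leq -n_1-2\leq -3$ is exactly the right condition -- but the phrase ``transferred by integration by parts'' deserves a more careful justification, since the naively differentiated integrand is not dominated (higher derivatives of $\mathfrak{f}_{n_2,\alpha_2}$ need not be locally integrable near the singularity). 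The cleanest fix, equivalent to your integration by parts, is to substitute $\vbf=\wbf-\mathbf{z}$, so that the singular factor $\mathfrak{f}_{n_2,\alpha_2}(\vbf)$ no longer depends on the differentiation variables and every $\xbf$-derivative falls on $a_1$, on $g_\sigma$, or on the bounded $C^M$ function $(\wbf-\vbf)^\sigma\mathfrak{f}_{n_1,\alpha_1}(\wbf-\vbf)$; dominated convergence and the triangle-inequality absorption of the polynomial growth then give \eqref{eq:Gexpdecay} exactly as you indicate.
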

\begin{proof}
   Let us recall the notation for $K_j$, i.e. $
    K_j(\xbf,\ybf)=:\e^{\im b\phi(\xbf,\ybf)}a_j(\xbf)\mathfrak{f}_{n_j,\alpha_j}(\xbf-\ybf)$,  for $j\in \{1,2\}$. By adding and subtracting $a_2(\xbf)$ and by using the composition rule of the Peierls phase \eqref{eq:phis}, we may rewrite $K_{\R^2}$ as
    \begin{align*}
        K_{\R^2}(\xbf,\xbf')&=\e^{\im b\phi(\xbf,\ybf)}a_1(\xbf)\bigg[a_2(\xbf)\!\!\int_{\R^2}\!\!\!\! \e^{\im b\phi_s(\xbf-\ybf,\ybf-\xbf')}\mathfrak{f}_{n_1,\alpha_1}(\xbf-\ybf)\mathfrak{f}_{n_2,\alpha_2}(\ybf-\xbf')\dd \ybf\\
        &\quad -\int_{\R^2}\!\!\!\! \e^{\im b\phi_s(\xbf-\ybf,\ybf-\xbf')}\mathfrak{f}_{n_1,\alpha_1}(\xbf-\ybf)(a_2(\xbf)-a_2(\ybf))\mathfrak{f}_{n_2,\alpha_2}(\ybf-\xbf')\dd \ybf\bigg]\\
        &\eqqcolon \widetilde{K}(\xbf,\xbf')-\widehat{K}(\xbf,\xbf').
    \end{align*}
    Let us first consider $\tilde{K}$.
    The sum in \eqref{eq:almostconvolution} is a consequence of Taylor's theorem and \eqref{eq:convolution}.
    Specifically, by expanding $\e^{\im b(\cdot)\phi_s(\xbf-\ybf,\ybf-\xbf') }\colon \R \to \C$ up to the order $M+1$ and using the integral form of the remainder we get
    \begin{equation*}
        \widetilde{K}(\xbf,\xbf')=\e^{\im b\phi(\xbf,\xbf') }a_1(\xbf)a_2(\xbf)\Big(G(\xbf,\xbf')+ \sum_{k=0}^{M+1} G_k(\xbf,\xbf')\Big)
    \end{equation*} 
    where 
    \begin{align*}
        G(\xbf,\xbf')=\frac{1}{(M+1)!}\int_{\R^2} \bigg[&(\im b\phi_s(\xbf-\ybf,\ybf-\xbf'))^{M+2}\mathfrak{f}_{n_1,\alpha_1}(\xbf-\ybf) \mathfrak{f}_{n_2,\alpha_2}(\ybf-\xbf')\\
        &\cdot\int_0^1 \e^{\im s b\phi_s(\xbf-\ybf,\ybf-\xbf')}(1-s)^{M+1}\dd s \bigg] \dd \ybf
    \end{align*}
    and 
    \begin{equation*}
        G_k(\xbf,\xbf')=\frac{1}{k!}\int_{\R^2} (\im b\phi_s(\xbf-\ybf,\ybf-\xbf'))^{k} \mathfrak{f}_{n_1,\alpha_1}(\xbf-\ybf) \mathfrak{f}_{n_2,\alpha_2}(\ybf-\xbf') \dd \ybf.
    \end{equation*}
    Let us consider the integral kernel given by $G(\x,\x')$. By using the binomial theorem to expand $\phi_s(\xbf-\ybf,\ybf-\xbf')^{M+2}$ and then moving the terms inside the (inverse) Fourier transforms in $\mathfrak{f}_{\alpha_1,n_1}$ and $\mathfrak{f}_{\alpha_2,n_2}$ we obtain 
\begin{equation}
	    \begin{aligned}\label{eq:Grewritten}
        G(\xbf,\xbf')=\sum_{m_1,\eta_1,m_2,\eta_2}c_{m_1,\eta_1,m_2,\eta_2}\int_{\R^2}\bigg[&\mathfrak{f}_{m_1,\eta_1}(\xbf-\ybf) \mathfrak{f}_{m_2,\eta_2}(\ybf-\xbf')\\
        &\cdot\int_0^1 \frac{\e^{\im  sb\phi_s(\xbf-\ybf,\ybf-\xbf')}}{(M+1)!} (1-s)^{M+1}\dd s\bigg] \dd \ybf,
    \end{aligned}
\end{equation}
    where the sum is taken over all $(m_1,\eta_1),(m_2,\eta_2)\in\N\times \N_0^2$ with $\abs{\eta_j}\leq \abs{\alpha_j}+M+2$ and 
    $
        \abs{\eta_j}-2m_j=\abs{\alpha_j}-M-2-2n_j,
    $
    for $j\in \{1,2\}$. Combining \eqref{eq:Grewritten} with standard results for the Fourier transform and the fact that $2n_j-\abs{\alpha_j}\geq 1$ for $j\in \{1,2\}$, we obtain that $\mathfrak{f}_{m,\eta}$ is at least in $C^M(\R^2)$ for all pairs $(m,\eta)$ occurring in the sum in \eqref{eq:Grewritten}. Since also 
    \[\partial_\xbf^\alpha \e^{\im bs\phi_s(\xbf-\ybf,\ybf-\xbf')}=(\im b s)^{\abs{\alpha}} (\ybf-\xbf')^{\tilde{\alpha}}\e^{\im b s\phi_s(\xbf-\ybf,\ybf-\xbf')},\]
    where $\tilde{\alpha}=[\alpha_2,\alpha_1]$, it follows from Lebesgue's dominated convergence theorem that for all $\xbf'\in \R^2$, $G(\cdot,\xbf')\in C^{M}(\R^2)$. Then, since $a_j \in BC^{\infty}(\R^2)$, a simple application of the Leibinz rule, the triangle inequality, and \eqref{eq:ffrak} imply \eqref{eq:Gexpdecay}.

    The integral kernels associated to $G_k(\x,\x')$ can be treated in a similar way, notice that the lack of the integral remainder makes it possible to directly use \eqref{eq:convolution} to obtain the sum in \eqref{eq:almostconvolution}. 
    
    It remains to treat the integral kernel $\widehat{K}(\x,\x')$, where first we have to expand $a_2(\xbf)-a_2(\ybf)$ using Taylor's formula, and then concluding using the same strategy as before.
\end{proof}
Next we use Lemma~\ref{lem:Fourier_decay2} to treat arbitrary products of kernels of the form \eqref{eq:kernels1}. However, as Lemma~\ref{lem:Fourier_decay2}\ref{lem:Fourier_decay2_p3} concerns integration over all of $\R^2$ we will need to first extend the domain of integration from $E$ to $\R^2$ in \eqref{eq:K} by  using Lemma~\ref{lem:K1}. 

\begin{lemma}\label{lem:K2}
    Let $N\geq 2$ and let $K$ be as in \eqref{eq:K} with $K_j$ of the form in \eqref{eq:kernels1} for $j\in \{1,\dots,N\}$.

    Then, for any $M\in \N$ there exists a function $H(\xbf,\xbf')$ satisfying:
    \begin{enumerate}[label=(\roman*)]
        \item \label{lem:K2Point1} For any $g\in \Ccinf{E}$ and all $\xbf'\in E$, $g(\cdot)H(\cdot,\xbf')\in C^{M}(E)$.
        \item \label{lem:K2Point2} For any $g\in \Ccinf{E}$ and all $\alpha\in\N_0^2$ with $\abs{\alpha}\leq M$ there exist $C,\delta>0$ such that
        \begin{equation}\label{eq:Hexpdecay}
            \abs{\partial_1^{\alpha}g(\xbf)H(\xbf,\xbf')}\leq C\e^{-\delta\abs{\xbf'}},
        \end{equation}
        for all $\xbf,\xbf'\in E$.
        \item \label{lem:K2Point3}The product kernel $K(\xbf,\xbf')$ can be written as 
        \begin{equation}\label{eq:almostconvolution1}
            K(\xbf,\xbf')=H(\xbf,\xbf')+\e^{\im b\phi(\xbf,\xbf')}\sum_{n,\eta} a_{n,\eta}(\xbf) \mathfrak{f}_{n,\eta} (\xbf-\xbf'),
        \end{equation}
        where the sum is taken over all pairs $(n,\eta)\in \N\times\N_0^{2}$ such that $\abs{\eta}\leq \sum_{j=1}^N \abs{\alpha_j}+2(N-1)(M+1)$ and
        \begin{equation*}
            \abs{\eta}-2n=\sum_{j=1}^N(\abs{\alpha_j}-2n_j)-k
        \end{equation*}
        for some $k\in \{0,\dots,2(N-1)(M+1)\}$. Furthermore, $a_{n,\eta}\in \BCinf{\R^2}$.
    \end{enumerate}
    In particular, if $\tilde{N}=\sum_{j=1}^N(n_j-\abs{\alpha_j})$ and $g\in \Ccinf{E}$ then $g(\cdot)K(\cdot,\xbf')\in C^{\tilde{N}-3}(E)$ for $\xbf'\in E$ and additionally, for all $\alpha\in \N_0^2$ with $\abs{\alpha}\leq \tilde{N}-3$ there exist $C,\delta>0$ such that 
    \begin{equation}\label{eq:Kestimate}
        \abs{\partial_1^\alpha K(\xbf,\xbf')}\leq C\e^{-\delta\abs{\xbf'}}
    \end{equation}
    for all $\xbf,\xbf'\in E$.
\end{lemma}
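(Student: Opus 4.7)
I plan to prove (i)--(iii) by induction on $N\geq 2$ and then extract the $C^{\tilde{N}-3}$ estimate at the end. The central idea at every nested integration will be to split $\int_E=\int_{\R^2}-\int_{E^c}$: the full-space integral can be handled by Lemma~\ref{lem:Fourier_decay2}, and the complementary integral, via the reflection $\ybf\mapsto\ybf^*$, becomes an integral over $E$ whose diagonal singularity has been moved off $\{\xbf=\ybf\}$, so that it will contribute to $H$ by the mechanism of Lemma~\ref{lem:K1}.

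For the base case $N=2$, I would write
\[
K(\xbf,\xbf')=\int_{\R^2}K_1(\xbf,\ybf)K_2(\ybf,\xbf')\,d\ybf\;-\;\int_{E^c}K_1(\xbf,\ybf)K_2(\ybf,\xbf')\,d\ybf.
\]
Lemma~\ref{lem:Fourier_decay2} decomposes the first term into a smooth piece $\e^{\im b\phi(\xbf,\xbf')}G(\xbf,\xbf')$ in $C^M$ (which goes into $H$) plus a sum of the form in \eqref{eq:almostconvolution1} satisfying the exponent constraints of (iii) for $N=2$. For the second term, I would substitute $\tilde{\ybf}=\ybf^*$ and observe that for any $g\in\Ccinf{E}$ and $\xbf\in\Supp g$, the distance $|\xbf-\tilde{\ybf}^*|\geq x_2+\tilde{y}_2\geq \dist(\Supp g,\partial E)>0$ keeps $\xbf$ bounded away from the singularity of $\mathfrak{f}_{n_1,\alpha_1}$; by \eqref{eq:ffrak} this factor is then smooth in $\xbf$ with exponential decay, while $\mathfrak{f}_{n_2,\alpha_2}(\tilde{\ybf}^*-\xbf')$ remains locally integrable in $\tilde{\ybf}$. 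The triangle inequality $|\xbf-\tilde{\ybf}^*|+|\tilde{\ybf}^*-\xbf'|\geq |\xbf-\xbf'|$ combined with the compactness of $\Supp g$ then yields the $\e^{-\delta|\xbf'|}$-decay after integration in $\tilde{\ybf}$. The Peierls phase requires some care here: the identity $\phi(\xbf,\tilde{\ybf}^*)-\phi(\xbf,\tilde{\ybf})=(x_1-\tilde{y}_1)\tilde{y}_2$ is not a function of $\xbf-\tilde{\ybf}$ alone, so the reflected integrand does not literally fit the template \eqref{eq:kernels2}; however, the extra factor $\e^{\im b(x_1-\tilde{y}_1)\tilde{y}_2}$ has $\xbf$-derivatives growing only polynomially in $|\xbf-\tilde{\ybf}|$, which will be swallowed by the exponential decay of $\mathfrak{f}_{n_1,\alpha_1}(\xbf-\tilde{\ybf}^*)$ once $\xbf\in\Supp g$, so the proof of Lemma~\ref{lem:K1} adapts \emph{mutatis mutandis}.

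For the inductive step, I would assume (i)--(iii) for $N-1$, decompose the product $K_{[N-1]}$ of the first $N-1$ kernels as $H_{N-1}+\e^{\im b\phi}\sum a_{n,\eta}\mathfrak{f}_{n,\eta}$, and compose with $K_N$. The $H_{N-1}$-part integrated against $K_N$ will be smooth in $\xbf$ with exponential $\xbf'$-decay (differentiate under the integral, using \eqref{eq:Hexpdecay} and \eqref{eq:qintegral}), and enters the new $H$. Each term in the sum has the form \eqref{eq:kernels1}, and its composition with $K_N$ over $E$ will be treated exactly as in the base case, producing a new smooth contribution to $H$ plus a new finite family of $\mathfrak{f}_{n,\eta}$-terms. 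Each composition step adds at most $2(M+1)$ to both $|\eta|$ and to $k$ while preserving the identity $|\eta|-2n=\sum_j(|\alpha_j|-2n_j)-k$, so the bounds in (iii) propagate through all $N-1$ compositions.

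Finally, for the $C^{\tilde{N}-3}$ statement I would choose $M=\tilde{N}-3$ (if $\tilde{N}<3$ the claim is vacuous). Every $\mathfrak{f}_{n,\eta}$ in \eqref{eq:almostconvolution1} then satisfies $2n-|\eta|=\sum_j(2n_j-|\alpha_j|)+k\geq 2\tilde{N}$, and the elementary symbol estimate $\int_{\R^2}(1+|\xi|)^{|\beta|+|\eta|-2n}\,d\xi<\infty$ for $|\beta|\leq 2n-|\eta|-3$ gives $\mathfrak{f}_{n,\eta}\in C^{2n-|\eta|-3}(\R^2)\subseteq C^{\tilde{N}-3}(\R^2)$. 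Combined with (i)--(ii) for $H$ and with \eqref{eq:ffrak} for exponential decay away from the diagonal, this yields \eqref{eq:Kestimate}. The hardest step is the phase bookkeeping described in the base case together with verifying that the exponent identities of (iii) remain stable under $N-1$ nested compositions; once this is in place, everything else reduces to Lemma~\ref{lem:Fourier_decay2}, Lemma~\ref{lem:K1}, and standard dominated convergence arguments.
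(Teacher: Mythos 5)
Your argument is correct and essentially reproduces the paper's proof: the same splitting $\int_E=\int_{\R^2}-\int_{-E}$ with the reflected piece absorbed into $H$ via the mechanism of Lemma~\ref{lem:K1} (including the key observation that $\partial_1^\alpha\phi(\xbf,\ybf^*)$ grows only polynomially in $\abs{\xbf-\ybf^*}$), iterated applications of Lemma~\ref{lem:Fourier_decay2} with the same $2(M+1)$-per-step exponent bookkeeping, localization by $g$ to convert off-diagonal decay into decay in $\abs{\xbf'}$, and the symbol-integrability estimate for the final $C^{\tilde N-3}$ claim. Casting the argument as an induction on $N$ (composing with $K_N$ last) rather than the paper's left-to-right iteration over the factors is only an organizational difference, not a different route.
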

\begin{proof}
    Let $g\in \Ccinf{E}$ be arbitrary. The first step in the proof is to extend the domain of integration for the variable $\ybf_1$ in \eqref{eq:K} from $E$ to $\R^2$. Hence, we write
    \begin{align*}
        K(\xbf,\xbf')&=\int_E\dots\int_E \int_{\R^2}K_1(\xbf,\ybf_1)K_2(\ybf_1,\ybf_2)\dots K_N(\ybf_{N-1},\xbf') \dd \ybf_1\dots \dd \ybf_{N-1}\\
        &\quad- \int_E\dots\int_E \int_{-E} K_1(\xbf,\ybf_1)\dots K_N(\ybf_{N-1},\xbf') \dd \ybf_1\dots \dd \ybf_{N-1}\\
        &\eqqcolon \tilde{K}(\xbf,\xbf')-\widehat{K}(\xbf,\xbf').
    \end{align*}
    By applying the change of variable $\ybf_1\mapsto \ybf_1^*$ in $\widehat{K}$ and taking into account that $\partial_1^\alpha\phi(\xbf,\ybf^*)$ is still polynomially bounded in $\abs{\xbf-\ybf^*}$ for $\xbf,\ybf\in E$, the arguments used in the proof of Lemma~\ref{lem:K1} show that $g\widehat{K}$ satisfies \eqref{eq:qestimate}.

    The next step is to examine $\tilde{K}$. By applying Lemma~\ref{lem:Fourier_decay2} to the integral with respect to $\ybf_1$ in $\tilde{K}$, i.e.\ $\int_{\R^2}K_1(\xbf,\ybf_1)K_2(\ybf_1,\ybf_2)\dd \ybf_1$, we can write $\tilde{K}$ as  
    \begin{align*}
        \tilde{K}(\xbf,\xbf')&=\sum_{n,\eta}a_{n,\eta}(\xbf)\int_E\dots\int_E \bigg[\e^{\im b\phi(\xbf,\ybf_2)}\mathfrak{f}_{n,\eta}(\xbf-\ybf_2)K_3(\ybf_2,\ybf_3)\\
        &\phantom{=\sum_{n,\eta}a_{n,\eta}(\xbf)\int_E\dots\int_E \bigg[}\cdots K_N(\ybf_{N-1},\xbf')\bigg] \dd \ybf_2\dots \dd \ybf_{N-1}\\
        &\quad +\int_E\dots\int_E \bigg[\e^{\im b\phi(\xbf,\ybf_2)}G(\xbf,\ybf_2) K_3(\ybf_2,\ybf_3)\\
        &\phantom{\quad +\int_E\dots\int_E \bigg[}\cdots K_N(\ybf_{N-1},\xbf')\bigg] \dd \ybf_2\dots \dd \ybf_{N-1}\\
        &\eqqcolon\sum_{n,\eta} \tilde{K}_{n,\eta}(\xbf,\xbf')+\tilde{G}(\xbf,\xbf'),
    \end{align*} 
    where $G$, $\eta$, $n$ and $a_{n,\eta}$ are as in Lemma~\ref{lem:Fourier_decay2}. In particular, we have that all pairs $(n,\eta)$ in the sum above satisfy $2n-\abs{\eta}\geq 2(n_1+n_2)-\abs{\alpha_1}+\abs{\alpha_2}$, and that $G(\cdot,\xbf')\in C^M(\R^2)$ for all $\xbf'\in \R ^2$, with derivatives that decay exponentially in $\abs{\xbf-\xbf'}$.    

    Next we argue that $\tilde{G}$ satisfies \ref{lem:K2Point1} and \ref{lem:K2Point2}. Note that, for any $\alpha$ there exists $k$ such that 
    \begin{equation}\label{eq:ineq2}
        \abs{\partial_1^\alpha \phi(\xbf,\ybf)}\leq C\jnorm{\xbf-\ybf^*}^k\leq C\jnorm{\xbf}^k\jnorm{\ybf}^k,
    \end{equation} 
    which implies that it is not possible to use Lebesgue's dominated convergence theorem when differentiating $\tilde{G}$ without also having the localization provided by $g$. However, \eqref{eq:ineq1}, \eqref{eq:ineq2}, \eqref{eq:Gexpdecay}, and \eqref{eq:qintegral} imply that $g\tilde{G}$ satisfies point \ref{lem:K2Point1} and \ref{lem:K2Point2}  above. 

    After that, for each $K_{n,\eta}$ we repeat the process of extending the domain and applying Lemma~\ref{lem:Fourier_decay2} until we obtain the formula 
    \begin{equation*}
        \tilde{K}_{n,\eta}(\xbf,\xbf')=H_{n,\eta}(\xbf,\xbf')+ \e^{ \im b \phi(\xbf,\xbf')}\sum_{n',\eta'} a^{({n,\eta})}_{n',\eta'}(\xbf) \mathfrak{f}_{n',\eta'}(\xbf-\xbf'),
    \end{equation*}
    where $H_{n,\eta}$ satisfies \ref{lem:K2Point1} and \ref{lem:K2Point2} above, and the sum is taken over all $(n',\eta')\in \N\times \N_0^2$ with $\abs{\eta'}\leq \sum_{j=1}^N \abs{\alpha_j}+2(N-1)(M+1)$ and 
    $\abs{\eta'}-2n=\sum_{j=1}^N(\abs{\alpha_j}-2n_j')-k$
    for some $k\in \{0,\dots,2(N-1)(M+1)\}$. 
\end{proof}

We are now in a position to consider the last remaining case, where $K_1$ is of the form \eqref{eq:kernels1} and $K_j$ is of the form \eqref{eq:kernels2} for some $j\geq 2$.

\begin{lemma}\label{lem:K3}
    Let $N\geq 2$ and suppose that $K$ is as in \eqref{eq:K} with $K_j$ is either of the form \eqref{eq:kernels1} or \eqref{eq:kernels2}, for $j\in \{1,\dots,N\}$. Moreover, suppose that for some $j_0\in \{1,\dots,N\}$, $K_{j_0}$ is of the form \eqref{eq:kernels2}.

    Then, for any $M\in \N$ and $g\in \Ccinf{E}$, the function $g(\cdot)K(\cdot,\xbf')\in C^{M}(E)$ for all $\xbf'\in E$. Additionally, for all $\alpha\in\N_0^2$ with $\abs{\alpha}\leq M$ there exist $C,\delta>0$ such that 
    \begin{equation*}
        \abs{\partial_1^\alpha g(\xbf)K(\xbf,\xbf')}\leq C\e^{-\delta\abs{\xbf'}}
    \end{equation*}
    for all $\xbf,\xbf'\in E$.
\end{lemma}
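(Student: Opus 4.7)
The strategy is to reduce to the setting of Lemma~\ref{lem:K1}, where the kernel of form \eqref{eq:kernels2} sits at the first position. If $j_0=1$ the statement is immediate from Lemma~\ref{lem:K1}. Suppose therefore that $j_0\geq 2$, so that $K_1,\dots,K_{j_0-1}$ are all of form \eqref{eq:kernels1}. Apply Lemma~\ref{lem:K2} to the sub-product $K_1\cdots K_{j_0-1}$ with the smoothness parameter chosen at least equal to $M$, obtaining a decomposition
\begin{equation*}
    (K_1\cdots K_{j_0-1})(\xbf,\ybf_{j_0-1})= H(\xbf,\ybf_{j_0-1})+\e^{\im b\phi(\xbf,\ybf_{j_0-1})}\sum_{n,\eta}a_{n,\eta}(\xbf)\mathfrak{f}_{n,\eta}(\xbf-\ybf_{j_0-1}),
\end{equation*}
where $g(\xbf)H(\xbf,\ybf_{j_0-1})$ is already $C^M$ in $\xbf$ with exponential decay in $\ybf_{j_0-1}$, and each summand is a kernel of form \eqref{eq:kernels1}.

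The $H$-contribution is treated directly: inserting it into the full product, the uniform estimate on $\partial_1^\alpha g(\xbf)H(\xbf,\ybf_{j_0-1})$ coming from Lemma~\ref{lem:K2}\ref{lem:K2Point2} together with the integrability \eqref{eq:qintegral} of $K_{j_0}\cdots K_N$ lets us differentiate under the integral sign, producing a $C^M$ kernel with the required exponential decay in $\xbf'$.

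The Sum-contribution reduces the problem to products of the form $K^{(1)}K_{j_0}K_{j_0+1}\cdots K_N$, where $K^{(1)}$ is of form \eqref{eq:kernels1} and $K_{j_0}$ of form \eqref{eq:kernels2}. The key remaining step is to combine $K^{(1)}$ and $K_{j_0}$ into a (finite) sum of form-\eqref{eq:kernels2} kernels plus a smooth remainder. Following the pattern of Lemma~\ref{lem:Fourier_decay2}, I would use the composition identity \eqref{eq:phis} for the Peierls phase, expand $\e^{\im b\phi_s(\xbf-\ybf_{j_0-1},\ybf_{j_0-1}-\ybf_{j_0})}$ and the slowly-increasing factor $h_{j_0}$ by Taylor's formula, split $\int_E=\int_{\R^2}-\int_{-E}$, and exploit the convolution identity
\begin{equation*}
    \int_{\R^2}\mathfrak{f}_{m_1,\eta_1}(\xbf-\ybf)\,\mathfrak{f}_{m_2,\eta_2}(\ybf-\ybf_{j_0}^*)\dd\ybf=\mathfrak{f}_{m_1+m_2,\eta_1+\eta_2}(\xbf-\ybf_{j_0}^*),
\end{equation*}
which shows that the reflection structure in the argument survives the $\R^2$-convolution. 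The $-E$ correction, after the change of variables $\ybf_{j_0-1}\mapsto \ybf_{j_0-1}^*$, inherits the same reflection structure and is handled as in Lemma~\ref{lem:K1} by using that $g(\xbf)$ separates $\xbf$ from the boundary.

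The outcome of this reduction is a finite sum of products in which a kernel of form \eqref{eq:kernels2} occupies the first position and is followed by $K_{j_0+1},\dots,K_N$, plus a remainder of the same type as $H$ above. To each such product Lemma~\ref{lem:K1} applies directly and yields the asserted estimate. The main technical obstacle I anticipate is the bookkeeping of the Taylor remainders: one must check that the remainder that is not of the ``form \eqref{eq:kernels2} times smooth'' type is already $C^M$ in $\xbf$ (after multiplication by $g$) with the required exponential decay in $\ybf_{j_0}$, so that integrating it against the remaining kernels $K_{j_0+1},\dots,K_N$ preserves both the smoothness and the decay in $\xbf'$; this requires taking the order of the Taylor expansion large enough in terms of $M$ and the exponents $n_j,\alpha_j$ appearing in the product, mimicking the tracking done in Lemma~\ref{lem:Fourier_decay2}\ref{lem:Fourier_decay2_p3}.
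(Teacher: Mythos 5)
Your proposal is correct in outline and shares the paper's first step -- dispatching $j_0=1$ by Lemma~\ref{lem:K1} and collapsing the block $K_1\cdots K_{j_0-1}$ by Lemma~\ref{lem:K2} so that the reflected kernel is immediately preceded by a single kernel of the form \eqref{eq:kernels1} -- but it genuinely diverges at the two-kernel combination, and there the paper is more economical. The paper never rewrites $K^{(1)}K_{j_0}$ in standard form: it defines the composed kernel $\tilde K$ and proves directly the pointwise bound $\abs{\partial_1^\alpha \tilde K(\xbf,\xbf')}\leq C\e^{-\delta\abs{\xbf-(\xbf')^*}}$ for $x_2>c$, by inserting a cutoff $\chi(\xbf-\ybf)$ around the singularity of $\mathfrak{f}_{n_1,\alpha_1}$ and, on the near region, changing variables $\wbf=\xbf-\ybf$ so that the $\xbf$-derivatives fall on smooth factors (in particular on $\mathfrak{f}_{n_2,\alpha_2}(\xbf-\wbf-(\xbf')^*)$, whose argument stays away from the origin since $x_2-w_2>c/2$), while the far region is killed by $1-\chi$ and \eqref{eq:ffrak}; since the composed kernel occupies the first slot, this K1-type estimate plus \eqref{eq:qintegral} and dominated convergence finishes the proof, with no structural decomposition needed. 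Your route -- upgrading Lemma~\ref{lem:Fourier_decay2} to the mixed product and then quoting Lemma~\ref{lem:K1} -- is viable (the convolution identity you state is correct, the $-E$ correction works as you say, and the Taylor remainders can be tamed by taking the order large, exactly as for $G$ in Lemma~\ref{lem:Fourier_decay2}), but it needs one fix you glossed over: the phase $\phi_s(\xbf-\ybf,\ybf-\ybf_{j_0})$ and the slowly increasing factor $h_{j_0}(\ybf-\ybf_{j_0})$ involve $\ybf-\ybf_{j_0}$, whereas the second Fourier factor has argument $\ybf-\ybf_{j_0}^*$, so the monomials produced by your expansions cannot be pushed into the Fourier transforms as in Lemma~\ref{lem:Fourier_decay2}; you must first use $\phi_s(\xbf-\ybf,\ybf-\ybf_{j_0})=\phi_s(\xbf-\ybf,\xbf-\ybf_{j_0})$ and expand $h_{j_0}$ and $a_{j_0}$ around $\xbf-\ybf_{j_0}$ and $\xbf$, respectively, so that the polynomial factors split into functions of $\xbf-\ybf$ (absorbed into $\mathfrak{f}_{n_1,\alpha_1}$) and slowly increasing functions of $\xbf-\ybf_{j_0}$ (placed in the $h$-slot of \eqref{eq:kernels2}). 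With that adjustment your decomposition closes; it costs more bookkeeping than the paper's direct estimate, but in exchange yields the stronger structural statement that a product \eqref{eq:kernels1}$\,\times\,$\eqref{eq:kernels2} is again of the form \eqref{eq:kernels2} modulo controlled remainders.
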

\begin{proof}
    By Lemma~\ref{lem:K1} we may assume that $K_1$ is of the form \eqref{eq:kernels1}. 
    
    Hence, suppose that $j_0\geq 2$. If $j_0>2$ then by applying Lemma~\ref{lem:K2} (with $N$ replaced with $j_0-1$) we may write
    \begin{align*}
        &\int_E\dots\int_E K_1(\xbf,\ybf_1)\dots K_{j_0-1}(\ybf_{j_0-2},\ybf_{j_0-1}) \dd \ybf_1\dots \dd\ybf_{j_0-2} \\
        &\qquad=H(\xbf,\ybf_{j_0-1})+\e^{\im b\phi(\xbf,\ybf_{j_0-1})}\sum_{n,\eta} a_{n,\eta}(\xbf) \mathfrak{f}_{n,\eta} (\xbf-\ybf_{j_0-1}),
    \end{align*}
    where $H$, $n$, $\eta$, and $a_{n,\eta}$ are as in Lemma~\ref{lem:K2}. By the properties of $H$ it is clear that we only need to consider the terms in the sum above. Hence it suffices to consider the case $j_0=2$, i.e.\ when $K_2$ is of the form \eqref{eq:kernels2}.

    By defining 
    \begin{equation*}
        \tilde{K}(\xbf,\ybf)=\int_{E} \e^{\im b\phi_s(\xbf-\ybf,\ybf-\xbf')}a_1(\xbf)\mathfrak{f}_{n_1,\alpha_1}(\xbf-\ybf)a_2(\ybf)h_2(\xbf-\ybf)\mathfrak{f}_{n_2,\alpha_2}(\ybf-(\xbf')^*) \dd \ybf
    \end{equation*}
    we may write 
    \begin{equation*}
        K(\xbf,\ybf)=\int_E\!\dots\!\int_E\!\! \e^{\im b\phi(\xbf,\ybf_2)}\tilde{K}(\xbf,\ybf_2)K_3(\ybf_2,\ybf_3)\dots K_N(\ybf_{N-1},\xbf')\dd \ybf_2\!\dots\! \dd  \ybf_{N-1}.
    \end{equation*}
    Hence, by \eqref{eq:qintegral} and Lebesgue's dominated convergence theorem, the proof is complete if we can show that 
    for all $c>0$ and $\alpha\in \N_0$ exist $C,\delta>0$ such that 
    \begin{equation}\label{eq:F1}
        \abs{\partial_1^\alpha \tilde{K}(\xbf,\xbf')}\leq C\e^{-\delta\abs{\xbf-(\xbf')^*}}
    \end{equation}
    for all $\xbf\in E$ with $x_2>c$ and $\xbf'\in E$.

    Then, since  the derivatives of $\mathfrak{f}_{n_1,\alpha_1}$ are only well-behaved away from the origin (cf. \eqref{eq:ffrak}), we split the domain of integration in two parts, one near $\xbf$ and one away from $\xbf$. After that, we treat the the first part by shifting the $\xbf$ dependency from $K_1$ to $K_2$ with a change of variable in order to exploit the presence of $\ybf^*$ in $K_2(\xbf,\ybf)$. 
    
    Let $c>0$ be arbitrary and let $\chi\in \Ccinf{\R^2}$, $\chi(\xbf)\in [0,1]$, be a function which equals $1$ on $B_{c/4}(0)$ and is supported in $B_{c/2}(0)$. Using $\chi$ to split the integral in two parts and applying a change of variable gives
    \begin{align*}
        \tilde{K}(\xbf,\xbf')&=a_1(\xbf)\int_{\R^2} \bigg[\e^{\im b\phi_s(\wbf,\xbf-\xbf'-\wbf)}\chi(\wbf)\mathfrak{f}_{n_1,\alpha_1}(\wbf)a_2(\xbf-\wbf)\\
        &\phantom{=a_1(\xbf)\int_{\R^2} \bigg[} \cdot h_2(\xbf-\wbf-\xbf')\mathfrak{f}_{n_2,\alpha_2}(\xbf-\wbf-(\xbf'^*))\bigg]\dd \wbf\\
        &\quad + a_1(\xbf)\int_{E} \bigg[\e^{\im b\phi_s(\xbf-\ybf,\ybf-\xbf')}(1-\chi(\xbf-\ybf))\mathfrak{f}_{n_1,\alpha_1}(\xbf-\ybf)a_2(\ybf)\\
        &\phantom{\quad + a_1(\xbf)\int_{E} \bigg[} \cdot h_2(\ybf-\xbf')\mathfrak{f}_{n_2,\alpha_2}(\ybf-(\xbf')^*) \bigg]\dd \ybf\\
        &\eqqcolon G_1(\xbf,\xbf')+G_2(\xbf,\xbf').
    \end{align*}
    We show first that \eqref{eq:F1} holds when $\tilde{K}$ is replaced by $G_1$. For any $\xbf\in E$ with $x_2>c$ and any $\wbf\in \Supp \chi$ it is clear that $x_2-w_2>c/2$ and hence $
        \abs{\xbf-\wbf-\xbf'}\leq\abs{\xbf-\wbf-(\xbf')^*} $
    for all $\xbf'\in E$. Thus, we get
    \begin{align*}
        \abs{\partial^\alpha_1 \e^{\im b\phi_s(\wbf,\xbf-\wbf-\xbf')}a_2(\xbf-\wbf)h_2(\xbf-\wbf-\xbf')}&\leq C\jnorm{\wbf}^k\jnorm{\xbf-\wbf-(\xbf')^*}^k 
    \end{align*}
    for some constants $C,k>0$. Hence by the integrability of $\mathfrak{f}_{n_1,\alpha_1}$, the exponential decay in $\abs{\xbf-\wbf-(\xbf')^*}$ of all derivatives of $\mathfrak{f}_{n_2,\alpha_2}(\xbf-\wbf-(\xbf')^*)$ (cf.\ \eqref{eq:ffrak}), and Lebesgue's dominated convergence theorem, it follows that $G_1$ satisfies \eqref{eq:F1}.

    Similar reasoning gives that also $G_2$ satisfies \eqref{eq:F1}, thus the proof is complete.
\end{proof}

\vfill

{\footnotesize
	
	\begin{tabular}{rl}
		(M. Moscolari) & \textsc{Fachbereich Mathematik, Eberhard-Karls-Universit\"at}\\
		& Auf der Morgenstelle 10, 72076 T\"ubingen, Germany \\
		&  \textsl{E-mail address}: \href{mailto:massimo.moscolari@mnf.uni-tuebingen.de}{\texttt{massimo.moscolari@mnf.uni-tuebingen.de}} \\
		\\
		(B.B. St{\o}ttrup) & \textsc{Department of Mathematical Sciences, Aalborg University} \\
		&  Skjernvej 4A, 9220 Aalborg, Denmark \\
		&  \textsl{E-mail address}: \href{mailto:benjamin@math.aau.dk}{\texttt{benjamin@math.aau.dk}} \\
	\end{tabular}
	
}

\end{document}